\documentclass[aps,twocolumn,prx,preprintnumbers, nofootinbib,10pt]{revtex4-2}
\usepackage{xpatch}
\makeatletter
\patchcmd{\@ssect@ltx}
    {\addcontentsline{toc}{#1}{\protect\numberline{}#8}}
    {}
    {}
    {}
\makeatother
\pdfoutput=1
\usepackage{color}
\usepackage{enumitem}
\usepackage{mathtools}
\usepackage[dvipsnames]{xcolor}
\usepackage{adjustbox}

\definecolor{amaranth}{rgb}{0.9, 0.17, 0.31}
\definecolor{forestForestGreen(web)}{rgb}{0.13, 0.55, 0.13}
\definecolor{blue(munsell)}{HTML}{005567}
\definecolor{bblue}{rgb}{0.0, 0.58, 0.71}

\usepackage{pgfplots}
\pgfplotsset{compat=1.18}
\usepackage{amssymb}
\usepackage{amsfonts}
\usepackage{graphicx}
\usepackage{epstopdf}
\usepackage{dcolumn}
\usepackage{amsmath}
\usepackage{latexsym,bm}
\usepackage{amsthm}
\usepackage{slashed}
\usepackage{float}
\usepackage{color}
\usepackage{url}
\usepackage{rotating}
\usepackage[normalem]{ulem}
\usepackage{faktor}
\usepackage{tikz-cd}
\usepackage{tensor}
\usepackage{array}

\usepackage{tabularx}
\usepackage{makecell}
\usepackage[normalem]{ulem}
\usepackage{changepage}
\numberwithin{equation}{section}

\newcommand{\hide}[1]{}

\usepackage{tikz}
\usepackage{diagbox}
\usetikzlibrary{positioning}
\usetikzlibrary{calc}
\usetikzlibrary{decorations.pathreplacing,calligraphy}
\usetikzlibrary{arrows.meta,decorations.markings}
\usepackage{xstring}
\usetikzlibrary{decorations.pathmorphing} 
\usetikzlibrary{decorations.markings} 
\usetikzlibrary{arrows} 
\usetikzlibrary{shapes} 
\usetikzlibrary{matrix} 
\usetikzlibrary{positioning} 
\usepackage[english]{babel} 
\usepackage[autostyle]{csquotes}
\usepackage{pifont}

\usetikzlibrary{shapes.multipart}

\tikzset{->-/.style={decoration={
  markings,
  mark=at position .5 with {\arrow{>}}},postaction={decorate}}}

    \definecolor{myblue}{RGB}{29, 66, 166}
    \definecolor{myorange}{RGB}{230,120,20}

    \tikzset{
        fusion/.style={draw=myblue,line width=1.35pt},
        onstrand/.style={
            postaction={
                decorate,
                decoration={
                    markings,
                    mark=at position #1 with {\arrow{Latex[length=5pt,width=5pt]}}
                }
            }
        },
        orangepath/.style={
            draw=myorange,
            line width=0.9pt,
            dash pattern=on 1.5pt off 1.8pt,
            postaction={
                decorate,
                decoration={
                    markings,
                    mark=at position #1 with {\arrow{Latex[length=4pt,width=4pt]}}
                }
            }
        }
    }

\newcommand{\bea}{\begin{eqnarray}}
\newcommand{\eea}{\end{eqnarray}}
\newcommand{\be}{\begin{equation}}
\newcommand{\ee}{\end{equation}}
\newcommand{\ba}{\begin{aligned}}
\newcommand{\ea}{\end{aligned}}
\newcommand{\bit}{\begin{itemize}}
\newcommand{\eit}{\end{itemize}}
\newcommand{\ben}{\begin{enumerate}}
\newcommand{\een}{\end{enumerate}}
\newcommand{\nn}{\nonumber}

\newcommand{\id}{\text{id}}

\newcommand{\Bsym}{\mathfrak{B}^{\text{sym}}}

\newcommand{\wh}{\widehat}
\newcommand{\ot}{\otimes}

\newcommand{\Z}{{\mathbb Z}}
\newcommand{\R}{{\mathbb R}}
\newcommand{\C}{{\mathbb C}}

\newcommand{\cB}{\mathcal{B}}
\newcommand{\cc}{\mathcal{C}}
\newcommand{\cC}{\mathcal{C}}
\newcommand{\cD}{\mathcal{D}}
\newcommand{\cE}{\mathcal{E}}
\newcommand{\cF}{\mathcal{F}}

\newcommand{\cL}{\mathcal{L}}
\newcommand{\cM}{\mathcal{M}}
\newcommand{\cN}{\mathcal{N}}
\newcommand{\cO}{\mathcal{O}}

\newcommand{\cS}{\mathcal{C}}
\newcommand{\cT}{\mathcal{T}}

\newcommand{\cX}{\mathcal{X}}
\newcommand{\cY}{\mathcal{Y}}
\newcommand{\cZ}{\mathcal{Z}}

\newcommand{\A}{\mathsf{A}}

\renewcommand{\R}{\mathbb{R}}
\newcommand{\bH}{\mathbb{H}}
\newcommand{\D}{\mathsf{D}}

\usepackage[colorlinks=true,
linkcolor=blue(munsell),
citecolor=blue(munsell),
filecolor=magenta,
urlcolor=blue(munsell)
]{hyperref}
\usepackage[nameinlink,capitalize,noabbrev]{cleveref}

\newcommand{\fB}{\mathfrak{B}}
\newcommand{\fZ}{\mathfrak{Z}}

\newcommand{\Tr}{\text{Tr}}

\newcommand{\Hom}{\text{Hom}}
\newcommand{\End}{\text{End}}
\renewcommand{\Vec}{\mathsf{Vec}}
\newcommand{\Rep}{\mathsf{Rep}}
\newcommand{\Mod}{\mathsf{Mod}}

\newcommand{\Bimod}{\mathsf{Bimod}}
\renewcommand{\dim}{\text{dim}}

\newcommand{\lid}{\mathbf{1}}

\newcommand{\Ising}{\mathsf{Ising}}
\newcommand{\TY}{\mathsf{TY}}

\newcommand{\sym}{\text{sym}}

\usepackage{euscript}

\makeatother

\allowdisplaybreaks

\usepackage{physics}
\renewcommand{\ol}{\overline}

\usepackage{float}

\usepackage{makecell}

\makeatletter
\newcommand\xlabel[2][]{\phantomsection\def\@currentlabelname{#1}\label{#2}}
\makeatother

\newtheorem{theorem}{Theorem}[section]
\newtheorem*{theorem*}{Theorem}

\theoremstyle{definition}
\newtheorem{definition}[theorem]{Definition}
\newtheorem{remark}{Remark}[section]
\newtheorem{example}[theorem]{Example}

\makeatletter
\def\l@subsubsection#1#2{}
\makeatother

\newcommand{\cTY}{\mathsf{TY}_{\overline{\C}}}

\begin{document}

\title{Categorical Time-Reversal Symmetries}

\author{Rui Wen}
\author{Sakura Sch\"afer-Nameki}

 \affiliation{Mathematical Institute, University of Oxford, Woodstock Road, Oxford, OX2 6GG, United Kingdom}

\begin{abstract} 
\noindent 
The classification of phases using categorical symmetries has greatly expanded the landscape of gapped and gapless phases. So far, however, these developments have largely been restricted to phases with unitary (higher-)categorical symmetries over $\mathbb{C}$. In this work, we incorporate anti-unitary symmetries, such as time-reversal symmetry $\mathbb{Z}_2^T$, and show that the relevant physical structures are naturally described by fusion categories over $\mathbb{R}$. A class of real fusion categories, which we call \emph{Galois-real fusion categories}, provides the correct categorical model for anti-unitary symmetries. A simple example is the time-reversal symmetry $\mathbb{Z}_2^T$ itself. We discuss the basic structures of real fusion categories and present a range of examples, including the group-theoretical categories $(G^T)^{\omega}$ and $\mathsf{Rep}(G^T)$ associated to anti-linear groups $G^T$, as well as non-invertible time-reversal symmetries described by a real analogue of Tambara--Yamagami fusion categories. We then classify gapped phases enriched with anti-linear symmetries in terms of module categories over Galois-real fusion categories. 
We furthermore apply the categorical formulation to prove dualities (i.e. gauge or Morita equivalences) of anti-linear symmetries generated by gauging subgroups. Complementing this, we also develop a Symmetry Topological Field Theory (SymTFT) framework, in which Galois-real fusion categories arise as boundary conditions of a $\mathbb{Z}_2^T$-enriched SymTFT. Morita equivalent anti-linear symmetries are shown to arise as different boundaries of the same $\mathbb{Z}_2^T$-enriched SymTFT.
\end{abstract}

\maketitle

\tableofcontents


\section{Introduction}

The modern understanding of symmetries in quantum many-body systems and quantum field theory has expanded well beyond the traditional setting of groups.  This was initiated by the insight of~\cite{Gaiotto:2014kfa} that topological defects {\bf are} symmetries. 
One of the key physically relevant implications of these extensions is that symmetries need not have inverses and form groups, but can form more general, non-invertible structures. 
These are subsumed under the name of categorical symmetries and were recently reviewed in \cite{Schafer-Nameki:2023jdn, Shao:2023gho, Luo:2023ive, Bhardwaj:2023kri}. 

The main focus in the literature has been, more precisely, unitary or at least complex ($\C$)-linear fusion categories. 
As is well-known, going back to the work of Wigner,  symmetry groups in quantum mechanics can however also be anti-unitary. The main example of an anti-unitary symmetry is time-reversal $\Z_2^T$. Time-reversal symmetry plays a crucial role in many phases of matter, starting with the Haldane chain in (1+1)d, which is a $\Z_2^T$-SPT phase \cite{Haldane:1983ru}. Other anti-unitary symmetries such as $\Z_4^T, U(1)\rtimes \Z_2^T, U(1)\times \Z_2^T$ are also common in topological insulators/superconductors~\cite{Chen:2011pg,Vishwanath:2012tq,wang2013boson}.

The main purpose of this paper is to  explore the categorical generalization, including identifying a mathematical home for such anti-unitary categorical symmetries and studying their implications in the context of the categorical classification of phases and phase transitions \cite{Thorngren:2019fuc, Bhardwaj:2023fca, Bhardwaj:2023idu, Bhardwaj:2024qrf, Bhardwaj:2023bbf}. We show that the natural setting for categorical anti-unitary symmetries is {\bf real fusion categories}. The mathematical literature on this is surprisingly scarce \cite{sanford2025fusion, plavnik2024TYreal}, and we will be providing some of the background on these categories in the following.

\smallskip
\noindent
{\bf Two types of Real Fusion Categories.} 
Unlike (unitary) fusion categories that are defined over $\C$, real fusion categories are defined over the field of real numbers $\R$.
There are two types of real fusion categories, which are distinguished by the type of endomorphisms of the identity object: a fusion category over $\mathbb{R}$ is called {\bf $\R$-real} if $\mathrm{End}(1)\cong \mathbb{R}$, and {\bf Galois-real} if $\mathrm{End}(1)\cong \mathbb{C}$, in which case it has a $\Z_2^T$-grading 
\be
\cC = \cC_1 \oplus \cC_T 
\ee
with $\cC_1$ the linear sector and $\cC_T$ the anti-linear. 
Crucially, we will argue that for a quantum system, to have a complex Hilbert space as the state space, it is the Galois-real fusion categories that have an interpretation as symmetry categories -- $\R$-real categories essentially have $\R$ acting on the local state space, which we disregard as a consistent setup for quantum theories. The two $\Z_2^T$-graded components of a Galois-real fusion category naturally correspond to the linear and anti-linear sectors of a generalized anti-linear symmetry. 
In particular, $\cC_1$ alone is an ordinary fusion category over $\C$.

However, this does not mean that we should disregard the $\R$-real categories from considerations. These can still arise as the categories of charges, and perhaps surprisingly, can be gauge (Morita) equivalent to Galois-real fusion categories. 
The simplest example of this is the category $\Rep (\Z_2^T) = \Vec_\R$, which is the category of real vector spaces, but is also Morita dual to $\Vec_{\Z_2^T}$, a Galois-real fusion category.  An $\R$-real fusion category has an interpretation as the category of defects in a given gapped phase, but not as an abstract symmetry category. 
E.g. $\Rep(G^T)$ is $\R$-real, and thus inadmissible as a symmetry category. Nevertheless $\Rep(G^T)$ appears naturally as the category of charges in a $G^T$-SPT phase.

\smallskip
\noindent
{\bf The Strange World of Real Fusion Categories.}
As the theory of real fusion categories is relatively unknown, we will spend one section in this paper, \Cref{sec_basic_rfc}, explaining the basics and giving concrete examples for both types of real fusion categories. 
One class of examples are group theoretical, based on the $G^T$ anti-linear group symmetry. Here there is a group homomorphism
\be\label{sGrade}
s: \quad G^T \to \Z_2^T = \{1, T\} \,,
\ee
such that the pre-image of $1$ is the linear part and the pre-image of $T$ the anti-linear (time-reversal) part of the symmetry. We construct the real fusion category $\Vec_{G^T}^\omega$, which is Galois-real, and describe a possibly anomalous anti-unitary $G^T$-symmetry.

Morita (i.e. gauge) equivalence for real fusion categories offers some surprises. Naively ``gauging $G^T$ in $\Vec_{G^T}$" results in $\Rep(G^T)$, however this is an $\R$-real fusion category and thus is inadmissible as a symmetry category -- again this is not surprising given that this would involve gauging time-reversal symmetry, which we believe is not possible within the framework of standard quantum theory\footnote{Whether or not time-reversal symmetry should or should not be gauged in systems of quantum gravity is an even wider point of discussion \cite{Harlow:2023hjb, Susskind:2026moe}. Here we simply state that time-reversal should not be gauged in quantum systems, irrespective of coupling to gravity.}. 

Perhaps even more surprisingly, in the world of real-fusion categories the following two symmetries are gauge related: let $A$ be an abelian group, then  
\be\label{RealSurprise}
\Vec_{A\times \Z_2^T} \text{ Morita equivalent to } \Vec_{A\rtimes\Z_2^T}\,,
\ee
where in the latter category $\Z_2^T$ acts on $A$ as inversion and  generically the latter is a non-abelian group. 
Note that there are various versions in general of a group $G^T$ depending on (\ref{sGrade}), e.g. here we specifically mean the $\Z_2^T$ acts as the reflections in the dihedral group. 

Another surprising property is that the $\R$-real $\Rep(A^T)$ categories for abelian $A$ can have non-invertible fusion. This is in stark contrast with the unitary/linear category case where $\Rep (G)$ has non-invertible fusion if and only if there are higher than 1d irreps, which requires $G$ to be non-abelian. 
An  example that we will discuss is $\Rep (\Z_4^T)$, which has a simple object $Q$, that realizes a quaternionic representation, and has fusion 
\be
Q \otimes Q =  4 \times \bm{1} \,.
\ee

\smallskip
\noindent
{\bf Non-Invertible Time-Reversal.}
Our categorical framework allows for a unified description of both group-like invertible anti-unitary symmetry and more exotic non-invertible time-reversal symmetry. A non-invertible time-reversal symmetry can arise, for instance, when an invertible anti-unitary symmetry is combined with an internal non-invertible symmetry. One family of non-invertible time-reversal symmetries we will consider in this paper are a Galois-real version of Tambara-Yamagami categories, where the Kramers-Wannier duality defect is anti-unitary and thus time-reversal symmetry becomes non-invertible. These are denoted by ${\TY_{\overline{\C}}(A)}$. 
Unlike the case of unitary TY categories, here there is a choice of \emph{anti-symmetric} bicharacter instead of a symmetric one, and there is  no Frobenius-Schur indicator. A full classification was given in \cite{plavnik2024TYreal}. Here we will reconsider this from a physical perspective: the SymTFT provides a surprisingly simple and intuitive classification for this family of non-invertible time-reversal symmetries. 
There are also $\R$-real $\TY_\R(A)$ categories, which we do not consider as symmetry categories for the reason explained before.

\begin{figure}
$$
\begin{tikzpicture}[
    every node/.style={align=center},
    x=2.4cm,y=2.0cm
]

\node (TL) at (-1, 1) {$\mathbb{Z}_4^T$-trivial};
\node (TR) at ( 1, 1) {$\mathbb{Z}_4^T$-SPT};
\node (BL) at (-1,-1) {$\mathbb{Z}_4^T$-SSB};
\node (BR) at ( 1,-1) {$\mathbb{Z}_4^T \to \mathbb{Z}_2$\\ partial SSB};

\path
(TL) edge[loop above] node {$\operatorname{Rep}(\mathbb{Z}_4^T)$} ()
(TR) edge[loop above] node {$\operatorname{Rep}(\mathbb{Z}_4^T)$} ()
(BL) edge[loop below] node {\textcolor{blue}{$\Vec_{\mathbb{Z}_4^T}$}} ()
(BR) edge[loop below] node {\textcolor{blue}{$\Vec_{\mathbb{Z}_2 \times \mathbb{Z}_2^T}^{\alpha}$}} ()

(TL) edge[<->] node[above] {$\Vec$} (TR)
(TL) edge[<->] node[left] {$\Vec$} (BL)
(TR) edge[<->] node[right] {$\operatorname{Rep}(\mathbb{Z}_2)$} (BR)
(BL) edge[<->] node[below] {$\Vec_{\mathbb{Z}_2^T}$} (BR)

(TL) edge[<->] node[pos=.33, above left, fill=white, inner sep=1.5pt]
    {$\operatorname{Rep}(\mathbb{Z}_2)$} (BR)
(BL) edge[<->] node[pos=.67, below right, fill=white, inner sep=1.5pt]
    {$\Vec$} (TR);

\end{tikzpicture}
$$
\caption{Full structure of $(1+1)$d gapped phases with $\Z_4^T$ symmetry. Each vertex is a gapped phase. Each arrow is a category of symmetric domain walls. The blue categories are Morita equivalent symmetry (i.e. Galois-real) categories. The category $\Rep(\Z_4^T)$ in contrast is not an admissible symmetry category even though it is Morita equivalent to $\Vec_{\Z_4^T}$, as it is an $\R$-real fusion category. \label{fig:Z4TPhases}}
\end{figure}

    \begin{table*}[t]
    \centering
\renewcommand{\arraystretch}{1.35}
\setlength{\tabcolsep}{8pt}
\begin{adjustbox}{max width=\textwidth}
\begin{tabular}{c|c c c c c c}
 & $(S_3^T,1)$
 & $(S_3^T,\psi)$
 & $(\mathbb Z_2^T,1)$
 & $(\mathbb Z_2^T,\psi)$
 & $(\mathbb Z_3,1)$
 & $(1,1)$ \\
\hline
$(S_3^T,1)$
& $\operatorname{Rep}(S_3^T)$
& $\Vec_{\bH}^{\oplus 3}$
& $\Vec_{\mathbb R}$
& $\Vec_{\bH}$
& $\operatorname{Rep}(\mathbb Z_3)$
& $\Vec$ \\

$(S_3^T,\psi)$
& 
& $\operatorname{Rep}(S_3^T)$
& $\Vec_{\bH}$
& $\Vec_{\mathbb R}$
& $\operatorname{Rep}(\mathbb Z_3)$
& $\Vec$ \\

$(\mathbb Z_2^T,1)$
&
&
& $\operatorname{Rep}_{\mathbb R}(\mathbb Z_3)$
& $\Vec_{\bH}\oplus \Vec$
& $\Vec$
& $\Vec_{\mathbb Z_3}$ \\

$(\mathbb Z_2^T,\psi)$
&
&
&
& $\operatorname{Rep}_{\mathbb R}(\mathbb Z_3)$
& $\Vec$
& $\Vec_{\mathbb Z_3}$ \\

$(\mathbb Z_3,1)$
&
&
&
&
& \textcolor{blue}{$\Vec_{\mathbb Z_3 \times \mathbb Z_2^T}$}
& $\Vec_{\mathbb Z_2^T}$ \\

$(1,1)$
&
&
&
&
&
& \textcolor{blue}{$\Vec_{S_3^T}$}
\end{tabular}
\end{adjustbox}
\caption{$(1+1)$d gapped $S_3^T$-phases and categories of defects among them. The categories in blue are Morita equivalent symmetry categories, i.e. those that are Galois-real.}
\label{tab:S3Phases}
\end{table*}

\smallskip
\noindent
{\bf Gapped Phases.} 
We propose a module category approach for the classification of gapped phases with generalized anti-linear symmetries. This is akin to the unitary case studied in \cite{Thorngren:2019fuc}. Gapped phases with symmetry are in one-to-one correspondence with module categories over the now Galois-real symmetry category, which characterize how the symmetry category acts on the underlying gapped phase. E.g. we can reproduce with this approach the known twisted group-cohomology classification for invertible $G^T$-symmetries. 
The module category approach also provides an explicit description of the categories of defects within each gapped phase and between different phases in terms of module functors. 
We compute the categorical structure of all gapped phases with group-like symmetry $\Z_4^T$ -- see \Cref{fig:Z4TPhases} -- and $S_3^T=\Z_3\rtimes \Z_2^T$ -- see \Cref{tab:S3Phases} -- including the categories of defects within each phase and the categories of domain walls between phases. 
As a consequence of the analysis, we establish the known Morita equivalence~\footnote{It is known that $\mathbb{Z}_2 \times \mathbb{Z}_2^T$ with mixed anomaly is related to $\mathbb{Z}_4^T$ by gauging~\cite{Pace:2025hpb}. This relation is usually established through field-theoretic arguments or explicit lattice constructions. To the best of our knowledge, however, a proof in terms of Morita equivalence has not appeared prior to the present work.}: 
\be
\Vec_{\Z_4^T} \text{ Morita equivalent to } \Vec_{\Z_2\times \Z_2^T}^\alpha\,,
\ee
where $\alpha$ is a nontrivial mixed anomaly, and the lesser known and more surprising equivalences (\ref{RealSurprise}), which e.g. for $A= \Z_N$, become the statement
\be
\Vec_{\Z_N\times \Z_2^T} \text{ Morita equivalent to } \Vec_{D_N^T} \,.
\ee


The SPT phases with time-reversal symmetry have a long history, going back to the Haldane chain. One of the challenges  studying these phases is the absence of a local or string order parameter, for a discussion, see \cite{Pollmann:2010yk, Perez-Garcia:2008hhq, Chen:2011pg, Chen:2011hnt, Pollmann:2012xdn, Shiozaki:2016cim}. 
The $\Z_2^T$ phase can be detected by computing a partition function on real projective space $\R P^2$, and on a space with boundary the edge modes can be identified with a Kramers doublet. However, there seems to be no direct bulk order parameter. 
Other (1+1)d gapped phases with time-reversal symmetry and the interplay of this with gapped phase classification has been studied in \cite{Schuch:2011niz, fidkowski2011topological, Garre-Rubio:2022uum}.

\smallskip
\noindent
{\bf Dual Symmetries.} 
One of the advantages of the categorical formulation of symmetries and phases is that it allows for direct and rigorous computation of dualities (generalized gauging) between symmetries. We extend the theory of generalized gauging~\cite{Tachikawa:2017gyf, Bhardwaj:2017xup} to include anti-linear symmetries, and use the Morita theory for real fusion categories to define dual anti-linear symmetries. As an application of our formalism, we establish the following family of dualities/gauge-relations generated by gauging central subgroups.
\begin{theorem*}[\Cref{thm_gauge_subgroup}]
    Let $(G^T, s: G^T\to \Z_2^T)$ be an anti-unitary group-like symmetry, and $N< \ker(s)$ be a unitary central subgroup. Denote by $K^T:=G^T/N$ the quotient. Then there is Morita equivalence 
    \be
\Vec_{G^T}\simeq_{\text{Morita}} \Vec_{\widehat{N}\rtimes K^T}^\omega\,,
    \ee    
where $G^T/N$ acts on $\widehat{N}:=\Hom(N, U(1))$ as inversion via $s$, and 
    \be
     \omega((k_1,\gamma_1),(k_2,\gamma_2),(k_3,\gamma_3))=\gamma_1(e_2(k_2,k_3)) \,,
    \ee
    where $e_2\in H^2(K^T, N)$ is the extension class of the extension $N\to G^T\to K^T$. 
\end{theorem*}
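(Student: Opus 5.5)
We model the proof on the complex-linear statement that gauging a normal abelian subgroup $N\triangleleft G$ turns $\Vec_G$ into $\Vec^{\omega}_{\widehat N\rtimes G/N}$. In the complex case this is the group-theoretical fact that the dual category $(\Vec_G)^*_{\cM}$, with respect to the module category $\cM=\Vec_{G/N}$ (equivalently, modules over the algebra $A=\C[N]$ in $\Vec_G$), is pointed with the group of invertibles an extension of $G/N$ by $\widehat N$. We run the same argument over $\R$ inside the Galois-real category $\Vec_{G^T}$. Since $N\subset\ker(s)$, the algebra $A=\bigoplus_{n\in N} n$ lies entirely in the linear sector. It is a separable (Frobenius) algebra in $\Vec_{G^T}$, so the category of $A$-bimodules ${}_A(\Vec_{G^T})_A$ is a real fusion category Morita equivalent to $\Vec_{G^T}$. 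We rely on the Morita theory for real fusion categories developed earlier in the paper (dual categories over indecomposable module categories, ${}_A\cC_A$ Morita equivalent to $\cC$). The theorem then reduces to computing ${}_A(\Vec_{G^T})_A$ as a real fusion category.

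\textbf{Step 1: the simple bimodules.} A bimodule over $A$ supported on a coset $gN$ is $\bigoplus_{n} gn$ with left and right $N$-actions. Because $N$ is central, the left and right actions differ by a scalar-valued function on $N$, namely a character $\gamma\in\widehat N$. There is one caveat: if $s(g)=T$, the scalars pass through an anti-linear object, so the compatibility condition becomes $\gamma\mapsto\bar\gamma=\gamma^{-1}$. Simple bimodules are therefore labelled by pairs $(k,\gamma)\in K^T\times\widehat N$. Each has endomorphism algebra $\C$, and those over anti-linear cosets are anti-linear. So the dual category is Galois-real and pointed.

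\textbf{Step 2: fusion and the semidirect product.} We compute the tensor product $\otimes_A$ of two such bimodules. Moving the character $\gamma_2$ past a representative $g_1$ with $s(g_1)=T$ conjugates it, since anti-linear morphisms act by complex conjugation on $\End(1)=\C$. This gives the group law $(k_1,\gamma_1)(k_2,\gamma_2)=(k_1k_2,\gamma_1\gamma_2^{s(k_1)})$, which is exactly $\widehat N\rtimes K^T$ with inversion via $s$. Here we must check that the naive extension of $K^T$ by $\widehat N$ splits. It does, because we can choose the bimodule on each coset with the trivial twist, once we fix a set-theoretic section $\sigma:K^T\to G^T$.

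\textbf{Step 3: the associator, which is the main obstacle.} With the section $\sigma$, write $\sigma(k_1)\sigma(k_2)=e_2(k_1,k_2)\sigma(k_1k_2)$. The associator of ${}_A\cC_A$ on $(k_1,\gamma_1)\otimes(k_2,\gamma_2)\otimes(k_3,\gamma_3)$ comes from re-bracketing the representatives. The discrepancy is an element of $N$ equal to $e_2(k_2,k_3)$, and it gets evaluated by the character carried by the leftmost bimodule, giving $\gamma_1(e_2(k_2,k_3))$. We would fix explicit bimodule bases, compute the two composites, and read off the scalar. We would then verify that the result is a 3-cocycle in the twisted (Galois) sense, where $K^T$ acts on $U(1)$ by conjugation via $s$. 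The delicate point is sign and conjugation bookkeeping when $k_1$ is anti-linear: there one must confirm that the twisted coboundary corrections cancel, and that changing the section $\sigma$ shifts $\omega$ by a coboundary. As a cross-check, we would compare with the known result for $N=\Z_2$, $G^T=\Z_4^T$: the formula should reproduce $\Vec^{\alpha}_{\Z_2\times\Z_2^T}$ with the mixed anomaly $\alpha$, matching \Cref{fig:Z4TPhases}.
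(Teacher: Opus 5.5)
This is essentially the paper's own argument in Appendix~\ref{app_gauge_subgroup}. The paper condenses $\C[N]$, trivializes the left $N$-action on each coset bimodule $\C[kN]$ so that simples are labeled by $(k,\gamma)\in K^T\times\widehat N$ via the right action, obtains the fusion $(k_1k_2,\gamma_1\gamma_2^{s(k_1)})$ by conjugating $\gamma_2$ past the anti-linear $[k_1,m_1]$, and reads off $\gamma_1(e_2(k_2,k_3))$ by comparing the two bracketings, where the $\gamma_2(n_3)^{s(k_1)}$ factors cancel. The conjugation caveat in your Step 1 depends on where scalars are placed and does not arise in the paper's convention, but it does not change the labeling of simples or the rest of your argument.
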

Compared with the result of gauging central subgroup for a unitary group, the main new feature here is the semi-direct product $\widehat{N}\rtimes K^T$.

\smallskip
\noindent
{\bf $\Z_2^T$-Enriched SymTFT.}
The SymTFT \cite{Ji:2019jhk, Gaiotto:2020iye, Apruzzi:2021nmk, Freed:2022qnc} 
has been a key tool in carrying out the Categorical Landau Paradigm Program, for gapped phases in 1+1d \cite{Bhardwaj:2023ayw, Bhardwaj:2023idu, Bhardwaj:2024kvy, Chatterjee:2024ych, Bhardwaj:2024wlr, Warman:2024lir,  Bottini:2025hri,  Schafer-Nameki:2025fiy} and 2+1d \cite{Bhardwaj:2024qiv, Xu:2024pwd, Bullimore:2024khm,  Bhardwaj:2025piv,  Inamura:2025cum,Decoppet:2024htz}, including second order, symmetric, phase transitions \cite{Chatterjee:2022tyg, Bhardwaj:2023bbf, Wen:2023otf, Huang:2023pyk, Huang:2024ror, Bhardwaj:2024qrf,  Wen:2024qsg,  Bhardwaj:2025jtf, Wen:2025thg}, and mixed phases \cite{Schafer-Nameki:2025fiy, Qi:2025tal, Luo:2025phx}, and most recently continuous internal symmetries \cite{Antinucci:2024zjp, Brennan:2024fgj, Bonetti:2024cjk, Apruzzi:2024htg, Apruzzi:2025hvs, Stockall:2025ngz}. 
The fundamental idea is to start with a theory with a global symmetry $\cC$, coupling this to background fields, and gauge in one dimension higher. This protocol applies to both internal symmetries as well as \emph{continuous} spacetime symmetries \cite{Apruzzi:2025hvs}.
However, for discrete spacetime symmetries such as time-reversal and translations, it was argued in \cite{Pace:2025hpb}, that instead of gauging these symmetries, the SymTFT for internal symmetries is merely enriched by the spacetime symmetries, in the sense of 
\cite{Barkeshli:2014cna, Barkeshli:2016mew}. From an unoriented Turaev-Viro theory perspective some of these aspects appeared in \cite{Bhardwaj:2016dtk}.

\smallskip
\noindent{\bf Boundary Conditions and Symmetry Quiches.} Boundary conditions (BCs), and more generally condensable algebras, in symmetry enriched SymTFTs or more generally topological orders have appeared before in \cite{ bischoff2019spontaneous, Cheng:2020rpl, Pace:2024acq, Lu:2025gpt, Pace:2025hpb}. One focus here has been on symmetric boundary conditions, or at least boundary conditions that are stable under the enrichment symmetry, i.e. the group of condensed anyons are invariant under the action. We will relax this in order to be able to incorporate all anti-linear symmetries, which include nontrivial extensions of $\Z_2^T$, and even non-invertible time-reversal symmetries. For such symmetry considerations, we only require the SymTFT quiche -- the SymTFT with one symmetry gapped boundary condition -- rather than the full interval compactification that classifies phases. 

If the BC is not $\Z_2^T$-invariant, or there is a non-trivial fractionalization class for the condensed anyons on the boundary, then the boundary breaks spontaneously the $\Z_2^T$ symmetry~\cite{Pace:2025hpb}. For the purpose of identifying the symmetry category this is not a hindrance -- and has the advantage of allowing us to classify all possible Morita equivalent anti-linear symmetry categories.

The spontaneous breaking of the enriching $\Z_2^T$-symmetry on the symmetry boundary complicates the formulation of SymTFT for gapped phases. In particular, if we were to use this framework to study gapped phases, i.e. include a gapped physical boundary condition, then the resulting $(1+1)$d phase will always break the anti-linear symmetry down to a linear sub-symmetry. Again, we do not attempt this here, as we have a succinct and comprehensive way to study all gapped phases in a pure $(1+1)$d framework using module categories in \Cref{sec_GappedPhases}.

In summary, in this paper we focus on SymTFT quiches, with a $\Z_2^T$-enriched SymTFT bulk and a gapped symmetry boundary and show that this is a very useful way to characterize all Morita equivalent anti-linear symmetries.

\smallskip
\noindent
{\bf Plan of the Paper.}
In~\Cref{sec_Z2TPhases} we motivate real fusion categories directly from simple $(1+1)$d systems with time-reversal symmetry. Using a $\Z_2^T$-symmetric spin chain, we derive the two basic categories associated with time-reversal, namely $\Rep(\Z_2^T)\simeq \Vec_{\R}$ and $\Vec_{\Z_2^T}$, from the categories of symmetric defects in the trivial and symmetry-broken phases. We then revisit the Haldane chain, explain the appearance of the quaternion algebra $\bH$ and the real category $\Vec_\bH$, discuss the lack of string order parameters for anti-linear symmetries, and formulate the module-category description of $(1+1)$d $\Z_2^T$-phases. In~\Cref{sec_basic_rfc} we discuss the general theory of real fusion categories. We distinguish the two types, $\R$-real and Galois-real fusion categories, explain their different physical roles, and review the theory of module categories and Morita equivalence over real fusion categories. In~\Cref{sec_example_rfc} we illustrate these structures with concrete examples. In particular, we discuss the group-theoretical categories associated with anti-linear groups $G^T$, the categories $\Rep(G^T)$, non-invertible time-reversal symmetries of Tambara--Yamagami type, and the semi-direct product categories $\cC\rtimes \mathbb{Z}_2^T$. In~\Cref{sec_GappedPhases} we turn to the classification of $(1+1)$d gapped phases with generalized anti-linear symmetries. 
We propose that such phases are classified by module categories over the Galois-real symmetry category, develop the corresponding notion of duality and gauging, and revisit the basic $\mathbb{Z}_2^T$ examples from this perspective. 
We then analyze the general structure of phases with group-like anti-linear symmetry and study in detail the examples of $\mathbb{Z}_4^T$ and $S_3^T=\mathbb{Z}_3\rtimes \mathbb{Z}_2^T$, including their defects, domain walls, and Morita dual symmetry categories. We conclude the section by proving the family of Morita equivalences generated by gauging finite unitary subgroups. 

We change gears in~\Cref{sec_T_SymTFT} and we develop the $\mathbb{Z}_2^T$-enriched  (or simply T-enriched) SymTFT framework. We explain the distinction between T-enrichment and gauging, formulate the bulk theory in terms of T-enriched topological order, and identify the relevant bulk defects with the Drinfeld center of a Galois-real fusion category. In~\Cref{sec_quiche} we apply this framework to SymTFT quiches for anti-linear symmetries. We explain how a Galois-real symmetry category arises on a boundary of a $\mathbb{Z}_2^T$-enriched SymTFT, and show through examples how Morita equivalent anti-linear symmetries appear as different symmetry boundaries of the same bulk. The appendices collect some technical material, including additional discussion of the semi-direct product categories $\cC\rtimes \mathbb{Z}_2^T$ and the proof of the gauging theorem.
We conclude with some open questions in \Cref{sec:Conclusions}.

\section{Motivating Real Fusion Categories from $\Z_2^T$-symmetric Phases}
\label{sec_Z2TPhases}

One of the key points of this paper is that real fusion categories are the relevant mathematical structure underlying the study of systems with anti-linear symmetries. 
The goal of this section is to explain, from a physical perspective, how real fusion categories naturally arise in the study of anti-linear symmetries, such as $\Z_2^T$. 

The most basic example of an anti-linear symmetry is the time-reversal symmetry $\Z_2^T$. In this section we will derive two real fusion categories associated with it, denoted  $\Vec_{\Z_2^T}$ and $\Rep(\Z_2^T)$, respectively, from a concrete 1D spin chain with $\Z_2^T$-symmetry. The notation is deliberately chosen to match with the $\Vec_G,~\Rep(G)$ categories associated with a finite, internal, $\C$-linear $G$-symmetry. 
The $\Rep(\Z_2^T)$ category is in fact the same as $\Vec_\R$ the category of (finite dimensional) real vector spaces
\be
\Rep (\Z_2^T) \simeq \Vec_\R \,.
\ee
On the other hand $\Vec_{\Z_2^T}$ is a less familiar category. It is similar to the $\C$-linear $\Vec_{\Z_2}$ category except the generator $T\in \Vec_{\Z_2^T}$ is anti-linear in a precise sense. We will argue that it is the symmetry category for time-reversal symmetry, just like $\Vec_G$ is the symmetry category for an internal $G$ group symmetry.

Recall that $\Vec_G$ can be regarded as the category of symmetric defects in a 1D $G$-SSB phase. Namely the symmetric defects in a $G$-SSB phase are the $G$-domain walls, they are labeled by the group elements and satisfy the group law under fusion. On the other hand $\Rep(G)$ is the category of symmetric defects in a 1D $G$-SPT. Namely the symmetric defects (or excitations) in a $G$-SPT are $G$-charges that are labeled by representations of $G$. They fuse according to the tensor product of representations. Put differently, in a SymTFT realization of a $G$-SSB phase, we choose the physical boundary to be associated to the Dirichlet BC, which has topological defects associated to $\Vec_G$. In turn  for a $G$-SPT, it is the Neumann BC which has topological defects given by $\Rep (G)$.

Following this example, we investigate categories of symmetric defects in a 1D spin chain with $\Z_2^T$-symmetry in the symmetric and SSB phase. Notice in 1D there is a non-trivial $\Z_2^T$-SPT called the Haldane phase~ \cite{Haldane:1983ru}. 

We start with the Ising model with Hamiltonian 
\begin{equation}\label{eq_Ising}
    H(J)=-(1-J)\sum_i X_i-J \sum_i Z_iZ_{i+1}\,,
\end{equation}
where $X_i, Z_i$ are the Pauli operators. The Hamiltonian is symmetric under various anti-unitary symmetries. One choice is 
\begin{equation}\label{UTDef}
    U_T:=K  \prod_i X_i=K U\,,
\end{equation}
where $K$ means complex conjugation in the basis where $Z_i$ are diagonal and $U = \prod_i X_i$ is the usual $\Z_2$-symmetry generator in the transverse Ising model. The choice of the anti-unitary symmetry action is not unique and will not affect the conclusions in this section. With the choice of generator (\ref{UTDef}) which we will denote by $\Z_2^T$, the ground state is $\Z_2^T$-symmetric when $J<1/2$ and breaks $\Z_2^T$ spontaneously when $J>1/2$. 

{In the following we will consider the system in infinite volume and identify defects as follows:
for $H$ a local Hamiltonian, a defect is a subspace of the total lattice Hilbert space that looks like the ground state space of $H$ near spatial infinity.  } {In other words a defect is the ground state space of a Hamiltonian that only differs from $H$ in a local region. When symmetry is involved, a symmetric defect is the ground state space of a symmetric Hamiltonian that differs from $H$ in a local region. In this language, the trivial defect is the ground state space of $H$ itself. The physical reason behind this definition is that in the IR limit only the ground state space is ``visible" for a gapped system.}

\subsection{$\Z_2^T$-Trivial Phase} 

We will now prove 
\begin{theorem}\label{th:RepZ2T}
    The category of symmetric defects in a $\Z_2^T$-symmetric trivial phase is $\Rep(\Z_2^T)\simeq \Vec_\R$. 
\end{theorem}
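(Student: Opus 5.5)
We work at the fixed point $J=0$, where $H(0)=-\sum_i X_i$ has the unique product ground state $|\Omega\rangle=\bigotimes_i|+\rangle_i$. Since any point of the trivial phase is connected to $J=0$ by a $\Z_2^T$-symmetric finite-depth circuit, which maps defects to defects, it suffices to treat $J=0$. By the definition of defects above, a defect is the ground-state space $V\subset\mathcal H$ of a symmetric Hamiltonian $H'$ that agrees with $H(0)$ outside a finite region $R$. Since $|\Omega\rangle$ is a product state with a gap, such a $V$ takes the form $V = W\otimes|\Omega_{R^c}\rangle$ after enlarging $R$ and modding out by a symmetric local unitary. Here $W\subset\mathcal H_R$ is a finite-dimensional complex subspace. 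Because $H'$ commutes with the anti-unitary $U_T$ and $U_T|\Omega_{R^c}\rangle=|\Omega_{R^c}\rangle$, the subspace $W$ is stable under the anti-unitary $U_T|_R$, which satisfies $(U_T|_R)^2=1$. Hence a defect is the same as a pair $(W,\theta)$ with $\theta$ an antilinear involution, i.e.\ a $\Z_2^T$-representation in which $T$ acts antilinearly. Morphisms between defects are the local operators intertwining the two ground-state spaces. Symmetric morphisms are those commuting with $U_T$, so they are $\C$-linear maps $f:W\to W'$ with $f\theta=\theta' f$. This identifies the category of symmetric defects with $\Rep(\Z_2^T)$, the category of complex representations with antilinear $T$, and fusion corresponds to tensor product of such representations.

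The second step proves $\Rep(\Z_2^T)\simeq\Vec_\R$ via the fixed-point functor $F:(W,\theta)\mapsto W^\theta=\{w:\theta w=w\}$, which is a real vector space. Its quasi-inverse is complexification, $V\mapsto (V\otimes_\R\C,\ \id\otimes\overline{(\cdot)})$. The key lemma is Galois descent: $W\cong W^\theta\otimes_\R\C$. The proof writes $w=\tfrac12(w+\theta w)+i\cdot\tfrac{1}{2i}(w-\theta w)$, which shows that $W=W^\theta\oplus iW^\theta$, and these summands intersect trivially because $\theta(iv)=-iv$. Equivariant maps restrict to real maps on fixed points, and every real map extends uniquely $\C$-linearly, so $F$ is fully faithful and essentially surjective. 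For monoidality, note $(W\otimes_\C W')^{\theta\otimes\theta'}\cong W^\theta\otimes_\R W'^{\theta'}$, which follows by applying descent to both factors. The unit object is $(\C,\text{conj})$ with $\End=\R$, which confirms that the category is $\R$-real.

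The main obstacle is the first step, namely justifying rigorously that every symmetric defect reduces to a finite-dimensional space $W$ localized in $R$ with the restricted antiunitary acting as an involution. In particular we must rule out a twist $(U_T|_R)^2=-1$, which would produce $\bH$-type objects as in the Haldane chain. We will argue this from the triviality of the phase. Because the ground state is a symmetric product state, the symmetry restricted to $R$ can be chosen to be the product $K\prod_{i\in R}X_i$, which squares to $+1$ with no boundary projective phase. By contrast, in the SPT phase the restriction acquires edge operators that square to $-1$. Once this point is settled, the remaining steps are routine linear algebra.
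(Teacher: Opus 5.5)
Your proposal is correct and reaches the paper's conclusion by a more systematic route.

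\textbf{How the two arguments differ.} The paper argues at the level of simple objects, in three steps:
\begin{itemize}
\item it shows the would-be $\Z_2$-charge $Z_k\Omega$ is trivialized by the symmetric local operator $iZ_k$;
\item it notes that every irreducible $\Z_2^T$-representation is $(\C,\text{conj})$, and computes $\End(\lid)=\R$ because $U_T\cO U_T=\cO$ forces $z=z^*$;
\item it concludes that a category with one simple object of real type is $\Vec_\R$, identified with $\Rep(\Z_2^T)$ by taking fixed points.
\end{itemize}
You instead build a dictionary: a defect goes to $(W,\theta)$, and symmetric local operators go to $\theta$-equivariant maps. You then prove the descent lemma $W=W^\theta\oplus iW^\theta$ and check compatibility with fusion.

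\textbf{What each buys.} Your version covers degenerate defects, morphisms between different defects, and the monoidal structure, all of which the paper leaves implicit. The paper's version gives a concrete lattice check, the $iZ_k$ example, of why the unitary $\Z_2$-charge disappears.

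\textbf{Shared assumption.} Both rest on the same unproved physical input: every symmetric defect of the trivial phase is, up to a symmetric (quasi-)local unitary, of the form $W\otimes|\Omega_{R^c}\rangle$. You state this as a claim. The paper assumes it implicitly by only considering defects created by local operators acting on $\Omega$.

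\textbf{Two corrections to your last paragraph.}
\begin{itemize}
\item The step you flag as the main obstacle is immediate once the localization claim holds. At $J=0$, $U_T=K\prod_i X_i$ acts on $\mathcal H_R\otimes\mathcal H_{R^c}$ as $K_R\prod_{i\in R}X_i$ on the first factor and $K_{R^c}\prod_{i\notin R}X_i$ on the second. The second factor fixes the real, $X$-invariant state $|\Omega_{R^c}\rangle$, so $\theta$ is automatically an antilinear involution.
\item The contrast with the SPT is misplaced. For a bulk (two-ended) defect in the Haldane phase, the restricted symmetry has two edge pieces whose Kramers signs cancel. Its bulk defects therefore also form $\Rep(\Z_2^T)$, as the paper states at the start of its Haldane-chain subsection. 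The $\bH$-type structure appears only for boundary conditions, where there is a single edge.
\end{itemize}
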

First we will motivate this from the spin system.
Take $J=0$, then the ground state space  of the system is 
\begin{equation}
   \Omega= \left\{z \ket{+\cdots +} \ ; \,  z\in \C\right\}\,,
\end{equation}
where $|+\rangle$ is the eigenstate of $X$ with eigenvalue 1. 
We identify this with the trivial defect.
One is tempted to say that a defect can be created by acting with $Z_k$ at site $k$, which changes the ground state space to 
\begin{equation}
   Z_k \Omega= \{z|+\cdots +-_k+\cdots+\rangle\ ; \, z\in \C\}\,.
\end{equation}
For the internal $\Z_2$-symmetry generated by $U$, this is indeed a non-trivial defect, namely the $\Z_2$-charge. However, we notice that here $iZ_k$ is a $\Z_2^T$-symmetric local operator
\be
U_T\, (i Z_k) =  (iZ_k)\, U_T \,,
\ee
that recovers the original ground state space: 
\begin{equation}
    (iZ_k)\times Z_k\times \Omega=\Omega\,.
\end{equation}
A defect/excitation that can be created or destroyed by a symmetric local operator is trivial, hence the spin-flip defect $|+\cdots -_k\cdots+\rangle$ is in fact trivial as a $\Z_2^T$-symmetric defect. We see that the category of $\Z_2^T$-symmetric defects in this $\Z_2^T$-symmetric phase has essentially one simple object, namely the ground state space itself. All defects can be related back to the ground state space by acting with local $\Z_2^T$-symmetric operators. 

A slightly more mathematical reasoning is as follows: local operators in this $\Z_2^T$-symmetric phase should transform as representations of $\Z_2^T$, where by a representation of $\Z_2^T$ we mean a complex vector space $V$ together with a $\C$-anti-linear operator 
\be 
T: V\to V\,,\qquad T^2=\mathrm{id}_V  \,.
\ee
For instance, the space generated by the local operator $Z_k$ is 
\begin{equation}
   V_{Z_k}=\{z Z_k|z\in \C\}\,,
\end{equation}
with the $\Z_2^T$-action: 
\be 
T(z Z_k)=-z^* Z_k\,.
\ee 
If we identify $V_{Z_k}$ with the complex plane via 
\begin{equation}
    (a+bi)Z_k\  \longleftrightarrow\ ai+b ,~a,b\in \R\,,
\end{equation}
then the $\Z_2^T$-action is simply complex conjugation: 
\be 
T(ai+b)=-ai+b\,.
\ee
We see that the space spanned by the operator $Z_k$ together with its $\Z_2^T$-action is equivalent to the complex plane with the standard complex conjugation action. In fact, any irreducible representation of $\Z_2^T$ is equivalent to the  complex plane with $\Z_2^T$ acting as complex conjugation, $T$. This becomes the mathematical statement that the category $\Rep(\Z_2^T)$ of $\Z_2^T$-representations contains only one simple object, $(\C,T)$, up to equivalence. 

Let us consider the endomorphisms of the trivial defect $\Omega$.  A  {morphism of defects} $\Omega\to \Omega$ is a linear map that \emph{can be realized by a symmetric local operator.} Assume the linear map $|+\cdots +\rangle\to z|+\cdots +\rangle$ is realized by a $\Z_2^T$-symmetric local operator $\cO$: 
\begin{equation}
    \cO|+\cdots +\rangle=z|+\cdots +\rangle
\end{equation}
Then since $U_T\cO U_T=\cO$, we have 
\be 
\ba
 &{z|+ \cdots +\rangle}
    =U_T\cO U_T {|+\cdots +\rangle}=U_Tz{|+\cdots  +\rangle }\cr 
    & {=z^*|+\cdots +\rangle }\,.
\ea 
\ee
Therefore only multiplication by real numbers is allowed as maps of defects. We conclude that in the category of symmetric defects, we have $\End(\lid)\simeq \R$. Put everything together we see that the category of symmetric defects in the trivial $\Z_2^T$-symmetric phase is a real fusion category generated by a single simple object $\lid$ satisfying $\End(\lid)\simeq \R$. This is nothing but the category $\Vec_\R$ of finite dimensional real vector spaces. In fact $\Rep(\Z_2^T)$ is equivalent to $\Vec_\R$: a complex vector space with anti-unitary $\Z_2^T$-action is the same as a real vector space by taking the fixed points of the $\Z_2^T$-action: 
\begin{equation}
    \Z_2^T\curvearrowright V \Leftrightarrow V_\R:= \{v\in V|T(v)=v\}\,.
\end{equation}
In summary we obtain  ~\Cref{th:RepZ2T}.

\subsection{$\Z_2^T$ SSB-Phase} \label{sec_Z2T_broken}

The other phase is an SSB-phase for $\Z_2^T$.
We prove the following theorem:  
\begin{theorem}\label{th:VecZ2T}
    The category of symmetric defects in a $\Z_2^T$-SSB phase is $\Vec_{\Z_2^T}:=\End_\R(\Vec)$.
\end{theorem}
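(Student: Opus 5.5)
We follow the same strategy used for \Cref{th:RepZ2T}: work at a fixed-point Hamiltonian deep in the broken phase, identify the simple symmetric defects and their fusion, and determine the morphism spaces as the maps realizable by $\Z_2^T$-symmetric local operators. We then match the resulting real fusion category with $\End_\R(\Vec)$, the category of $\R$-linear endofunctors of $\Vec=\Vec_\C$.

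\textbf{Step 1: the fixed point.} Take $J=1$, so the ground state space is spanned by $|\uparrow\cdots\uparrow\rangle$ and $|\downarrow\cdots\downarrow\rangle$. In infinite volume these two states give two superselection sectors, and $U_T$ exchanges them.

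\textbf{Step 2: simple objects.} A symmetric defect has to be $U_T$-invariant, so it must carry both sectors. There are two candidates.
\begin{itemize}
\item The trivial defect $\lid$ is the sector space itself. Its two sectors can be thought of as the two vacua $\Omega_\uparrow\oplus\Omega_\downarrow$.
\item The domain-wall defect $T$ is realized by the states $|\cdots\uparrow\uparrow\downarrow\downarrow\cdots\rangle$ together with their $U_T$-images.
\end{itemize}
Fusing two domain walls gives back the trivial sector, so $T\otimes T\simeq\lid$. This reproduces the $\Z_2$ fusion rules.

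\textbf{Step 3: morphism spaces.} A symmetric local operator may act in each sector separately. It therefore acts on $\Omega_\uparrow$ as multiplication by some $z\in\C$, and invariance under $U_T$ then fixes its action on $\Omega_\downarrow$ to be $z^*$. This gives
\[
\End(\lid)\simeq\{(z,z^*)\}\simeq\C .
\]
This is exactly what separates the SSB phase from the trivial phase, where the same computation gave $\R$.

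The domain wall $T$ intertwines the two sectors. Pushing a scalar $z$ through the wall therefore turns it into $z^*$. In other words, $T$ is an \emph{anti-linear} object: the left and right actions of $\End(\lid)\simeq\C$ on $\Hom(T,T)\simeq\C$ differ by complex conjugation. The two objects $\lid,T$ are not isomorphic, since any local operator preserves the domain-wall number mod 2, so $\Hom(\lid,T)=0$.

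\textbf{Step 4: identification with $\End_\R(\Vec)$.} By Morita invariance, $\End_\R(\Vec_\C)\simeq \Bimod_\R(\C)$, the category of $\C$--$\C$ bimodules over $\R$. Since $\C\otimes_\R\C\simeq\C\oplus\C$, this category has exactly two simple objects:
\begin{itemize}
\item the identity bimodule $\C$;
\item the bimodule $\overline{\C}$, twisted by conjugation on one side.
\end{itemize}
Composition of endofunctors corresponds to the tensor product of bimodules over $\C$, which gives $\overline{\C}\otimes_\C\overline{\C}\simeq\C$. The endomorphism algebras are $\C$ in both cases, and $\End(\lid)\simeq\C$ acts on $\overline{\C}$ through conjugation on one side. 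This is exactly the structure found in Steps 2 and 3, so the functor sending $\lid\mapsto \C$ and $T\mapsto\overline{\C}$ gives the equivalence.

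The \textbf{main obstacle} is the associator. To finish, I must argue that the $F$-symbol of $T\otimes T\otimes T$ is trivial, i.e. that no twisted class in $H^3(\Z_2^T,U(1)_T)$ appears. Note that with $T$ acting on $U(1)$ by conjugation this group is trivial, so every associator is equivalent to the trivial one. Citing this cohomology computation resolves the issue, and no lattice $F$-move computation is needed. Finally, I would check that the lattice fusion of two domain walls respects the bimodule structure, i.e. that it is anti-linear on the correct side.
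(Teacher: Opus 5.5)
Your proposal is correct and follows the paper's own argument. The shared steps are: work at $J=1$; exhibit the domain wall with $\D\otimes\D\simeq\lid$ (the paper notes the pair is annihilated by the \emph{symmetric} operator $\prod_{j<l\le k}X_l$); derive $\End(\lid)=\{(z,z^*)\}\simeq\C$; and show that scalars are conjugated across the wall, which the paper checks explicitly via $\cO^\alpha_j\D_k=\D_k\cO^{\alpha^*}_l(Z_jZ_l)$ with $Z_jZ_l$ symmetric. Your Step~4, namely the explicit $\Bimod_\R(\C)$ model together with $H^3(\Z_2^T,U(1)_T)=0$ to trivialize the associator, rigorously establishes the final identification with $\End_\R(\Vec)$, which the paper simply takes from the literature.
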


First let us again motivate this from the spin-chain.  In the spin system it is obtained by considering $J=1$, then ground state space of the system is 
\begin{equation}
    \Omega=\left\{z|\uparrow\cdots \uparrow\rangle+w|\downarrow\cdots \downarrow\rangle \,; \ z,w\in \C \right\}\simeq \C\oplus \C\,,
\end{equation}
where $|\uparrow\rangle$ and $|\downarrow\rangle$ are the eigenstates of $Z$ with eigenvalues $\pm 1$. The domain wall defect is created by the symmetric non-local operator 
\be 
\D_k:=\prod_{j>k} X_j \,,
\ee
which acts on the ground state space as 
\be 
    \D_k: \quad 
    \left\{
    \ba ~&|\uparrow\cdots \uparrow\rangle\mapsto |\uparrow\cdots \uparrow_k \downarrow_{k+1}\cdots \downarrow\rangle\\
    &|\downarrow\cdots \downarrow\rangle\mapsto |\downarrow\cdots \downarrow_k \uparrow_{k+1}\cdots\uparrow\rangle
\ea\right. \,.
\ee
This must be a non-trivial defect, since any local operator cannot change the behavior of a ground state near spatial infinity. The location $k$ of the defect is irrelevant for the topological sector of the defect, so we will call this defect $\D$. Fusion of defects is defined by concatenation. The fusion $\D\otimes \D$ is therefore represented by the space spanned by the states with two domain walls: 
\begin{align}
 \D\otimes \D&\simeq\left\{ z   |\uparrow\cdots \uparrow_j\downarrow_{j+1}\cdots \downarrow_k\uparrow_{k+1}\cdots \uparrow\rangle \right. \nonumber\\
    &+w|\downarrow\cdots \downarrow_j\uparrow_{j+1}\cdots\uparrow_k\downarrow_{k+1}\cdots \uparrow\rangle \left|\, z,w\in \C \right\} \,.
\end{align}
The $\Z_2^T$-symmetric local operator $\prod_{j<l<k+1}X_l$ kills the pair of domain walls. We conclude the fusion rule is $\D\otimes \D\simeq \lid$. 

To fully determine the structure of the category of symmetric defects we need to do more work. 
Next we consider the endomorphisms of the trivial defect $\Omega$. Assume that there is a $\Z_2^T$-symmetric local operator $\cO$ that acts on the ground state space as a general linear transformation
\begin{equation}
    (z,w)\mapsto (\alpha z+\beta w, \gamma z+\eta w)\,,
\end{equation}
where $(z,w)$ is short hand for $z|\uparrow\cdots \uparrow\rangle+w|\downarrow\cdots \downarrow\rangle$. Local operators cannot change the behavior of states near spatial infinity, hence $\beta=\gamma=0$. Since $\cO$ is $\Z_2^T$-symmetric, we have 
\begin{align}
    (\alpha z, \eta w)&=\cO(z,w)=U_T\cO U_T(z,w)\nonumber\\
    &=U_T\cO(w^*,z^*)=U_T(\alpha w^*, \eta z^*)=(\eta^* z, \alpha^*w).
\end{align}
We see that $\eta=\alpha^*$. Therefore the only allowed maps of defects are 
\be 
(z,w)\mapsto (\alpha z, \alpha^* w),~\alpha \in \C
\,.
\ee
We conclude that in the category of symmetric defects in the $\Z_2^T$-SSB phase we have 
\be 
\End(\lid)\simeq \C\,.
\ee
One might conclude that the category of symmetric defects is the (complex) fusion category $\Vec_{\Z_2}$, with the non-trivial simple object being the domain wall defect $\D$. However, this is not the case and a subtle but crucial difference will be revealed by further analysis. Let us consider a domain wall $\D_k$ at site $k$, and a local symmetric operator $\cO_j$ realizing the endomorphism $(z,w)\mapsto (\alpha z, \alpha^* w)$ at site $j$. One choice is $\cO_j^\alpha=a\, \mathrm{id} +i \, b \, Z_j,~\alpha=a+bi$. On the $|\uparrow\cdots \uparrow\rangle$ ground state, $\cO_j$ acts as $\alpha \mathrm{id}_\uparrow$, while on the $|\downarrow\cdots \downarrow\rangle$ ground state it acts as $\alpha^*\mathrm{id}_\downarrow$. Let $j<k<l$, we notice the crucial relation
\begin{equation}
    \cO_j^\alpha \D_k=\D_k\cO_l^{\alpha^*}  (Z_j Z_l)
\end{equation}
The local operator $Z_jZ_l$ is $\Z_2^T$-symmetric hence belongs to the trivial superselection sector. Therefore up to $\Z_2^T$-symmetric local operators we have the relation
\begin{equation}
    \cO^\alpha\otimes  \D=\D\otimes \cO^{\alpha^*} \,,
\end{equation}
which says that a local scalar $\alpha \in \End(\lid)$ is complex conjugated when it passes through the defect $\D$. This is expected: the defect $\D$ is a \textbf{time-reversal domain wall}, which should be the Wick-rotation of an anti-linear action, thus it should act on scalars as complex conjugation.

 Another way to look at this is the following. The defect $\D_k$ is the same as $\lid\otimes \D_k$, this means any endomorphism of the trivial defect $\alpha: \lid\to \lid$ gives rise to an endomorphism of the defect $\D_k$ by tensor product: $\alpha \mathrm{id}_\lid \otimes \mathrm{id}_\D$. Concretely this means we obtain an endomorphism of $\D$ by putting the operator $\cO_j$ to the left of the domain wall $\D_k$($j<k$). This leads to an endomorphism:
 \begin{widetext}
\be 
\D_k\to\D_k:~\qquad 
\ba 
{} &|\uparrow\cdots \uparrow_k \downarrow_{k+1}\cdots \downarrow\rangle\mapsto \cO_j |\uparrow\cdots \uparrow_k \downarrow_{k+1}\cdots \downarrow\rangle=\alpha|\uparrow\cdots \uparrow_k \downarrow_{k+1}\cdots \downarrow\rangle\\
 &|\downarrow\cdots \downarrow_k \uparrow_{k+1}\cdots\uparrow\rangle\mapsto \cO_j|\downarrow\cdots \downarrow_k \uparrow_{k+1}\cdots\uparrow\rangle=\alpha^* |\downarrow\cdots \downarrow_k \uparrow_{k+1}\cdots\uparrow\rangle\,.
\ea
\ee
 \end{widetext}
Similarly, the defect $\D_k$ is the same as $\D_k\otimes \lid$, therefore putting any endomorphism of $\lid$ to the right of $\D_k$ also gives an endomorphism of $\D_k$. Concretely, acting with $\cO_j$ on $\D_k$ with $j>k$ leads to 
\begin{widetext}
\be
\D_k\to\D_k:\quad 
\ba{}
 &|\uparrow\cdots \uparrow_k \downarrow_{k+1}\cdots \downarrow\rangle\mapsto \cO_j |\uparrow\cdots \uparrow_k \downarrow_{k+1}\cdots \downarrow\rangle=\alpha^*|\uparrow\cdots \uparrow_k \downarrow_{k+1}\cdots \downarrow\rangle \\
 &|\downarrow\cdots \downarrow_k \uparrow_{k+1}\cdots\uparrow\rangle\mapsto \cO_j|\downarrow\cdots \downarrow_k \uparrow_{k+1}\cdots\uparrow\rangle=\alpha |\downarrow\cdots \downarrow_k \uparrow_{k+1}\cdots\uparrow\rangle \,.
\ea
\ee 
\end{widetext}
We see that the left and right actions on $\D_k$ differ by complex conjugation. We conclude that the category of symmetric defects in the $\Z_2^T$-SSB phase has the following properties. 
\begin{itemize}
    \item It is generated by two simple objects $\lid, \D$. 
    \item The fusion rule is $\D\otimes \D=\lid$. 
    \item The object $\lid$ satisfies $\End(\lid)\simeq \C$. 
    \item The object $\D$ satisfies the relation 
    \be
    z\otimes \D=\D\otimes z^*,~z\in \C.
    \ee
\end{itemize}
Although $\End(\lid)\simeq \C$, the last relation above implies that this category is \emph{not} a $\C$-linear fusion category. It is only a real fusion category. We will denote this category as $\Vec_{\Z_2^T}$. In the mathematical literature this category appeared in \cite{sanford2025fusion}, where a concrete model is given as $\Bimod_\R(\C)$, the category of $\R$-linear $\C$-bimodules. Another concrete model is $\Vec_{\Z_2^T}=\End_\R(\Vec)$ the category of $\R$-linear endomorphisms of $\Vec$. We thus find ~\Cref{th:VecZ2T}. 

We recall that in the internal $G$-symmetry case, one way to define the symmetry category is as the category of defects in the $G$-SSB phase. This gives the $\Vec_G$ category for non-anomalous $G$-symmetry, and $\Vec_G^\omega$ for anomalous $G$-symmetry. Namely, the domain walls in an SSB phase remember not only the group law but also the anomaly. 
Similarly, we should be able to define the symmetry category for $\Z_2^T$ as the category of defects in the $\Z_2^T$-SSB phase. Hence we will take $\Vec_{\Z_2^T}$ as the symmetry category for $\Z_2^T$. In $(1+1)$d, there is no $\Z_2^T$-anomaly since $H^3(\Z_2^T, U(1)_T)=0$. However, in other dimensions, there could be time-reversal anomalies and the symmetry category should be able to capture the anomalies. Indeed, the $\Vec_{\Z_2^T}$ category can be generalized to higher dimensions. In \cite{D_coppet_2025} the authors also constructed the 2-category $2\Vec_{\Z_2^T}^\omega$, where $\omega \in H^4(\Z_2^T, U(1)_T)=\Z_2$ can be a non-trivial class. In this paper we will focus on $(1+1)$d symmetries and phases, there can still be mixed anomalies between $\Z_2^T$ and other internal symmetries, which should be captured by symmetry categories generalizing $\Vec_{\Z_2^T}$. We will discuss these categories in the next section.

\subsection{$\mathbb{H}$ is for Haldane}
\label{sec_haldane}
We have discussed the symmetry and charge categories associated with  the $\Z_2^T$ symmetry. Similar to the case with unitary group $G$ symmetries, if there are two SPTs protected by $\Z_2^T$, then the category of symmetric defects is $\Rep(\Z_2^T)$ in either SPT. 
Hence the category of bulk symmetric defects cannot distinguish different SPTs. 
There are generally two approaches to detecting SPT phases. One is via boundary/edge measurements, the other is via bulk string order parameters. The latter approach naturally leads to the SymTFT paradigm by realizing that the string order parameters are naturally anyons of a $(2+1)$d topological order, that end on the physical boundary but do not end on the symmetry one. 

In this section we consider such approaches to the non-trivial $\Z_2^T$-SPT called the Haldane chain. The physical properties of the Haldane chain, including its boundary modes, are well-known in physics. Our focus here is on the categorical characterization of the Haldane chain. We will discuss two aspects of the Haldane chain from a categorical perspective. First, we show that the category of boundary conditions in the Haldane chain is $\Vec_\bH$, where $\bH$ is the quaternion algebra. Second, we will show that the Haldane chain does not have a string order parameter, consequently it cannot be described by anyon condensation in a $(2+1)$d SymTFT as for SPTs with unitary $G$-symmetry. Finally, we will provide a module category approach that provides a complete characterization of $(1+1)$d phases with $\Z_2^T$-symmetry.

\subsubsection{The Haldane Chain and its Edge-Modes}

The Haldane phase with one boundary can be modeled by the following Hamiltonian:
\begin{equation}
H_0=-\sum_{i=0}^{\infty} Z_i X_{i+1} Z_{i+2}
\end{equation}
with the $\Z_2^T$-symmetry generated as in (\ref{UTDef}) by 
\begin{equation}
U_T=K\prod_{i=0}^\infty X_i=KU \,.
\end{equation}
The ground state space of $H_0$, denoted as $\Omega$, is two dimensional. One checks that the operators 
\begin{equation}
    \cX:=X_0 Z_1,\quad \cZ:=Z_0
\end{equation}
 commute with the Hamiltonian, hence they act on the ground state space as a pair of Pauli operators. We will write $|\uparrow\rangle_0, |\downarrow\rangle_0$ for the basis of $\Omega$ that are eigenstates of $\cZ$. Since we do not have explicit ground state wavefunction, we will work out the action of $U_T$ on the ground state space algebraically. Notice that $U_T\cX U_T=-\cX$ and $U_T\cZ U_T=-\cZ$. Assume $U_T|\uparrow\rangle_0=e^{i\theta}|\downarrow\rangle_0$, then 
 \be
 U_T|\downarrow\rangle_0=U_T\cX|\uparrow\rangle_0=-e^{i\theta}|\uparrow\rangle_0
 \ee
 Next we consider endomorphisms of the boundary condition $\Omega$. Similar to the case with defects, a morphism of boundary conditions is a linear map between ground state spaces that can be realized by symmetric local operators. Assume there is a local operator $\cO$ such that 
\be
\cO|\uparrow\rangle_0=\alpha |\uparrow\rangle_0+\beta|\downarrow\rangle_0,~ \cO|\downarrow\rangle_0=\gamma|\uparrow\rangle_0+\delta |\downarrow\rangle_0.
\ee
Using $U_T\cO=\cO U_T$ and the action of $U_T$ on the ground state space, we obtain that $\delta=\alpha^*,~\gamma=-\beta^*$. Hence in the $\cZ$-basis the most general form of an endomorphism is 
\be
\begin{pmatrix}
    \alpha&\beta\\
    -\beta^*& \alpha^*
\end{pmatrix}
\ee
Such matrices form an $\R$-linear algebra, with an $\R$-linear basis given by $\lid, i \cX, i\cY, i\cZ$($\cX, \cY, \cZ$ are Pauli matrices). Checking their algebraic relations, we see that the endomorphism algebra of the boundary $\Omega$ is exactly the quaternion algebra $\bH=\R\langle1,i,j,k\rangle/\langle i^2=j^2=k^2=ijk=-1\rangle$. More generally, we observe that the restriction of $U_T$ on the ground state space satisfies $U_T^2=-1$. This is true not only for the Hamiltonian $H_0$ but for any boundary condition: assume there is a Hamiltonian with one boundary that differs from $H_0$ only near the boundary. Then there exists some $k>0$ such that $Z_iX_{i+1} Z_{i+2}=1$ on the ground state space for any $i\ge k$. Therefore restricting to the ground state space we can replace $U_T$ by 
\be
U_T \prod_{i\ge k} (Z_iX_{i+1}Z_{i+2})=X_0X_1\cdots X_k(Z_kZ_{k+1})
\ee
which is supported near the boundary and  squares to $-1$. Hence no matter what the Hamiltonian looks like near the boundary, the ground state space will be acted by an anti-unitary operator that squares to $-1$. To formalize the situation let us introduce the algebra $\C[\Z_2^T]^\eta$, which as a linear space is $\{z+w U_T|z,w\in \C\}$, and satisfies $zU_T=U_T z^*,~U_T^2=-1$. Then the above discussion shows any boundary condition leads to a ground state space that is a representation of the algebra $\C[\Z_2^T]^\eta$. In fact 
\be 
\C[\Z_2^T]^\eta \simeq \bH
\ee 
follows by identifying 
\be 
i \leadsto i, \  U_T\leadsto j, \ 
iU_T\leadsto k
\,.
\ee 
We conclude that the category of boundary conditions of the Haldane chain is exactly $\Vec_\bH$ the category of $\bH$-modules. The specific boundary condition $H_0$ can be thought of as the minimal boundary condition that corresponds to the simple object $\bH\in \Vec_\bH$.

\subsubsection{Lack of string orders}\label{sec_op_Z2T}
For an SPT protected by a unitary $G$-symmetry, it is well-known that string order parameters can be used to detect the SPT phase. Analyzing the mathematical structure of string order parameters is in fact one path towards the SymTFT paradigm. We now consider the situation with the $\Z_2^T$-SPT. We will give arguments that $\Z_2^T$-SPTs do not have string order parameters. 

A string order parameter should be a truncation of the symmetry generator on finite intervals (or half-infinite chain). This means the string order acts trivially on operators away from the interval and acts the same way as the symmetry generator on operators deep inside the interval. However, since complex conjugation is involved in the generator $U_T$, there is no sensible truncation of $U_T$ to finite intervals. This is because complex conjugation always acts non-trivially everywhere. We may write down $(U_T)_{jk}:=K \prod_{j<l<k}X_l$, but $(U_T)_{jk}$ is \emph{not} supported on the interval $[j,k]$. For example $(U_T)_{jk}(Y_n)(U_T)_{jk}=-Y_n,~\forall n\notin [j,k]$. 

A mathematical argument is given by computing the Drinfeld center of the symmetry category $\Vec_{\Z_2^T}$. For a unitary $G$-symmetry, the collection of all local and string order parameters form the Drinfeld center $\cZ(\Vec_G)$. We will argue later~(\Cref{sec_op_center}) that this is still true for anti-linear symmetries. However, it is known that $\cZ(\Vec_{\Z_2^T})=\Vec_\R$ is trivial. It will be more illuminating to look at the structure of $\cZ(\Vec_{G^T})$. For $\Vec_G$, we know that an object in $\cZ(\Vec_G)$ is labeled by a flux sector and a charge sector. For $\cZ(\Vec_{G^T})$ this is still true, but the flux sector is strictly in $G_0<G^T$, the unitary subgroup. This reflects the physical fact that any anti-unitary symmetry does not have a string order, due to the non-local nature of complex conjugation, see also \cite{Pollmann:2012xdn}.  

\subsubsection{Module category description of $\Z_2^T$-phases}

We now turn to the module category approach of classification of phases and show that it provides a complete characterization of $(1+1)$d $\Z_2^T$-symmetric phases. Conceptually this is possible because the module category approach is essentially a boundary approach: it describes phases by their boundary modes. 
We can summarize the structure of  module categories for $\Vec_{\Z_2^T}$, $\Mod(\Vec_{\Z_2^T})$ with the  following diagram:
\begin{widetext}

\be
\begin{tikzcd}
    &\arrow[loop above,"\Vec_{\Z_2^T}"]\arrow[leftrightarrow,dl,"\Vec"']\Vec_{\Z_2^T}\arrow[leftrightarrow, rd,"\Vec"]&\\
    \Vec\arrow[loop left, "\Rep(\Z_2^T)\simeq\Vec_\R"]\arrow[leftrightarrow,rr,"\Vec_{\bH}"']&&\Vec^\eta\arrow[loop right, "\Rep(\Z_2^T)\simeq\Vec_\R"]
\end{tikzcd}
\ee
If we relabel vertices by phases they represent, we obtain
\be
\begin{tikzcd}
    &\arrow[loop above,"\Vec_{\Z_2^T}"]\arrow[leftrightarrow,dl,"\Vec"']\text{SSB}\arrow[leftrightarrow, rd,"\Vec"]&\\
    \text{Trivial}\arrow[loop left, "\Vec_\R"]\arrow[leftrightarrow,rr,"\Vec_{\bH}"']&&\text{Haldane}\arrow[loop right, "\Vec_\R"]
\end{tikzcd}
\ee
\end{widetext}

This diagram contains all the information about $(1+1)$d gapped $\Z_2^T$-phases. The endomorphism category of each vertex is the category of symmetric defects in the phase. We see that the SSB phase has endomorphism $\Vec_{\Z_2^T}$, exactly the category of symmetric defects we derived in~\Cref{sec_Z2T_broken}. On the other hand the two symmetric phases have endomorphism $\Vec_\R$. The domain wall category between the trivial phase and the Haldane phase is $\Vec_\bH$, agreeing with our derivation in~\Cref{sec_haldane}. The diagram also tells us the domain wall category between an SPT and an SSB phase is $\Vec$. It is also possible to verify this using a lattice model, which we leave to the reader.

We will discuss these now in turn. 
The symmetry category we consider is $\Vec_{\Z_2^T}$. We will assume that the reader has some familiarity with module categories for complex/unitary fusion categories, see \cite{Thorngren:2019iar}. 
By a (left) module category we mean a finite semi-simple $\R$-linear category $\cM$, together with an $\R$-linear left action 
\be
\rhd: \Vec_{\Z_2^T}\times \cM\to \cM
\ee
that is associative up to chosen coherence. Given two module categories $\cM, \cN$, a module map $\cM\to \cN$ is an $\R$-linear functor that commutes with the $\Vec_{\Z_2^T}$-action up to chosen coherence.  We claim that the 2-category of $(1+1)$d $\Z_2^T$-phases, domain walls between them, and local operators, is $\Mod(\Vec_{\Z_2^T})$ the 2-category of $\Vec_{\Z_2^T}$-modules. There are several ways to compute this 2-category, by which we mean listing all the simple objects and morphism categories between them: 
\begin{itemize}
    \item There is a regular module category $\Vec_{\Z_2^T}$ with $\rhd$ acting in the  same way as the  tensor product of $\Vec_{\Z_2^T}$.
    \item There is a module category on $\Vec$, with action $T\rhd \lid=\lid$, and the associator $T\rhd (T\rhd \lid)\to (T\otimes T)\rhd \lid$ is $+1$. 
    \item There is a module category $\Vec^\eta$, which has underlying category $\Vec$, and action $T\rhd \lid=\lid$, and the associator $T\rhd (T\rhd \lid)\to (T\otimes T)\rhd \lid$ is $-1$.
\end{itemize}
In fact the possible module structures on $\Vec$ are classified by $H^2(\Z_2^T, \C^\times_T)=\Z_2$, and $\eta$ represents the non-trivial cocycle with $\eta(T,T)=-1$. Next we consider the morphism categories. 
\begin{itemize}
    \item An  endomorphism $F: \Vec_{\Z_2^T}\to \Vec_{\Z_2^T}$ is determined by an object in $\Vec_{\Z_2^T}$, namely $F(\lid)$. This is because $\Vec_{\Z_2^T}$ is the regular module category. So $\End_{\Mod(\Vec_{\Z_2^T})}(\Vec_{\Z_2^T})=\Vec_{\Z_2^T}$.
    \item An endomorphism $F: \Vec\to \Vec$ viewed as a functor is determined by the vector space $V=F(\lid)\in \Vec$. The module structure of $F$ gives isomorphism $U: F(T\rhd \lid)\simeq T\rhd F(\lid)$, which equips $V$ with an anti-linear map $U: V\to V$ such that $U^2=1$. Therefore the endomorphism category of $\Vec$ is $\Rep(\Z_2^T)\simeq \Vec_\R$. 
    \item An endomorphism $F: \Vec^\eta\to \Vec^\eta$ is also determined by a vector space $V$ together with an anti-linear involution. Hence the endomorphism category of $\Vec^\eta$ is also $\Rep(\Z_2^T)\simeq \Vec_\R$. 
    \item A morphism $\Vec_{\Z_2^T}\to \Vec$ is the same as an object in $\Vec$. This is because $\Vec_{\Z_2^T}$ is the regular module category. Hence $\Hom_{\Mod(\Vec_{\Z_2^T})}(\Vec_{\Z_2^T},\Vec)\simeq \Vec$. 
    \item Similarly $\Hom_{\Mod(\Vec_{\Z_2^T})}(\Vec_{\Z_2^T},\Vec^\eta)\simeq \Vec$.
    \item A morphism $F: \Vec\to \Vec^\eta$ viewed as a functor is determined by the vector space $V=F(\lid)$. The module structure of $F$ gives $V$ an anti-linear isomorphism $U: V\to V$ such that $U^2=-1$. We see that $V$ is a representation of the algebra $\C[\Z_2^T]^\eta=\bH$. Hence $\Hom_{\Mod(\Vec_{\Z_2^T})}(\Vec,\Vec^\eta)=\Vec_\bH$. 
\end{itemize}
In summary we arrive at the diagrams we showed at the start of the section. 

This concludes our discussion of $\Z_2^T$-symmetric phases in $(1+1)$d. Our analysis demonstrates clearly that real fusion categories are the correct mathematical structure to describe anti-unitary symmetries. As these real categories are less familiar and well-studied, we will first spend one section laying out the theory for real fusion categories, before returning to studying anti-unitary symmetries -- invertible and non-invertible -- as well as gapped phases.

\section{Basics of Real Fusion Categories}
\label{sec_basic_rfc}

One major goal of this paper is to establish a categorical framework for $(1+1)$d gapped phases with \emph{generalized} anti-linear symmetries. A new mathematical language is needed to describe such generalized anti-linear symmetries and phases protected by them, namely the language of real fusion categories~\footnote{For completeness, we recall that a fusion category over $\mathbb{R}$ is a rigid monoidal finite semi-simple $\mathbb{R}$-linear category with an indecomposable monoidal unit.}. The two standard examples $\Rep(\Z_2^T)$ and $\Vec_{\Z_2^T}$ belong to two families of real fusion categories that behave quite differently.  In this section we provide a self-contained introduction to the subject, including a review of basic properties and elementary examples.  The complementary mathematics literature is limited to \cite{sanford2025fusion, plavnik2024TYreal,Kong:2020iek,Kong:2021ups}.

It turns out there are two types of real fusion categories that behave quite differently.  We will define two types of real fusion categories as follows:
\begin{definition}
    A fusion category $\cC$ over $\R$ is called 
    \begin{itemize}
    \item \emph{$\R$-real} if $\End_\cC(\lid)\simeq \R$; 
    \item  \emph{Galois-real} if $\End_\cC(\lid)\simeq \C$. 
    \end{itemize}
\end{definition}
In the following we will discuss in detail the two types of real fusion categories.  We will also discuss the roles of these real fusion categories in describing anti-linear symmetries. Examples will be discussed in the next section.

\subsection{$\R$-Real Fusion Categories}

We now specialize to the case of a real fusion category $\cc$ with 
\be
\End_\cc (\lid) \simeq \R \,,
\ee
i.e. the {\bf  $\R$-real} case. A basic fact is that for a simple object $X \in \cc$, the endomorphism algebra $\End_\cc(X)$ is a finite-dimensional central division algebra over $\R$. By the classification of finite-dimensional division algebras over $\R$, this forces
\be
\End_\cc (X) \;\in\; \{ \R , \C , \bH \} \,.
\ee
Accordingly we will say that a simple object $X$ is {\bf  real}, {\bf  complex}, or {\bf  quaternionic} if $\End_\cc(X) \simeq \R,\C,\bH$ respectively. This trichotomy is the categorical analogue of the three types of irreducible real representations of a finite group.

\medskip

\noindent{\bf Physical interpretation.}  
An $\R$-real fusion category naturally appears as the category of symmetric defects in a system with unbroken time-reversal symmetry (such as $\Rep(\Z_2^T)$ in the $\Z_2^T$-symmetric phases). Given an $\R$-real fusion category $\cc$ of symmetric defects, we can ``forget'' time reversal by extending scalars from $\R$ to $\C$. Categorically this is done by the relative Deligne tensor product
\be
\cc^\C \;:=\; \cc \boxtimes_{\Vec_\R} \Vec_\C \,,
\ee
which is now a fusion category over $\C$. 
Here $\Vec_\R$ is the category of finite dimensional real vector spaces, and $\Vec_\C=\Vec$ is the category of finite dimensional complex vector spaces. Conversely the $\R$-real fusion category $\cc$ can be recovered by remembering the time-reversal action on the complex fusion category $\cc^\C$.  From this perspective the real/complex/quaternion trichotomy originates from three ways time-reversal can act on an object in a complex fusion category:
\begin{itemize}
    \item If a simple object $X\in \cc$ is real, then it comes from a simple object $X^\C\in \cc^\C$ that is invariant under time-reversal and $T^2|_X=1$. In other words it is a time-reversal singlet.
    \item If a simple $X\in \cc$ is complex, then it comes from a direct sum of $X^\C+\overline{X^\C}\in \cc^\C$ and time-reversal exchanges $X^\C\leftrightarrow \overline{X^\C}$. In other words it is a time-reversal doublet. 
    \item If a simple $X\in \cc$ is quaternion, then it comes from a simple object $X^\C\in \cc^\C$ that is invariant under time-reversal and $T^2|_X=-1$. In other words it is a Kramers doublet.
\end{itemize}
An $\R$-real fusion category is rigid, hence all objects have duals. It is useful to introduce an indicator that is defined for a simple $X\in \cc$ as
\be
\chi(X):=\dim_{\R}(\End_\cc(X))\in\{1,2,4\} \,.
\ee
Then the dual object $X^\vee$ satisfies 
\be
X\otimes X^\vee=\chi(X)\lid\oplus \cdots
\label{eq_dual_object}
\ee
The multiplicity of the unit on the RHS is determined by the type of the object $X$.

\subsection{Galois-Real Fusion Categories}
\label{sec_CRFC}

We now turn to the  case of a real fusion category $\cC$ with
\begin{equation}
\End_\cC (\lid) \simeq \C \,,
\end{equation}
i.e. the \textbf{Galois-real} case. Here, the entire real fusion category $\cc$ acquires a complex structure as a plain linear category: this follows because the  scalars $\C\simeq \End_\cc(\lid)$ naturally act on all morphism spaces (say, from the left):

\begin{figure}[H]
\centering
\begin{tikzpicture}[x=1cm,y=0.82cm, line cap=round, line join=round]
    \definecolor{mygray}{RGB}{140,140,140}

    \draw[dashed, draw=mygray, line width=0.8pt, dash pattern=on 3pt off 3pt]
        (-0.55,0.25) -- (-0.55,3.55);

    \fill[myorange] (-0.55,2.00) circle (0.07);
    \node[text=myorange, left=0.42cm] at (-0.55,2.00) {$z$};

    \draw[myblue, line width=1.5pt] (0.1,0.2) -- (0.1,3.65);

    \draw[-{Latex[length=7pt,width=7pt]}, myblue, line width=1.5pt]
        (0.1,0.95) -- (0.1,1.35);

    \draw[-{Latex[length=7pt,width=7pt]}, myblue, line width=1.5pt]
        (0.1,2.6) -- (0.1,3);

    \fill[myblue] (0.1,2.00) circle (0.1);

    \node at (0.02,-0.10) {$a$};
    \node at (0.02,3.95) {$b$};

    \node[anchor=west] at (0.48,2.00) {$f\in \operatorname{Hom}(a,b)$};

    \node at (0.15,-0.95) {$z\cdot f := z\otimes f \in \operatorname{Hom}(a,b)$};

\end{tikzpicture}
\end{figure}

This means $\cc$ is a $\C$-linear (finite semi-simple) category.  In particular each simple $a$ satisfies 
\be
\End_\cc(a)\simeq \C,
\ee
and every object decomposes into a direct sum of finitely many simples. 
However this $\C$-structure does not necessarily make $\cc$ into a complex fusion category, due to the phenomenon called \textbf{Galois-non-triviality}. 

Consider a simple object $a\in \cc$, the scalars from $\End_\cc(\lid)$ can act on $a$ from the left and also from the right. Since $\cc$ is defined as a real fusion category, these two actions only need to agree on the subfield $\R\subset \C$. This leaves us two possibilities: either the left and right action agrees, or they differ by complex conjugation, since these are the only two $\R$-algebra homomorphisms of $\C$.  Diagrammatically, this means if we pass a scalar $z$ across a simple line $a\in \cc$, it could become $z$ or $z^*$, see ~\Cref{pic_anti-linear}.

\begin{figure}
\centering
\begin{tikzpicture}[x=1cm,y=0.82cm, line cap=round, line join=round]
    \draw[myblue, line width=2.1pt] (0,0.2) -- (0,3.65);

    \draw[-{Latex[length=7pt,width=7pt]}, myblue, line width=2.1pt]
        (0,0.95) -- (0,1.35);

    \node at (-0.02,-0.10) {$a$};

    \fill[myorange] (-1.05,2.00) circle (0.07);
    \fill[myorange] ( 1.05,2.00) circle (0.07);

    \draw[myorange, line width=0.8pt, dash pattern=on 1.5pt off 1.5pt]
        (-1.05,2.00) -- (1.05,2.00);

    \node at (-1.38,1.98) {$z$};
    \node[anchor=west] at (1.22,1.98) {$z^{s(a)}$};

\end{tikzpicture}
\caption{A simple line $a$ in a Galois-real fusion category can act on scalars as either identity or complex conjugation, depending on the Galois indicator $s(a)$.\label{pic_anti-linear}}
\end{figure}

Hence we introduce the {\bf Galois indicator} $s(a)\in \{1,T\}$ for every simple object $a$. We say a simple object $a$ is linear/anti-linear, or Galois trivial/nontrivial, when $s(a)=1/T$.  It is precisely the anti-linear objects that spoil the $\C$-linear structure. If there is no anti-linear object, a Galois-real fusion category reduces to an ordinary complex fusion category.

The tensor product preserves the Galois grading: if $a,b,c$ are three simples and there is a nonzero morphism $c\to a\otimes b$, or equivalently $c$ appears in the decomposition of $a\otimes b$ into simples, then $s(c)=s(a)s(b)$. Diagrammatically this is very clear from ~\Cref{fig:chargedjunction}. 

\begin{figure}
\centering
\begin{tikzpicture}[x=1cm,y=0.82cm,line cap=round,line join=round]

    \coordinate (a) at (0.00,3.00);
    \coordinate (b) at (1.20,3.00);
    \coordinate (v) at (0.60,1.86);
    \coordinate (c) at (0.60,0.12);

    \draw[fusion,onstrand=.72] (v) .. controls (0.15,2.05) and (0.00,2.55) .. (a);
    \draw[fusion,onstrand=.72] (v) .. controls (1.05,2.05) and (1.20,2.55) .. (b);
    \draw[fusion,onstrand=.60] (c) -- (v);

    \fill[myblue] (v) circle (0.11);

    \node at (0.00,3.30) {$a$};
    \node at (1.20,3.30) {$b$};
    \node at (0.56,-0.18) {$c$};

    \draw[myorange, line width=0.8pt, dash pattern=on 1.5pt off 1.5pt]
        (0.60,1.86) ellipse[x radius=1.18, y radius=0.78];

    \fill[myorange] (-0.58,1.86) circle (0.07);
    \fill[myorange] ( 1.78,1.86) circle (0.07);

    \node at (-0.86,1.82) {$z$};
    \node[anchor=west] at (1.92,1.82) {$z^{s(a)s(b)} = z^{s(c)}$};
\end{tikzpicture}
\caption{We can move a scalar $z$ from the LHS to  the RHS in two ways, yielding relation $s(a)s(b)=s(c)$, assuming the junction operator is nonzero.\label{fig:chargedjunction}}
\end{figure}

This implies that a Galois-real fusion category $\cc$ has a canonical $\Z_2^T:=\mathrm{Gal}(\C/\R)$-grading
\be
\cc=\cc_1\oplus \cc_T
\ee
such that tensor product of objects preserves the grading. Different orders of fusing simple objects are related as $F$-symbols similar to the case for an ordinary complex fusion category:
\be
\begin{tikzpicture}[x=1cm,y=0.82cm,line cap=round,line join=round]

    \begin{scope}[shift={(0,0)}]
        \coordinate (a) at (0.00,3.00);
        \coordinate (b) at (0.82,3.00);
        \coordinate (c) at (1.78,3.00);
        \coordinate (u) at (0.46,2.02);
        \coordinate (v) at (1.02,1.12);
        \coordinate (d) at (1.02,0.08);

        \draw[fusion,onstrand=.72] (u) .. controls (0.10,2.15) and (0.00,2.58) .. (a);
        \draw[fusion,onstrand=.72] (u) .. controls (0.76,2.15) and (0.82,2.58) .. (b);
        \draw[fusion,onstrand=.42] (v) .. controls (0.78,1.28) and (0.55,1.62) .. (u);
        \draw[fusion,onstrand=.72] (v) .. controls (1.68,1.50) and (1.80,2.52) .. (c);
        \draw[fusion,onstrand=.58] (d) -- (v);

        \fill[myblue] (u) circle (0.085);
        \fill[myblue] (v) circle (0.085);

        \node at (0.00,3.28) {$a$};
        \node at (0.82,3.28) {$b$};
        \node at (1.78,3.28) {$c$};
        \node at (1.02,-0.22) {$d$};
        \node at (0.28,1.40) {$e$};
    \end{scope}

    \node at (3.10,1.95) {$=$};

    \node at (4.35,1.97) {$\left(F^{abc}_{d}\right)^{e}_{f}$};
    \fill[myorange] (4.05,1.48) circle (0.055);

    \begin{scope}[shift={(5.10,0)}]
        \coordinate (a) at (0.00,3.00);
        \coordinate (b) at (0.86,3.00);
        \coordinate (c) at (1.68,3.00);
        \coordinate (u) at (1.22,2.02);
        \coordinate (v) at (0.78,1.08);
        \coordinate (d) at (0.78,0.08);

        \draw[fusion,onstrand=.72] (v) .. controls (0.12,1.50) and (0.00,2.46) .. (a);
        \draw[fusion,onstrand=.72] (u) .. controls (0.94,2.18) and (0.86,2.58) .. (b);
        \draw[fusion,onstrand=.72] (u) .. controls (1.62,2.18) and (1.68,2.58) .. (c);
        \draw[fusion,onstrand=.42] (v) .. controls (1.00,1.30) and (1.18,1.66) .. (u);
        \draw[fusion,onstrand=.58] (d) -- (v);

        \fill[myblue] (u) circle (0.085);
        \fill[myblue] (v) circle (0.085);

        \node at (0.00,3.28) {$a$};
        \node at (0.86,3.28) {$b$};
        \node at (1.68,3.28) {$c$};
        \node at (0.78,-0.22) {$d$};
        \node at (1.45,1.36) {$f$};
    \end{scope}

\end{tikzpicture}
\ee
Here we have suppressed the indices for the junction operators and summing over $f$ is assumed. Notice that since scalars are no longer transparent in a Galois-real fusion category, it matters where the scalars are ``located". We will make the convention that the $F$-symbol is located to the left of the  diagram -- as indicated by the orange dot.  The associativity constraints of a Galois-real fusion category can be expressed as the pentagon equation as for an ordinary fusion category, however the anti-linear objects modify the pentagon equation. Specifically, consider the following step of the pentagon equation: 
\be
\begin{tikzpicture}[x=1cm,y=0.82cm,line cap=round,line join=round]

    \begin{scope}[shift={(0,0)}]
        \coordinate (a) at (0.00,3.55);
        \coordinate (b) at (0.95,3.55);
        \coordinate (c) at (1.80,3.55);
        \coordinate (d) at (2.72,3.55);

        \coordinate (u) at (1.34,2.42); 
        \coordinate (v) at (1.77,1.53); 
        \coordinate (w) at (1.32,0.63); 
        \coordinate (e) at (1.32,0.00);

        \draw[fusion,onstrand=.80] (w) .. controls (0.55,0.98) and (0.03,2.15) .. (a);
        \draw[fusion,onstrand=.78] (u) .. controls (1.00,2.78) and (0.96,3.18) .. (b);
        \draw[fusion,onstrand=.78] (u) .. controls (1.67,2.78) and (1.78,3.18) .. (c);
        \draw[fusion,onstrand=.50] (v) .. controls (1.60,1.82) and (1.46,2.16) .. (u);
        \draw[fusion,onstrand=.77] (v) .. controls (2.34,1.82) and (2.70,2.48) .. (d);
        \draw[fusion,onstrand=.54] (w) .. controls (1.50,0.95) and (1.63,1.22) .. (v);
        \draw[fusion,onstrand=.60] (e) -- (w);

        \fill[myblue] (u) circle (0.085);
        \fill[myblue] (v) circle (0.085);
        \fill[myblue] (w) circle (0.085);

        \node at (0.02,3.87) {$a$};
        \node at (0.95,3.87) {$b$};
        \node at (1.80,3.87) {$c$};
        \node at (2.73,3.87) {$d$};
        \node at (1.32,-0.22) {$e$};

        \node at (1.3,1.86) {$j$};
        \node at (1.7,0.97) {$k$};
    \end{scope}


    \begin{scope}[shift={(5.5,0)}]
        \coordinate (a) at (0.00,3.55);
        \coordinate (b) at (0.95,3.55);
        \coordinate (c) at (1.80,3.55);
        \coordinate (d) at (2.70,3.55);

        \coordinate (u) at (2.24,2.54); 
        \coordinate (v) at (1.76,1.69); 
        \coordinate (w) at (1.28,0.76); 
        \coordinate (e) at (1.28,0.00);

        \draw[fusion,onstrand=.80] (w) .. controls (0.48,1.02) and (0.04,2.20) .. (a);
        \draw[fusion,onstrand=.76] (v) .. controls (1.02,2.10) and (0.96,2.92) .. (b);
        \draw[fusion,onstrand=.78] (u) .. controls (1.87,2.92) and (1.82,3.18) .. (c);
        \draw[fusion,onstrand=.78] (u) .. controls (2.54,2.92) and (2.69,3.18) .. (d);
        \draw[fusion,onstrand=.50] (v) .. controls (1.95,2.06) and (2.10,2.30) .. (u);
        \draw[fusion,onstrand=.54] (w) .. controls (1.47,1.02) and (1.60,1.34) .. (v);
        \draw[fusion,onstrand=.60] (e) -- (w);

        \fill[myblue] (u) circle (0.085);
        \fill[myblue] (v) circle (0.085);
        \fill[myblue] (w) circle (0.085);

        \node at (0.02,3.87) {$a$};
        \node at (0.95,3.87) {$b$};
        \node at (1.80,3.87) {$c$};
        \node at (2.71,3.87) {$d$};
        \node at (1.28,-0.22) {$e$};

        \node at (2.2,1.94) {$l$};
        \node at (1.76,1.08) {$k$};
    \end{scope}
 \fill[myorange] (5.5,0.76)  circle (0.055);
    \fill[myorange] (6.6,1.15) circle (0.055);
    \draw[orangepath=.56] (6.6,1.15) .. controls (6,1.2) .. (5.5,0.76);
    
      \node at (4.02,1.92) {$=$};

    \node[above] at (6.6,1.15){$\left(F^{bcd}_{k}\right)^{j}_{l}$};
    \node[below] at (5.5,0.76) {$\left(\left(F^{bcd}_{k}\right)^{j}_{l}\right)^{s(a)}$};

\end{tikzpicture}
\ee
The $F$-symbol comes from the associator $(b\otimes c)\otimes d\to b\otimes(c\otimes d)$, and by our convention it is located between the $a$-line and the $b$-line. Hence if we move it to the left of the entire diagram, the $F$-symbol acquires a complex conjugation if $a$ is an anti-linear object. This is the only move in the five moves of the pentagon equation that differs from that of an ordinary complex fusion category. In fact the modified pentagon equation appeared in multiple references~\cite{Bhardwaj:2016dtk, Barkeshli_2019}, but the correct mathematical structure, i.e., Galois-real fusion category,  was not identified.

Finally, a Galois-real fusion category is rigid, hence for every object $a$, there is a dual object $a^\vee$, with canonical morphisms $a\otimes a^\vee\to \lid,~\lid\to a\otimes a^\vee$ (below) satisfying the standard zig-zag identity:
\be
\begin{tikzpicture}[x=0.95cm,y=0.95cm,line cap=round,line join=round]
    \definecolor{mygray}{RGB}{155,155,155}

    \tikzset{
        fusion/.style={draw=myblue,line width=1.35pt},
        unitline/.style={draw=mygray,line width=1.15pt,dash pattern=on 3pt off 2.5pt},
        onstrand/.style={
            postaction={
                decorate,
                decoration={
                    markings,
                    mark=at position #1 with {\arrow{Latex[length=5pt,width=5pt]}}
                }
            }
        }
    }

    \begin{scope}[shift={(0,0)}]
        \coordinate (L) at (0.00,0.10);
        \coordinate (M) at (0.80,1.58);
        \coordinate (R) at (1.60,0.12);
        \coordinate (U) at (0.80,3.10);

        \draw[unitline] (M) -- (U);

        \draw[fusion,onstrand=.28]
            (L) .. controls (0.00,0.95) and (0.22,1.55) .. (M);
        \draw[fusion,onstrand=.72]
            (M) .. controls (1.38,1.55) and (1.60,0.95) .. (R);

        \fill[mygray] (M) circle (0.065);

        \node at (0.80,3.35) {$1$};
        \node at (-0.10,-0.22) {$a$};
        \node at (1.72,-0.22) {$a^{\vee}$};
    \end{scope}

    \begin{scope}[shift={(4.55,0)}]
        \coordinate (L) at (0.00,3.00);
        \coordinate (M) at (0.80,1.62);
        \coordinate (R) at (1.60,3.00);
        \coordinate (B) at (0.80,0.05);

        \draw[unitline] (B) -- (M);

        \draw[fusion,onstrand=.72]
            (M) .. controls (0.22,1.58) and (0.00,2.20) .. (L);
        \draw[fusion,onstrand=.28]
            (R) .. controls (1.60,2.20) and (1.38,1.58) .. (M);

        \fill[mygray] (M) circle (0.065);

        \node at (-0.08,3.32) {$a$};
        \node at (1.68,3.32) {$a^{\vee}$};
        \node at (0.80,-0.22) {$1$};
    \end{scope}
\end{tikzpicture}
\ee

\noindent{\bf Physical Interpretation.}
A Galois-real fusion category can arise in two situations. First it can be the category of symmetric defects in a time-reversal broken system, as we have seen in the $\Z_2^T$-SSB example. 
The reason that time-reversal breaking leads to Galois-real fusion category can be understood as follows. Since time-reversal is broken we have two conjugate vacua, and local operators take the form $(z\,\id_1, w\,\id_2)$ where $\id_{1/2}$ are the identity operators in the two vacua. Time-reversal acts as $(z,w)\mapsto (w^*, z^*)$, hence the time-reversal symmetric local operators are $(z, z^*)$ which still form a copy of $\C$. In this case the component $\cc_1$ consists of defects within each vacuum, and $\cc_T$ consists of domain walls between the two vacua.  Compare with the time-reversal unbroken case: there  is only one vacuum and the symmetric local operators are $z\, \id$ with $z\in \R$. This is why the category of symmetric defects in a time-reversal symmetric phase is $\R$-real.

Second, a {\bf Galois-real} fusion category can also arise as the abstract symmetry category for an anti-linear symmetry. This is similar to how a fusion category can arise as the abstract symmetry category for a linear symmetry. In this case the $\Z_2^T$-grading $\cc=\cc_1\oplus \cc_T$ has a very natural meaning: the component $\cc_1$ consists of all linear symmetry generators and $\cc_T$ consists of all anti-linear symmetry generators. If we Wick-rotate the relation~\Cref{pic_anti-linear}, the worldline of $a$ becomes a symmetry action on a time-slice, and the relation $za=az^{s(a)}$ is precisely the statement that $a$ is an anti-linear symmetry generator when 
$s= T$.

\smallskip    
\noindent{\bf Remark: $\R$-real categories cannot be symmetry categories. } We discussed above that Galois-real fusion categories are the natural mathematical object for describing anti-linear symmetries. The reader may wonder whether an $\R$-real fusion category can be regarded as a symmetry category for a physical system. 
We reject this interpretation as complex numbers should always be a symmetry of any quantum system. Allowing an $\R$-real fusion category to be a symmetry category would imply that we are actually doing quantum physics over $\R$, which is non-physical.

\subsection{Module Categories over Real Fusion Categories}\label{sec_module_cat}

The theory of module categories over real fusion categories will be crucial to the categorical formulation and classification of gapped phases. The basic structure of the theory of module categories does not depend on the ground field. In fact, module category can be defined for any monoidal category, and we direct the reader to~\cite{etingof2015tensor} for the definition. Roughly speaking a module category $\cM$ over a monoidal category $\cC$ consists of the data of a $\cC$-action:
\be
a\rhd m, \qquad ~a\in \cC, ~m\in \cM
\ee
and an associator 
\be
\alpha_{a,b,m}: \ a\rhd b\rhd m\to a\otimes b\rhd m,\quad a,b\in \cC,~m\in \cM
\ee
that satisfies a pentagon-like identity.
In this paper by a module category we always mean a finite semi-simple one, meaning that the underlying category of the module category is finite semi-simple. The key structural results that we will use about module categories are the following, which was proven in~\cite{douglas2018dualizabletensorcategories} for a general ground field. 
\begin{theorem}\label{thm_module}
    Let $\cC$ be a fusion category over $\R$ and $\cM$ a right $\cC$-module category, then there exists a separable algebra $A\in \cC$ such that $\cM=\Mod_\cC(A)$ and $\End_\cC(\cM)=\Bimod_\cC(A)$.  The algebra $A$ is determined by the module category $\cM$ up to Morita equivalence. 
\end{theorem}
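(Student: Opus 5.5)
The plan follows the standard internal-Hom reconstruction of module categories (Ostrik, Etingof--Gelaki--Nikshych--Ostrik), checking at each step that only semisimplicity and separability are used, never algebraic closedness of the ground field. We treat a right $\cC$-module $\cM$; the left case is symmetric. Since $\cC$ is finite semisimple and $\R$-linear and the action $\lhd:\cM\times\cC\to\cM$ is biexact, the functor $\cC\to\cM$, $c\mapsto m\lhd c$, has a right adjoint for each $m\in\cM$. This gives the internal Hom $\underline{\Hom}(m,n)\in\cC$, characterized by
\be
\Hom_\cM(m\lhd c,n)\cong\Hom_\cC(c,\underline{\Hom}(m,n))\,.
\ee
Then $A:=\underline{\End}(m)$ is an algebra in $\cC$, with multiplication coming from composition. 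The key steps are as follows.

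\emph{Step 1.} Choose $m$ to be a generator, for instance the direct sum of one representative of each simple object of $\cM$. This is a finite sum because $\cM$ is finite.

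\emph{Step 2.} Show that $\Phi:=\underline{\Hom}(m,-):\cM\to\Mod_\cC(A)$, with $A$ acting by precomposition, is an equivalence of $\cC$-module categories. Full faithfulness is first checked on the free modules $\Phi(m\lhd c)\cong A\otimes c$, where it follows from the adjunction. It then extends to all of $\cM$ because every object of $\cM$ is a direct summand of some $m\lhd c$; this holds since $\lid\in\cC$ and $m$ generates, so idempotent completeness suffices. Essential surjectivity follows because every $A$-module is a quotient, and by semisimplicity a summand, of a free module $A\otimes c$.

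\emph{Step 3.} Prove that $A$ is separable, meaning the multiplication $A\otimes A\to A$ splits as an $A$-bimodule map. This is the main obstacle. Over $\C$ one usually infers semisimplicity of $\Mod_\cC(A)$ and passes back to separability using perfectness of the field. Over $\R$ we can argue the same way, since $\R$ is perfect (characteristic zero): semisimplicity of $\Mod_\cC(A)\simeq\cM$ says that $A\otimes A\to A$ splits as a map of right $A$-modules in $\cC$. Upgrading this splitting to a bimodule splitting is exactly the content proven in \cite{douglas2018dualizabletensorcategories} for general perfect fields. I would first try a direct argument: average a one-sided splitting using the bimodule structure of $\Bimod_\cC(A)\simeq\End_\cC(\cM)$, and use that this functor category is semisimple over $\R$. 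Here $\End_\cC(\cM)$ is semisimple because it is the dual of a fusion category over a perfect field. If that fails, I would cite their theorem directly.

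\emph{Step 4.} Identify $\End_\cC(\cM)\simeq\Bimod_\cC(A)$. Here $\End_\cC(\cM)$ denotes the right $\cC$-module endofunctors of $\Mod_\cC(A)$. By the Eilenberg--Watts argument, such an endofunctor is right exact, hence determined by its value on $A$; that value is an $A$-bimodule $B$, and the functor is $-\otimes_A B$. Relative tensor products exist because $A$ is separable, so the relevant coequalizers split.

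\emph{Step 5.} For uniqueness, note that choosing a different generator $m'$ gives $A'=\underline{\End}(m')$. The internal Hom $\underline{\Hom}(m,m')$ is then an invertible $A$--$A'$ bimodule, which exhibits the Morita equivalence. Conversely, Morita equivalent algebras have equivalent categories of modules, so $A$ is determined by $\cM$ exactly up to Morita equivalence.

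Throughout, the Galois-real case requires no separate treatment, because every argument above is $\R$-linear.
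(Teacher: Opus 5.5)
Your outline is sound, but it does not follow the paper's route. The paper gives no argument of its own and simply cites \cite{douglas2018dualizabletensorcategories} for the whole statement. You instead reconstruct the Ostrik/EGNO internal-Hom argument. This buys a clean separation. Steps 1, 2, 4 and 5, with the small repairs below, are purely formal and hold verbatim for any finite semisimple $\R$-linear $\cC$, whether $\End_\cC(\lid)$ is $\R$ or $\C$. The only non-formal input is the separability of $A$, equivalently the semisimplicity of $\Bimod_\cC(A)\simeq\End_\cC(\cM)$. For that input you fall back on the same reference, so in the end both proofs rest on the same external theorem.

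You should, however, correct the stated reason for that input: it is not perfectness of the field. Take the perfect field $k=\overline{\mathbb{F}}_p$, $\cC=\Vec_{\Z_p}$ over $k$, and $\cM=\Vec$, so that $A=k[\Z_p]$. Then $\End_\cC(\cM)\simeq\Rep_k(\Z_p)$ is not semisimple. Nor is $A$ separable: a bimodule section of $\mu$ must send the unit to $c\sum_g g\otimes g^{-1}$, whose image under $\mu$ is $pc\cdot 1=0$.

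What is actually needed is that $\cC$ itself be separable, i.e.\ have nonzero global dimension. Over $\R$ this holds because the characteristic is zero. Perfectness only guarantees that $\R$, $\C$, $\bH$ are separable $\R$-algebras, so that Deligne products stay semisimple.

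Relatedly, there is nothing to average. A one-sided splitting of $\mu:A\otimes A\to A$ always exists via the unit $u$ (e.g.\ $\id_A\otimes u$), so it carries no information. Once $\Bimod_\cC(A)$ is known to be semisimple, $\mu$ is an epimorphism there and splits outright.

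Two smaller repairs are needed.
\begin{enumerate}
\item In Step 2, ``by semisimplicity a summand'' cannot appeal to semisimplicity of $\Mod_\cC(A)$, since that is what you are proving. Semisimplicity of $\cC$ alone does not split $A\otimes M\to M$ $A$-linearly. Instead, take a presentation $A\otimes c_1\to A\otimes c_0\to M\to 0$ and lift the first map using fullness on free modules. Then take its cokernel in $\cM$, and use that $\Phi$ is exact, as is any additive functor out of a semisimple category.
\item In Step 4, separability is not needed. $\cC$ is abelian, so the coequalizers defining $\otimes_A$ exist, and the Eilenberg--Watts argument applies to right exact module functors. Noting this removes the apparent circularity of invoking $\Bimod_\cC(A)\simeq\End_\cC(\cM)$ in Step 3.
\end{enumerate}
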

{Here $\Mod_\cC(A)$ left $A$-modules over $\cC$, whereas $\Bimod_\cC(A)$ are bimodules.}
These structural results allow us to classify module categories by classifying separable algebras in the real fusion category up to Morita equivalence.
Following the same terminology as for module categories over $\C$, we will say that the real fusion category $\Bimod_\cC(A)$ is obtained by \textbf{condensing the algebra $A$ in $\cC$}. 
Module categories are used to define Morita (i.e. in this case, gauge) equivalence as follows. 
\begin{definition}
    Let $\cC, \cD$ be real fusion categories, they are said to be Morita equivalent if there exists a right $\cC$-module category $\cM$ such that $\cD\simeq \End_{\cC}(\cM)$. Or equivalently, if there exists an algebra $A\in \cC$ such that $\Bimod_\cC(A)\simeq \cD$.
\end{definition}
One thing we should emphasize is that Morita equivalence {\bf does not preserve} the types of real fusion categories. This means a Galois-real fusion category can be Morita equivalent to an $\R$-real one or a Galois-real one and similarly for an $\R$-real fusion category. 
Starting with a Galois-real fusion category $\cC$, one naturally asks what kind of algebras in $\cC$ give rise to $\R$-real fusion categories upon condensation and what kind of algebras in $\cC$ give rise to Galois-real ones. This question is relevant for the theory of gapped phases with anti-linear symmetry. We provide the answer in the form of the following theorem, which appears to be a new mathematical result.
\begin{theorem}\label{th_gauging}
    Let $\cC=\cC_1\oplus \cC_T$ be a Galois-real fusion category and $A\in \cC$ is an indecomposable separable algebra, then the Morita dual $\Bimod_\cC(A)$ is Galois-real if $A\in \cC_1$ and $\R$-real if $A\notin \cC_1$.
    \proof Since $A$ is indecomposable, we know the dual category $\Bimod_\cC(A)$ is a real fusion category. It then remains to calculate $\Omega(\Bimod_\cC(A))$ which is the same as $\End^\cC_{A-A}(A)$. The most general element in it takes the form $z\id_A$, this is an $A-A$ bimodule morphism precisely when it commutes with all the simple summands of $A$. If $A\in \cC_1$, there is no constraint on $z$ and $\Omega(\Bimod_\cC(A))\simeq \C$, if $A\notin \cC_1$ then $z$ has to be real and $\Omega(\Bimod_\cC(A))\simeq \R$. \qed 
\end{theorem}
We will now provide some examples of various types of real fusion categories. Subsequently we discuss and classify gapped phases for Galois-real fusion category symmetries, using module categories. 
\bigskip

\section{Examples of Real Fusion Categories}
\label{sec_example_rfc}

The goal of this section is to illustrate the previous general discussions with concrete examples. The examples will also be crucial for our later discussion on gapped phases with anti-linear symmetries. 

\subsection{The $\Vec_{G^T}^\omega$ Categories}

We start with categories associated with a group-like anti-linear symmetry $G^T$. Similar to the unitary internal symmetry case, there are $\Rep(G^T)$ and $\Vec_{G^T}$ categories associated with $G^T$. Here by $G^T$ we mean a group $G^T$ {together with a group homomorphism} 
\be 
s: \quad G^T\to \Z_2^T=\{1, T\}\,. 
\ee
We call an element $g\in G^T$ linear(anti-linear) if $s(g)=1(T)$. We will always assume the map $s$ is clear from the context and refer to $G^T$ as an anti-unitary symmetry.

\subsubsection{Definition of $\Vec_{G^T}^\omega$}

We first consider the category $\Vec_{G^T}$ and its twisted version $\Vec_{G^T}^\omega$ where $\omega$ is any class in the twisted cohomology group $H^3(G^T, U(1)_T)$. These are defined as follows: 
\begin{itemize}
    \item The simple objects are $\delta_g,~g\in G^T$, with fusion rule $\delta_g\otimes \delta_h=\delta_{gh}$.
    \item We declare $\End(\delta_g)=\C$, and $\delta_g$ satisfies
    \begin{equation}
        z\otimes \delta_g=\delta_g\otimes z^{s(g)},~z\in \C.
    \end{equation}
    This condition breaks $\C$-linearity whenever $s$ is not trivial.
\end{itemize}
Denote the associator by 
\begin{equation}
    \omega(g,h,k):~(\delta_g\otimes \delta_h)\otimes \delta_k\to \delta_g\otimes(\delta_h\otimes\delta_k)
\end{equation}
The pentagon equation gives the commutative diagram below:
\begin{widetext}
\begin{equation}
    \begin{tikzcd}
\bigl(((\delta_g\otimes \delta_h)\otimes \delta_k)\otimes \delta_l\bigr)
  \arrow[rr, "{\omega(gh,k,l)\mathrm{id}_{\delta_{ghkl}}}"]
  \arrow[d, "{\omega(g,h,k)\mathrm{id}_{\delta_{ghkl}}}"']
& &
\bigl((\delta_g\otimes \delta_h)\otimes (\delta_k\otimes \delta_l)\bigr)
  \arrow[d, "{\omega(g,h,kl)\mathrm{id}_{\delta_{ghkl}}}"]
\\
\bigl((\delta_g\otimes (\delta_h\otimes \delta_k))\otimes \delta_l\bigr)
  \arrow[dr, "{\omega(g,hk,l)\mathrm{id}_{\delta_{ghkl}}}"]
& &
\delta_g\otimes \bigl(\delta_h\otimes (\delta_k\otimes \delta_l)\bigr)\arrow[dl,"{\textcolor{red}{\mathrm{id}_{\delta_g}\otimes \omega(h,k,l)\mathrm{id}_{\delta_{hkl}}=\omega(h,k,l)^{s(g)}\mathrm{id}_{\delta_{ghkl}}}}"]\\
 & \bigl(\delta_g\otimes ((\delta_h\otimes \delta_k)\otimes \delta_l)\bigr)
  &
\end{tikzcd}
\end{equation}
    
\end{widetext}
where we have marked the key difference from the unitary $G$ case in red. Collecting coefficients we see that the pentagon equation demands $\omega$ to be a twisted cocycle in $H^3(G^T, U(1)_T)$. 
The following is a direct generalization of Theorem ~\Cref{th:VecZ2T}:
\begin{theorem}\label{th_VecGT}
    The category of symmetry defects in a fully $G^T$-SSB phase with $G^T$-anomaly $\omega \in H^3(G^T,U(1)_T)$ is $\Vec_{G^T}^\omega$.\qed 
\end{theorem}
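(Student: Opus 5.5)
We prove the statement the way \Cref{th:VecZ2T} was proved, replacing $\Z_2^T$ by $G^T$ and tracking the anomaly. In a fully $G^T$-SSB phase the ground states are labelled by the elements $x\in G^T$; call the vacua $|x\rangle$. A chosen symmetry generator $U_g$ acts as $U_g|x\rangle=\lambda_g(x)|gx\rangle$. This generator is linear or anti-linear according to $s(g)$.

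\textbf{Step 1: objects and fusion.} We construct domain-wall defects $\D^g$ by truncating $U_g$ to a half-line. Here we must use the same care as in \Cref{sec_Z2T_broken}: for anti-linear $g$, the truncation is a non-local symmetric operator that maps $|x\rangle$ on the far left to $|x\rangle\cdots|gx\rangle$ on the far right. Behaviour at spatial infinity cannot be changed by local operators, so these defects are pairwise non-isomorphic. Every symmetric defect is locally equivalent to a sum of them, since the asymptotic data of a defect is just a $G^T$-equivariant relabelling of the vacua. Concatenation gives $\D^g\otimes\D^h\simeq\D^{gh}$, the pair of walls being removable by a symmetric operator supported between them.

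\textbf{Step 2: endomorphisms and the Galois indicator.} A symmetric local operator acts on the vacua diagonally, as $(\alpha_x)_x$. Symmetry forces $\alpha_{gx}=\alpha_x^{s(g)}$, so the whole tuple is fixed by $\alpha_e\in\C$, giving $\End(\lid)\simeq\C$. Moving such an operator from the left of $\D^g$ to its right sends $\alpha_x$ to $\alpha_{gx}$. Using the same $Z_jZ_l$-type trick as before, this yields $z\otimes\D^g=\D^g\otimes z^{s(g)}$. Hence each $\D^g$ is simple with $\End\simeq\C$ and Galois indicator $s(g)$, so by the discussion in \Cref{sec_CRFC} the category is Galois-real.

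\textbf{Step 3: associator.} The associator $(\D^g\otimes\D^h)\otimes\D^k\to\D^g\otimes(\D^h\otimes\D^k)$ is a scalar automorphism of $\D^{ghk}$. It is computed from how the truncated generators compose: the endpoints of $U_g$ carry projective data, and the phases picked up in reassociation define a $3$-cochain $\omega$. This is the standard Else--Nayak / anomaly-inflow definition of the $(1+1)$d anomaly, now with the twisting $U(1)_T$ because anti-linear truncations conjugate phases to their left. The pentagon computed above, with the red modification, then shows $\omega$ is a twisted cocycle. Changing the truncations by local phases shifts $\omega$ by a twisted coboundary, so its class is exactly the anomaly in $H^3(G^T,U(1)_T)$. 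Comparing with the defining data of $\Vec_{G^T}^\omega$ finishes the proof.

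\textbf{Main obstacle.} I expect Step 3 to be the hardest part: showing that the associator of the defect category coincides with the anomaly class, including the convention for where scalars sit relative to anti-linear lines. The difficulty is that anti-linear truncations are not strictly localized (see \Cref{sec_op_Z2T}). I would handle this by working only up to symmetric local operators. Concretely, I would define the truncation as $K$ times a localized unitary and check that the conjugation acts only on the scalars to the left, which is exactly the convention fixed for the $F$-symbols.
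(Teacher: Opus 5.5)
Your overall strategy matches the paper's, which offers no argument beyond calling the statement a direct generalization of \Cref{th:VecZ2T} and noting that the pentagon forces $\omega$ to be a twisted $3$-cocycle. Your Step 2 is a correct version of that generalization. However, Step 1 breaks down as soon as $G^T$ is non-abelian. With $U_g|x\rangle\propto|gx\rangle$, your wall $\D^g$ has asymptotic data $\{(x,gx)\}_x$ (left vacuum, right vacuum). The symmetry $U_k$ sends this to $\{(kx,kgx)\}=\{(y,kgk^{-1}y)\}$, i.e.\ it carries $\D^g$ to $\D^{kgk^{-1}}$. So the truncation of $U_g$ is a symmetric defect only when $g$ is central; this already fails for $r$ or $T$ in $S_3^T$.

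Your own criterion, that the asymptotic data of a symmetric defect is a $G^T$-equivariant relabelling of the vacua, is the correct one, and it contradicts your definition of $\D^g$. The equivariant bijections of the $G^T$-torsor of vacua are the right multiplications $x\mapsto xh$. So the simple symmetric walls are $\D^h$ with left vacuum $x$ and right vacuum $xh$ for every $x$. These lie in the commutant of the symmetry action rather than being truncated symmetry operators, and they fuse as $\D^{h_1}\otimes\D^{h_2}\simeq\D^{h_1h_2}$ in left-to-right order (your labelling gives $\D^{hg}$). For abelian $G^T$, as in \Cref{th:VecZ2T}, left and right multiplication agree, which is why the issue is invisible there. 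Your Galois-indicator computation survives unchanged. A left scalar $z$ acts on the component $x$ as $z^{s(x)}$, and a right scalar $w$ acts as $w^{s(xh)}=w^{s(x)s(h)}$, giving $w=z^{s(h)}$.

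This also undermines Step 3 as written. The Else--Nayak phases obtained by composing truncated $U_g$'s are the associator of the symmetry lines themselves, not of the symmetric walls, so you would still need an argument identifying the two. The clean way to close this is the regular-module picture, i.e.\ the paper's complete-SSB example.
\begin{itemize}
\item The walls are the module functors $R_h=-\otimes\delta_h$ of the regular module $\Vec_{G^T}^\omega$, with module structure given by $\omega(g,x,h)$.
\item The composition isomorphism $R_{h_2}\circ R_{h_1}\Rightarrow R_{h_1h_2}$ has components $\omega(x,h_1,h_2)$.
\item By the twisted cocycle identity, the two ways of reassociating $R_{h_3}R_{h_2}R_{h_1}$ differ by $\omega(h_1,h_2,h_3)^{s(x)}$ at component $x$.
\item Under your Step 2 identification $\alpha_x=\alpha_e^{s(x)}$, this is exactly the scalar $\omega(h_1,h_2,h_3)\in\End(\D^{h_1h_2h_3})\simeq\C$.
\end{itemize}

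Finally, your plan to realize anti-linear walls as $K$ times a localized unitary is unnecessary and in general wrong, because $K$ also acts on the far-left vacuum. As with $\prod_{j>k}X_j$ in \Cref{sec_Z2T_broken}, a \emph{linear} half-line operator implementing $x\mapsto xh$ suffices. The anti-linearity of $\D^h$ is then the categorical relation $z\otimes\D^h=\D^h\otimes z^{s(h)}$ detected by symmetric local operators, and no complex conjugation needs to be localized.
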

Just like for a unitary $G$-symmetry, we will take $\Vec_{G^T}^\omega$ to mean the symmetry category for a group-like symmetry  $G^T$  with 't Hooft anomaly $\omega \in H^3(G^T,U(1)_T)$.

\subsubsection{Module Categories over $\Vec_{G^T}$}
\label{sec_module_GT}

Here we classify and describe module categories over $\Vec_{G^T}$. By~\Cref{thm_module}  this comes down to classifying (Morita classes of) indecomposable separable algebras in $\Vec_{G^T}$. The classification is similar to that for a $\Vec_G$ category, i.e. in terms of subgroups and cocycles. There are algebras $\C[K^T]^\omega\in \Vec_{G^T}$ where $K^T<G^T$ is a subgroup of $G^T$ with the inherited anti-linear structure, and $\omega \in H^2(K^T,\C^\times)$. The algebra $\C[K^T]^\omega$ is spanned as a $\C$-linear space by basis $\{\delta_g|~g\in K^T\}$, with product rule
\be
\delta_g\cdot \delta_h=\omega(g,h)\delta_{gh},  \quad ~z\delta_g=\delta_g z^{s(g)}.
\ee
The associated module category, denoted as $\cM(K^T,\omega)$, has underlying category $\Vec_{G^T/K^T}$ where $G^T/K^T$ is the coset, and the module structure is determined by $\omega$. This structure is completely analogous to that of module categories over $\Vec_G$ for a unitary $G$, hence we choose not to write down all the module category structures. We only mention that the action of $\delta_{k}\in \Vec_{G^T},~k\in K^T$ on the simple object $\delta_{K^T} \in \Vec_{G^T/K^T}$ labeled by trivial coset $K^T$ is 
\be
 \delta_{k}\rhd \delta_{K^T} =\delta_{K^T},~\delta_{k_1}\rhd \delta_{k_2}\rhd \delta_{K^T} \xrightarrow{\omega(k_1,k_2)} \delta_{k_1}\otimes \delta_{k_2}\rhd \delta_{K^T}.
\ee
The full module category structure is  determined by the above data, and the coherence condition of module category demands that $\omega$ is a twisted 2-cocycle of $K^T$.

We note that the algebra $\C[K^T]^\omega$ lies completely in the linear sector $\Vec_{G_0}$ precisely when $K^T$ is contained in the linear subgroup 
\be 
G_0=\ker(s)\,.
\ee
Therefore the Morita dual $\Bimod_{\Vec_{G^T}}(\C[K^T]^\omega)=\End_{\Vec_{G^T}}(\cM(K^T,\omega))$ is Galois-real precisely when $K^T\le G_0$ and $\R$-real otherwise.

\subsection{The $\Rep(G^T)$ Categories}
Next we consider the $\Rep(G^T)$ categories. By a representation of $G^T$ we mean a complex vector space $V$ together with actions $\rho_g: V\to V$ such that 
\begin{enumerate}
\item[(a)] $\rho_g$ is $\C$-linear if $s(g)=1$ and $\C$-anti-linear if $s(g)=T$. 
\item[(b)] $\rho_g\circ \rho_h=\rho_{gh}$. 
\end{enumerate}
The (finite dimensional) representations of an anti-linear $G^T$-symmetry naturally form a real fusion category $\Rep(G^T)$, which is always {\bf $\R$-real} as long as $G^T$ contains anti-linear elements. It is obtained from $\Vec_{G^T}$ by gauging the full group $G^T$, which is not contained in the $\ker(s)$ and thus yields indeed an $\R$-real category. In view of later application, we should note that  this means we should not be including this as  a symmetry category. This is quite in contrast to the unitary case, where this is a standard symmetry category.

The following is a direct generalization of~\Cref{th:RepZ2T} which we hope is by now clear without proof. 
\begin{theorem}\label{th_RepGT}
    The category of symmetric defects in a $(1+1)$d $G^T$-SPT is $\Rep(G^T)$.
\end{theorem}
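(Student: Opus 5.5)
We follow the same two routes used for \Cref{th:RepZ2T}: a direct physical identification of symmetric defects in an SPT, backed by the module-category computation from \Cref{sec_module_GT}. The cleanest route is the module-category one. A $G^T$-SPT is an indecomposable $\Vec_{G^T}$-module category whose underlying category is $\Vec$ (a single vacuum). Such modules are $\cM(G^T,\psi)$, i.e.\ they come from the algebra $\C[G^T]^\psi$ with $\psi\in H^2(G^T,U(1)_T)$. The category of symmetric defects is then $\End_{\Vec_{G^T}}(\cM(G^T,\psi))\simeq \Bimod_{\Vec_{G^T}}(\C[G^T]^\psi)$ by \Cref{thm_module}. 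So the plan is to identify this endomorphism category with $\Rep(G^T)$.

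First step: describe a module endofunctor $F$ of $\cM(G^T,\psi)$. As in the $\Z_2^T$ computation, $F$ is determined by $V=F(\lid)\in\Vec$ together with module-structure isomorphisms $U_g: F(\delta_g\rhd\lid)\to \delta_g\rhd F(\lid)$. Because $\delta_g$ acts on scalars by $z\mapsto z^{s(g)}$, each $U_g$ is a map $V\to V$ that is $\C$-linear for $s(g)=1$ and anti-linear for $s(g)=T$. The coherence condition of a module functor compares the associators $\psi(g,h)$ on the source and target. Since source and target carry the same cocycle, these cancel, provided one tracks carefully that $\psi(g,h)^{s(\cdot)}$ appears identically on both sides. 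The result is $U_gU_h=U_{gh}$, so $(V,U)$ is exactly a representation of $G^T$ in the sense of (a),(b).

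Second step: show that module natural transformations are precisely $G^T$-intertwiners. Their $\R$-linearity comes from commuting with the anti-linear $U_g$. Composition of module functors corresponds to the tensor product of representations, with the diagonal action, which is again anti-linear on anti-linear elements. This gives a monoidal equivalence with $\Rep(G^T)$. Consistency check: since $G^T\not\le G_0$, \Cref{th_gauging} predicts an $\R$-real category, and indeed $\End(\lid)=\{z:\,z^{s(g)}=z\ \forall g\}=\R$.

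Physically, the same statement follows from the lattice argument of \Cref{th:RepZ2T}. Localized symmetric excitations in the unique-vacuum SPT carry a local action of $G^T$ on the ground-state space, and morphisms are symmetric local operators, hence intertwiners. The expected main obstacle is verifying that the twist $\psi$ genuinely drops out in the presence of anti-linear elements, so that all SPTs (all $\psi$) give the same $\Rep(G^T)$ rather than some twisted version.
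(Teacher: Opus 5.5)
Your proposal is correct and follows the argument the paper intends. The paper gives no separate proof; it presents the statement as the direct generalization of \Cref{th:RepZ2T} and of the module-category computation $\End_{\Vec_{\Z_2^T}}(\Vec)\simeq\End_{\Vec_{\Z_2^T}}(\Vec^\eta)\simeq\Rep(\Z_2^T)$, which your module-functor analysis extends to $\cM(G^T,\psi)$.

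The step you flag as the main obstacle is in fact immediate. First, $F$ is automatically $\C$-linear, since its module structure commutes with the action of $\End(\lid)\simeq\C$. Second, both occurrences of $\psi(g,h)$ in the coherence condition are scalars of the $\C$-linear category $\cM$, hence central under composition, so they cancel and no conjugation arises. The only caveat is the usual left/right convention, under which $g\mapsto U_g$ may come out as an anti-homomorphism; replacing it by $g\mapsto U_g^{-1}$ repairs this.
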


Recall that for an ordinary $\Rep(G)$ category there is a useful dimension formula
\be
\sum_{\bm{R}} d_{\bm{R}}^2=|G| \,,
\ee
where $d_{\bm{R}}$ is the dimension of the representation $R$, and the sum goes over irreducible representations. For $\Rep(G^T)$ there is an analogous formula that we will make use of later~\cite{Fulton:2004uyc}. For an irreducible representation $\bm{R}$ of $G^T$, we define an indicator $\chi(\bm{R})=1/2/4$ if $\bm{R}$ is real/complex/quaternionic. Then we have 
\be\label{eq_dim_formula}
2\sum_{\bm{R}} \frac{d_{\bm{R}}^2}{\chi(\bm{R})}=|G^T|\,,
\ee
where $d_{\bm{R}}$ still stands for the \emph{complex} dimension of the representation $\bm{R}$. Notice if $G^T$ is purely linear, then $\chi({\bm{R}})=2$ for any ${\bm{R}}$ and we recover the usual formula.

Next we study various examples of $\Rep(G^T)$-categories, and identify some novel features absent in the $\C$-linear world. 
\begin{example}\label{eg_Z4T}
    Let $\Z_4^T$ be the group $\Z_4$ and the generator $T\in \Z_4$ is anti-linear. 
Thus we have 
\be
\ba
s: \quad \Z_4^T &\ \to\  \Z_2^T \cr 
T^i & \ \mapsto \ T^{i \,\text{mod} \,2}  \,.
\ea
\ee
    Then $U:=T^2$ is linear. The linear subgroup is $\ker (s)= \Z_2=\{1, U\}$. We will start by arguing that there is only one non-trivial irreducible representation using the dimension formula~\eqref{eq_dim_formula}. The trivial representation contributes $2\frac{d_\lid}{\chi(\lid)}=2$, hence there is at most one non-trivial irreducible representation $R$ such that $\frac{d_R^2}{\chi(R)}=1$. This leaves us two possibilities: either $d_R=1$ and $R$ is of real type, or $d_R=2$ and $R$ is of quaternion type. It is not difficult to see that any one dimensional representation of $\Z_4^T$ is equivalent to the trivial one, therefore the first possibility is ruled out. Hence the only non-trivial representation is of quaternion type. For completeness we provide an explicit form of the quaternionic representation. Write elements of $\C^2$ as $(z,w),~z,w\in \C$, define $T(z,w)=(-\overline{w}, \overline{z})$, then $T^2=-1$ and this realizes the non-trivial quaternionic representation of $\Z_4^T$. We conclude that there is only one non-trivial simple object in $\Rep(\Z_4^T)$, which we denote by $Q$. According to the formula~\eqref{eq_dual_object}, the tensor product $Q\otimes Q$ contains four copies of $\lid$. Dimension counting shows it contains nothing more. Hence we have the following fusion rule.
    \begin{align}
        Q\otimes Q\simeq \lid\oplus \lid\oplus \lid \oplus \lid=4\cdot \lid.
    \end{align}
\end{example}
 An interesting feature about $\Rep(\Z_4^T)$ is that although $\Z_4^T$ is an abelian group, its representation category $\Rep(\Z_4^T)$ contains non-invertible objects. This is to be contrasted with $\Rep(G)$ categories, which contain non-invertible objects if and only if $G$ is non-abelian. 

\subsubsection{Examples: $\Rep(A\times \Z_2^T)$}\label{sec_AxZ2}

    Let $A$ be an abelian group and $A\times \Z_2^T$ the direct product with $\Z_2^T$. We now compute the structure of 
    \be 
    \Rep(A\times \Z_2^T)\,.
    \ee 
    Let $R=(V,\rho)\in \Rep(A\times \Z_2^T)$, since $T\in \Z_2^T$ commutes with elements of $A$, we can choose a basis for $V$ so that $V\simeq \C^n$ and $T$ acts as complex conjugation: $T: \{z_i\}\mapsto \{\overline{{z_i}}\}$. This means if we restrict to the subgroup $A$, $(V,\rho|_A)$ is a representation of $V$ that is invariant under a complex conjugation on $V$. Conversely, suppose $R=(V,\rho_A)$ is a representation of $A$ such that $\rho_A$ commutes with a complex conjugation on $V$, then $R$ can be promoted to a representation of $A\times \Z_2^T$ by letting $T\in \Z_2^T$ act as complex conjugation. This argument shows that representations of $A\times \Z_2^T$ can be obtained by taking combinations of irreducible representations of $A$ that are invariant under complex conjugation(in some basis). Since $A$ is abelian, irreducible representations are valued in $\widehat{A}:=\Hom(A,U(1))$. Let $\gamma\in \widehat{A}$, then complex conjugation acts by $\gamma\mapsto \gamma^*=\gamma^{-1}$. Therefore a character $\gamma$ is invariant under complex conjugation if $\gamma^2=1$. On the other hand if $\gamma^2\neq 1$, then we may form the direct sum $\gamma\oplus \gamma^{-1}$, which is now invariant under complex conjugation. We conclude that there are two types of simple objects in $\Rep(A\times \Z_2^T)$: 
    \begin{itemize}
        \item $F_\gamma:=\gamma\in \widehat{A}$ if $\gamma^2=1$. In this case $F_\gamma$ is a real object in $\Rep(A\times \Z_2^T)$.
        \item $E_\gamma:=\gamma\oplus \gamma^{-1}$ if $\gamma^2\neq 1$.  In this case $E_\gamma$ is a complex object in $\Rep(A\times \Z_2^T)$.
    \end{itemize}
    It is direct to verify that the dimension formula~\eqref{eq_dim_formula} is satisfied: an object $F_\gamma$ contributes $2\frac{d_R^2}{\chi(R)}=2$, an object $E_\gamma$ contributes $2\frac{d_R^2}{\chi(R)}=4$. But $\gamma$ and $\gamma^{-1}$ give the same $E_\gamma$ object. Hence the total contribution is just $\sum_{\gamma \in \widehat{A}}2=2|A|=|A\times \Z_2^T|$. As for the fusion rule, the following are obvious
    \be\label{eq_fusion_rule_AxZ2}
    F_\gamma\otimes F_{\gamma'}=F_{\gamma\gamma'},~ F_\gamma \otimes E_{\gamma'}=E_{\gamma\gamma'}
    \ee
    Fusion rules for $E_\gamma\otimes E_{\gamma'}$ can be obtained using fusion rules of $\widehat{A}$ and is generically non-invertible. We do not provide a general formula here since it is not very illuminating. 

\begin{remark}
    Let $G_0:=\ker s< G^T$ be the subgroup of linear elements. It is always possible to describe $\Rep(G^T)$ in terms of $\Rep(G_0)$ together with a time-reversal action on $\Rep(G_0)$. By a $\Z_2^T$-action on a fusion category $\cc$ we mean an \emph{anti-linear} tensor equivalence $\cF_T: \cc\to \cc$, together with equivalence of tensor functors $\cF_T\circ \cF_T\simeq \id_\cc$. This philosophy is known as Galois descent in mathematics. For $G^T=A\times\Z_2^T$, we can say that $\Rep(A\times \Z_2^T)$ is the same as $\Rep(A)\simeq\widehat{A}$ together with a $\Z_2^T$-action on $\widehat{A}$. Here the action is simple to describe: $T: \gamma\mapsto \gamma^{-1}$. To recover $\Rep(A\times \Z_2^T)$, we take the (homotopy)\emph{fixed points} of the $\Z_2^T$-action. This is exactly what we did earlier: we took combinations of characters of $A$ that are invariant under the action $T: \gamma\mapsto \gamma^{-1}$.
\end{remark}

\begin{example}
    Let us consider a concrete example. Setting $A=\Z_4$, denote the generator of $\widehat{\Z_4}$ as $\gamma$. Then under complex conjugation $\gamma\mapsto \gamma^3$ and $\gamma^2$ is invariant. Therefore the simples in $\Rep(\Z_4\times \Z_2^T)$ are $\lid, E_\gamma, F_{\gamma^2}$, where $E_\gamma$ is complex, $\lid$ and $F_{\gamma^2}$ are real. We have fusion rules
    \begin{align}
         E_\gamma\otimes E_\gamma &=2\lid \oplus 2F\nonumber\\
         E_\gamma\otimes F_{\gamma^2}\simeq F_{\gamma^2}\otimes E_\gamma&=E_\gamma\nonumber\\
    F_\gamma\otimes F_\gamma&=\lid.
    \end{align}
    We note that the $\Rep(\Z_4\times \Z_2^T)$ again contains non-invertible simple object even though $\Z_4\times \Z_2^T$ is abelian. We also note that the fusion rule is very similar to that of a Tambara-Yamagami category $\TY(\Z_2)$. In fact Tambara-Yamagami categories over $\R$ have been studied in~\cite{plavnik2024TYreal}, and $\Rep(\Z_4\times \Z_2^T)$ is exactly a real Tambara-Yamagami category $\TY_\R(\Z_2)$.
\end{example}

\subsubsection{Examples: $\Rep(A\rtimes \Z_2^T)$}\label{eg_A_rtimes_Z2T}

    Let $A$ be an abelian group, denote by $A\rtimes \Z_2^T$ the semi-direct product with $T\in \Z_2^T$ acting on $A$ as inversion: $TaT=a^{-1},~\forall a\in A$. We now completely determine the structure of the real fusion category 
    \be 
    \Rep(A\rtimes \Z_2^T) \,.
    \ee 
    We claim that simple objects of $\Rep(A\rtimes \Z_2^T)$ are exactly the same as those of $\Rep(A)$. Let $\gamma\in \Rep(A)$ be a character of $A$, extend $\gamma$ to a representation of $A\rtimes \Z_2^T$ by letting $T$ act on $\C$ as complex conjugation. One checks $\gamma(TaT)=\gamma(a^{-1})=\gamma(a)^{-1}$ which is indeed $\gamma(T)\gamma(a)\gamma(T)=\gamma(a)^*=\gamma(a)^{-1}$.  Therefore any $\gamma\in \widehat{A}$  is automatically a representation of $A\rtimes \Z_2^T$. It is easy to see that these simple objects are of real type. In the dimension formula~\eqref{eq_dim_formula} these simple objects contribute $\sum_{\gamma \in \widehat{A}}2 \frac{1^2}{1}=|A\rtimes \Z_2^T|$. Therefore these are all the irreducible representations of $A\rtimes \Z_2^T$. Hence the simples of $\Rep(A\rtimes \Z_2^T)$ are labeled by $\gamma \in \widehat{A}$, with fusion rule identical to that of $\widehat{A}$. In particular, all simples of $\Rep(A\rtimes \Z_2^T)$ are {\bf invertible}. This also implies we have equivalence of real fusion categories $\Rep(A\rtimes \Z_2^T)=\Vec_{\R, \widehat{A}}$, where the RHS is the category of $\widehat{A}$-graded \emph{real} vector spaces.
    
\begin{remark}\label{rmk_Z2T_action_semidirect}
    We discussed earlier how $\Rep(G^T)$ can be understood as $\Rep(G_0)$ equipped with a $\Z_2^T$-action. In the case of $\Rep(A\rtimes \Z_2^T)$, the $\Z_2^T$-action on $\Rep(A)=\widehat{A}$ is very simple: On objects $T: \gamma\mapsto \gamma$. Since $\Rep(A)$ has trivial associators, this object-level action can be promoted to an $\Z_2^T$-action on the fusion category $\Rep(A)$. Since the action is trivial on objects, the fixed points are nothing but the objects in $\Rep(A)$. This explains why the fusion rule of $\Rep(A\rtimes \Z_2^T)$ is identical to that of $\Rep(A)$.
\end{remark}
\begin{example}
    Let us consider a concrete example, let 
    \be 
    D_4^{T, I}=\Z_4\rtimes \Z_2^T \,.
    \ee 
    As a plain group $D_4^{T, I}$ is the same as the dihedral group $D_4$. The extra superscript $I$ is meant to label its anti-linear structure, since there can be multiple homomorphisms from $D_4$ to $\Z_2^T$. Here the anti-linear structure is clear from the form $\Z_4\rtimes\Z_2^T$. Then by the general discussion above simples in $\Rep(D_4^{T,I})$ are labeled as $\lid,F, F^2,F^3$, with abelian fusion rule 
    \be F^i\otimes F^j=F^{i+j\,\text{mod}\, 4}\,.\ee
This is in contrast to the unitary case where this is fusion category, with non-invertible fusion. 
\end{example}

\begin{example}
We now consider the group $D_4$ with another anti-linear structure. Let 
\be 
D_4^{T,II}:=(\Z_2\times \Z_2)\rtimes \Z_2^T
\ee 
with $T$ acting on $\Z_2\times \Z_2$ as exchanging the two $\Z_2$ summands: $T(1,0)T=(0,1)$.     As a plain group $D_4^{T,II}$ is the same as the dihedral group $D_4$, but its anti-linear structure is different from $D_4^{T,I}$ introduced above. This results in $\Rep(D_4^{T,II)}$ behaving differently.  As discussed before we can understand $\Rep(D_4^{T, II})$ from $\Rep(\Z_2\times \Z_2)$ and a $\Z_2^T$-action. Denote by $\gamma_1, \gamma_2$ the generators of $\Rep(\Z_2\times \Z_2)\simeq \widehat{\Z_2}\times \widehat{\Z_2}$. Then time-reversal acts as $\gamma_1\leftrightarrow \gamma_2$. By Galois descend, an object of $\Rep(D_4^{T,II})$ is an object of $\Rep(\Z_2\times \Z_2)$ that is invariant under this time-reversal action. We have two obvious choices. One is $F:=\gamma_1\gamma_2$ which is a singlet hence a real object in $\Rep(D_4^{T,II})$, the other is $E:=\gamma_1\oplus \gamma_2$ which is a doublet hence a complex object in $\Rep(D_4^{T,II})$. These two simples together with the unit contribute $2(\frac{1^2}{1}+\frac{1^2}{1}+\frac{2^2}{2})=8$ in the dimensional formula~\eqref{eq_dim_formula}. Hence they are all the simples.  The fusion rule can be worked out using the fusion rule of $\Rep(\Z_2\times \Z_2)$.
\begin{align}
    &E\otimes E=2\oplus 2 F\nonumber\\
     &E\otimes F=F\otimes E=E\nonumber\\
     &F\otimes F=\lid\label{eq_fusion_rule_D4II}
\end{align}
Notice that this fusion rule is identical  to that of $\Rep(\Z_4\times \Z_2^T)$. However the real fusion categories $\Rep(\Z_4\times \Z_2^T)$,~$\Rep(D_4^{T,II})$ can not be equivalent as they complexify to different complex fusion categories. The difference lies in the associators. The fusion rule~\eqref{eq_fusion_rule_D4II} implies that the real fusion category $\Rep(D_4^{T, II})$ is also a real Tambara-Yamagami category $\TY_\R(\Z_2)$. One should not be surprised that there are multiple real Tambara-Yamagami categories with the same fusion rule, as this phenomena is also common for complex Tambara-Yamagami categories. According to the classification in~\cite{plavnik2024TYreal}, there are in total four $\R$-real fusion categories with the fusion rule~\eqref{eq_fusion_rule_D4II}, $\Rep(\Z_4\times \Z_2^T)$ are $\Rep(D_4^{T,II})$ two of them. The full associator data can be found in~\cite{plavnik2024TYreal}.
\end{example}


\subsubsection{Examples: $\Rep(G\times \Z_2^T)$}

    We discuss more the structure of real fusion categories of the form 
    \be 
    \Rep(G\times \Z_2^T)\,.
    \ee 
    According to Galois descend, $\Rep(G\times \Z_2^T)$ can be understood as $\Rep(G)$ equipped with a $\Z_2^T$-action. Here the action is straightfoward: $T: R\mapsto R^*$ where $R^*$ is the complex conjugate of $R$. If $R$ is not invariant under $T$, meaning that $R^*$ is not equivalent to $R$, then the invariant combination is $R\oplus R^*$, which descends to a complex simple object in $\Rep(G\times \Z_2^T)$. On the other hand if $R\simeq R^*$, then $R$ descends to either a real object or a quaternion object in $\Rep(G\times \Z_2^T)$. Which case is true can be determined by the Frobenius–Schur indicator, defined as 
\begin{align}
    \nu(R):=\frac{1}{|G|}\sum_{g\in G}\chi(g^2)\in\{-1,0,1\}
\end{align}
$\chi(g):=\Tr(\rho(g))$ is the character of the representation $(\rho, V)$. A standard result of representation theory is that the fate of a simple $R\in \Rep(G)$ in $\Rep(G\times \Z_2^T)$ is completely determined by its Frobenius–Schur indicator. 
\begin{itemize}
    \item If $\nu(R)=1$, then $R\simeq R^*$ and $R$ is well-defined over $\R$. This means we can choose a basis such that $R(g)$ are real matrices. Then $R$ becomes a real object in $\Rep(G\times \Z_2^T)$.
    \item If $\nu(R)=0$, then $R\neq R^*$. The sum $R\oplus R^*$ is a complex object in $\Rep(G\times \Z_2^T)$.
    \item If $\nu(R)=-1$, then $R\simeq R^*$ but $R$ is not well-defined over $\R$. This means there is no basis in which $R$ is written as real matrices. In this case although $R\simeq R^*$, to make a $\Z_2^T$-invariant object we still need to make a direct sum $R\oplus R^*$, which is a quaternion simple in $\Rep(G\times \Z_2^T)$.
\end{itemize}
Let us check the validity of the dimension formula~\eqref{eq_dim_formula}. A simple $R\in \Rep(G)$ with $\nu(R)=1$ contributes $2\frac{d_R^2}{1}=2d_R^2$; a simple $R\in \Rep(G)$ with $\nu(R)=0$ gives rise to the complex simple $R\oplus R^*\in \Rep(G\times \Z_2^T)$, which contributes $2\frac{(2d_R)^2}{2}=4d_R^2$; a simple $R\in \Rep(G)$ with $\nu(R)=-1$ gives rise to the quaternion simple $R\oplus R^*\in \Rep(G\times \Z_2^T)$, which contributes $2\frac{(2d_R)^2}{4}=2d_R^2$. In total the sum is
\begin{align}
    &\sum_{R\in \Rep(G),\nu(R)=1} 2d_R^2+\frac{1}{2}\sum_{ \nu(R)=0}4d_R^2+\sum_{\nu(R)=-1} 2d_R^2\nonumber\\
    &=2\sum_R d_R^2=2|G|=|G\times \Z_2^T|
\end{align}
where the $\frac{1}{2}$ in front of the second term is due to over-counting since $R\oplus R^*=R^*\oplus R$. 

\begin{example}
    We consider a concrete example. Let $Q_8:=\{\pm 1, \pm i,\pm j,\pm k\}\subset \bH$ be the quaternion group. We compute the structure of $\Rep(Q_8\times \Z_2^T)$. We start with the representations of $Q_8$. There are four 1-dimensional representations $\lid, I, J,K$,   determined by the requirement that $i$, $j$, and
$k$ respectively act trivially. These four 1-dimensional representations become four real simples in $\Rep(Q_8\times \Z_2^T)$. There is one more 2-dimensional irreducible representation of $Q_8$, denoted as $Q$. It can be written as 
\begin{align*}
    Q(-1)=\begin{pmatrix}
        -1&0\\
        0&-1
    \end{pmatrix},~Q(i)=\begin{pmatrix}
        i&0\\
        0& -i
    \end{pmatrix}, Q(j)=\begin{pmatrix}
        0&1\\
        -1&0
    \end{pmatrix}
\end{align*}
 One calculates $\nu(Q)=-1$, therefore $Q$ descends to a quaternion simple in $\Rep(Q_8\times \Z_2^T)$.
    The fusion rules are 
    \begin{align}
    &I\otimes Q=J\otimes Q=K\otimes Q=Q\otimes I=Q\otimes J=Q\otimes K=Q\nonumber\\
        &Q\otimes Q=4(\lid\oplus I\oplus J\oplus K)\nonumber\\
        & I\otimes I=J\otimes J=K\otimes K=\lid\nonumber\\
        & I\otimes J=K
    \end{align}
   This fusion rule resembles that of a complex Tambara-Yamagami category $\TY(\Z_2\times \Z_2)$. Indeed $\Rep(Q_8\times \Z_2^T)$ is an {\bf $\R$-real  $\Z_2\times \Z_2$-Tambara-Yamagami category}~\cite{plavnik2024TYreal}.
\end{example}
\subsubsection{Module categories over $\Rep(G^T)$}\label{sec_module_RepGT}
We now describe and classify module categories over a general $\Rep(G^T)$-category. Recall that an indecomposable module category over a usual complex fusion  $\Rep(G)$ category is labeled by a pair $(K, \psi)$ where $K<G$ is a subgroup and $\psi\in H^2(K,U(1))$ is a 2-cocycle. The corresponding module category is 
\be 
\Rep(K)^\psi:=\Mod(\C[K]^\psi)
\ee
that is, the category of projective representations of $K$ with projective phase $\psi$.  The indecomposable modules over $\Rep(G^T)$ has essentially the same structure. An indecomposable module is labeled by a pair $(K^T, \psi)$ where $K^T<G^T$ and $\psi$ is a twisted 2-cocycle of $K^T$. The corresponding module category is 
\be
\Rep(K^T)^\psi:=\Mod(\C[K^T]^\psi)
\ee
Later we will show that $\Rep(G^T)$ is Morita equivalent to $\Vec_{G^T}$, hence it is not surprising that modules over $\Rep(G^T)$ have the same labeling as those over $\Vec_{G^T}$.The Morita equivalence also implies that the module categories described above are a complete list. 

\subsection{Non-invertible Time-Reversal Symmetry}
\label{sec:TY}

So far we have only discussed invertible Galois-real fusion categories. The categories $\Vec_{G^T}^\omega$ describe all the invertible anti-linear symmetries in (1+1)d. For the purpose of studying generalized anti-linear symmetries, it is desirable to construct Galois-real fusion categories that contain non-invertible simples. One such construction is the Galois-real Tambara-Yamagami categories, which have been completely classified in~\cite{plavnik2024TYreal}. We provide a summary of their structure here and discuss some examples.

A Galois-real Tambara-Yamagami category 
\be 
{\TY_{\overline{\C}}(A)}
\ee 
has underlying category $\Vec_A\oplus \Vec$, where $A$ is an abelian group, and $\Vec$ is generated by a simple $m$. Denote the simples of $\Vec_A$ as $\delta_a,~a\in A$, the fusion rule is 
\be 
\ba
    \delta_a\otimes \delta_b &=\delta_{ab}\cr 
    \delta_a\otimes m&=m \cr 
    m\otimes m&=\bigoplus_{a\in A} \delta_a \,,
\ea
\ee
which is identical to that of a complex Tambara-Yamagami category $\TY(A)$. However we declare $m$ to be anti-linear. This affects the classification of Galois-real Tambara-Yamagami categories. Recall that a complex Tambara-Yamagami category $\TY(A)$ is determined by a non-degenerate symmetric bicharacter $q$ of $A$, and a Frobenius-Schur indicator $\chi=\pm$. On the other hand, according to the classification in~\cite{plavnik2024TYreal}, a Galois-real Tambara-Yamagami category ${\TY_{\overline{\C}}(A)}$  is determined solely by a non-degenerate \emph{skew-symmetric} bicharacter $q$ of $A$, where skew-symmetric means $q(a,b)=q(b,a)^*$. Later we will provide a SymTFT perspective on this classification. Let us look at some examples. 
\begin{example}\label{eg_TY}
    Take $A=\Z_2$, there is a unique skew-symmetric bicharacter of $\Z_2$, therefore there is a unique $\cTY(\Z_2)$. It has three simples $\lid, f, m$ with fusion rule
    \be
    f\otimes f=\lid,\ f\otimes m=m\otimes f=m,\ m\otimes m=\lid \oplus f.
    \ee
    Other than the fact that $m$ is anti-linear, the category $\cTY(\Z_2)$ is almost the same as a $\TY(\Z_2)$. However the classifications are different: there are two $\TY(\Z_2)$ categories but only one $\cTY(\Z_2)$ category. 
\end{example}

\begin{remark} {\bf Critical Ising Model.}
    It is well-known that the fusion category $\Ising=\TY(\Z_2,+)$ appears at the (1+1)d Ising critical point. One may wonder whether $\cTY(\Z_2)$ appears at some critical point. Indeed if one enriches the Ising critical point with a $\Z_2^T$-symmetry, then the full symmetry of the critical point enlarges to $\TY(\Z_2,+)\boxtimes \Vec_{\Z_2^T}$. Take the fusion sub-category generated by $\lid, f, \sigma\otimes T$, one obtains the complex-real fusion category $\cTY(\Z_2)$. This means the category  $\cTY(\Z_2)$ can be viewed as a sub-symmetry of the full symmetry of a $\Z_2^T$-symmetric Ising critical point. The non-invertible anti-linear generator $m$ can be viewed as the combination of the non-invertible Kramers-Wannier symmetry $\sigma$ and the invertible time-reversal symmetry $T$. 
\end{remark}

\begin{example}
    Take $A=\Z_4\times \Z_4=\langle x,y\rangle$, a skew-symmetric bicharacter $q$ has been given in~\cite{plavnik2024TYreal}:
    \be
    q(x,x)=1,~q(x,y)=i,~q(y,y)=-1 \,.
    \ee
    Thus there is a Galois-real Tambara-Yamagami category $\cTY(\Z_4,q)$. We will revisit this category when we discuss its SymTFT. 
\end{example}

The Galois-real Tambara-Yamagami categories contain non-invertible anti-linear simple objects. This means they represent non-invertible anti-linear symmetries. It seems some Galois-real Tambara-Yamagami categories can be embedded into Galois-real fusion categories of the form $\cc\rtimes \Z_2^T$ where $\cc$ is a complex fusion category. In this case one can view the Galois-real Tambara-Yamagami category as a sub-symmetry of a system with linear symmetry $\cc$ and time-reversal $\Z_2^T$. We do not know if all Galois-real Tambara-Yamagami categories can be embedded this way.

\subsection{The $\cC\rtimes \Z_2^T$ Categories and Time-Reversal SSB}\label{sec_semi_product}

Consider a gapped phase/TFT that is already enriched with a finite internal generalized symmetry $\cc$ (a fusion category), a naive idea of adding time-reversal to the system is to form a product $\cc\times \Z_2^T$. However there is in fact not a well-defined notion of taking product with $\Z_2^T$. This is because time-reversal \textbf{can not} act trivially on an internal symmetry category, since it at least needs to act as complex conjugation on scalars. Therefore ``adding time-reversal" can at best be done by taking a semi-direct product $\cc\rtimes \Z_2^T$, which means at the level of simple objects and fusion ring it is a semi-direct product of $\cc$ and $\Z_2^T$. In general there is no canonical way of forming the semi-direct product, as we need to specify how time-reversal acts on the lines  of $\cc$. 

Assume that an action of $\Z_2^T$ on $\cc$ is given, where by an action we mean a categorical action by $\C$-anti-linear monoidal functor:
\begin{itemize}
    \item There is a $\C$-\textbf{anti-linear} monoidal functor 
    \be
    T: \cc\to \cc.
    \ee
    \item Such that there is a monoidal natural isomorphism
    \be
    \alpha: T\circ T\Rightarrow \id_\cc
    \ee
    \item Such that $\alpha$ satisfies an associativity condition:
    \be
        (T\circ T\circ T\xrightarrow{\alpha T} T)= (T\circ T\circ T\xrightarrow{T(\alpha)} T)
    \ee
\end{itemize}

Then we can build a Galois-real fusion category $\cc\rtimes \Z_2^T$, with the following structure:
\begin{itemize}
    \item A simple object is a pair $(x,s)$ where $x\in \cc$ is a simple and $s\in \Z_2^T$. 
    \item There is no non-zero morphism between $(x,s),~(y,t)$ if $s\neq t$, and 
    \be
    \Hom((x,s),(y,s))=\Hom_\cc(x,y).
    \ee
    Composition of morphisms is given by composition in $\cc$.
    \item The fusion rule is 
    \be
    (x,s)\otimes (y,t)=(x\otimes s(y), st)
    \ee
    where $s(-)$ stands for the $s\in\Z_2^T$-action on objects.
    \item The associator is defined in the obvious way.
\end{itemize}
Because the action of $T\in \Z_2^T$ is $\C$-anti-linear on $\cc$, the category $\cc\rtimes \Z_2^T$ is canonically a Galois-real fusion category. This means the subcategory spanned by objects $(x,1)$ form a fusion category equivalent to $\cc$, and the objects of the form $(x,T)$ are Galois-nontrivial/anti-linear.

\begin{example}
    The Ising model~\eqref{eq_Ising} admits another time-reversal symmetry $U'_T:=K$ that simply acts as complex conjugation in the $Z_i$-basis. Then with respect to this time-reversal action the model is time-reversal symmetric everywhere in the phase diagram. This implies the critical point $(J=1/2)$ is also time-reversal symmetric. The critical point is well-known to enjoy a non-invertible Ising category symmetry $\Ising$. Indeed $\Ising$ is invariant under a time-reversal action that is identity on objects. Then apply the semi-direct product construction we obtain a Galois-real fusion category $\Ising\rtimes \Z_2^T$ which can be thought of as the full symmetry of a $\Z_2^T$-symmetric Ising critical point. The non-invertible time-reversal symmetry $\TY_{\ol{\C}}(\Z_2)$ can be viewed as a sub-category of $\Ising\rtimes \Z_2^T$ generated by $1, f, \sigma T$ where $\sigma$ is the non-invertible simple of $\Ising$ and $T\in \Z_2^T$.
\end{example}

We note that it is not always possible to ``add time-reversal" to an internal generalized symmetry. The above construction requires the existence of a categorical $\Z_2^T$-action on the internal generalized symmetry $\cc$, which may not always exist. For instance, the existence of an anti-linear action requires $\cc$ to be equivalent to its complex conjugation $\cc^*$. Consider for instance the fusion category $\Vec_{\Z_3}^\omega$ with a nontrivial cocycle $\omega\in H^3(\Z_3,U(1))=\Z_3$, then $(\Vec_{\Z_3}^\omega)^*\simeq \Vec_{\Z_3}^{\omega^{-1}}$, which is \textbf{not} equivalent to $\Vec_{\Z_3}^\omega$. This means a system with $\Vec_{\Z_3}^\omega$ symmetry can not be further enriched with $\Z_2^T$-symmetry. 

Just like a real vector space can be viewed as a complex one equipped with a $\Z_2^T$-action, an $\R$-real fusion category can be viewed as a complex fusion category equipped with a $\Z_2^T$-action. Hence the data needed to construct a semi-direct product category $\cc\rtimes \Z_2^T$ is the same as an $\R$-real fusion category. More concretely, one can construct a semi-direct product Galois-real fusion category as follows. 
\begin{itemize}
    \item Start with an $\R$-real fusion category $\cD_\R$. 
    \item The complexification $\cc:=\cD_\R\boxtimes_\R\Vec_\C$ is a complex fusion category equipped with a canonical $\Z_2^T$-action(complex conjugation on the $\Vec_\C$ factor). 
    \item Use the canonical $\Z_2^T$-action on $\cc$, construct the semi-direct product $\cc\rtimes \Z_2^T$.
\end{itemize}
In fact the constructed Galois-real fusion category $\cc\rtimes\Z_2^T$ is Morita equivalent to the $\R$-real one $\cD_\R$ we started with. We provide the proof in~\Cref{app_semi_direct}. The physical interpretation of this Morita equivalence is as follows. 

Consider a $(1+1)$d system/phase whose topological defects form a complex fusion category $\cc$. The phase could be either gapless(say a CFT) or gapped(say an SPT protected by some internal generalized symmetry). Here $\cc$ is not interpreted as a symmetry category but the category of IR symmetric defects in the phase, such as topological defects in a CFT, or topological defects in a gapped phases protected by some internal symmetry. Now assume that the system is time-reversal symmetric, then there must  exist a consistent categorical $\Z_2^T$-action on the complex fusion category $\cc$. Then the category of \textbf{time-reversal symmetric} defects would be made of objects in $\cc$ that are invariant under the $\Z_2^T$-action up to isomorphism. The mathematical theory of Galois descend~\cite{etingof2012descentformstensorcategories} says that taking the (homotopy) fixed points of $\cc$ under a $\Z_2^T$-action produces an $\R$-real fusion category $\cD_\R$, whose complexification recovers $\cc$. Now consider breaking $\Z_2^T$ spontaneously in this system(a domain wall between the two vacua), we arrive at a new phase whose category of symmetric defects is exactly $\cc\rtimes \Z_2^T$. This because when time-reversal is broken, the $T\in \Z_2^T$-action creates a nontrivial IR defect in the system that fuses with defects in $\cc$ according to the $\Z_2^T$-action on $\cc$. The process of going from an $\R$-real fusion category $\cD_\R$ to the Morita equivalent Galois-real fusion category $\cc\rtimes \Z_2^T$ describes the spontaneously breaking of time-reversal. 

A corollary of this Morita equivalence is that an $\R$-real fusion category is always Morita equivalent to a Galois-real one(by the semi-direct product construction). However the converse is not true in general. 

\begin{example}\label{eg_GT_semi_prod}
    Let $G^T$ be a group-like anti-linear symmetry with linear part being $G_0\le G^T$. Then $\Rep(G^T)$ always complexifies to $\Rep(G_0)$. Hence for every extension 
    \be
    1\to G_0\to G^T\to \Z_2^T \to 1
    \ee
    we can build a semi-direct product Galois-real fusion category $\Rep(G_0)\rtimes \Z_2^T$. The Morita equivalence $\Rep(G^T)\simeq_{\text{Morita}} \Rep(G_0)\rtimes \Z_2^T$ can be understood as follows. Start with a $G_0$-SPT phase, whose category of symmetric defects is $\Rep(G_0)$. We can further enrich the system with time-reversal by assigning a consistent $\Z_2^T$-action on the $G_0$-charges. Then the system can be equivalently viewed as being enriched with an extended $G^T$-symmetry, and the category of symmetric defects becomes $\Rep(G^T)$. Then we can spontaneously break the symmetry down to $G_0<G^T$, which means the quotient $G^T/G_0=\Z_2^T$ is broken. The category of symmetric defects in this $\Z_2^T$-broken phase is then $\Rep(G_0)\rtimes \Z_2^T$.
\end{example}

\section{Gapped Phases with Anti-Linear Symmetries}
\label{sec_GappedPhases}

With the physical and mathematical preparation in place, we now present a categorical framework for $(1+1)$d gapped phases with generalized anti-linear symmetries. 

\subsection{The Setup}\label{sec_setup}
Let us briefly recall the theory of $(1+1)$d gapped phases with internal categorical symmetry, or equivalently, the theory of $(1+1)$d extended TFTs with fusion categorical symmetry. See e.g.~\cite{Huang:2021zvu, Thorngren:2019fuc, Thorngren:2021yso}, for a more in-depth discussion and~\cite{Inamura:2021szw, Bhardwaj:2024wlr, Bhardwaj:2024kvy} for systematic lattice constructions. We will take the following definition.
\begin{definition} 
    \begin{itemize}
        \item A finite internal symmetry in $(1+1)$d is a fusion category. 
        \item Let $\cS$ be a finite internal symmetry, a gapped phase enriched with $\cS$ is a module category $\cM$ over $\cS$. 
    \end{itemize}
\end{definition}

From the topological field theory(TFT) perspective, a TFT with symmetry $\cS$ is one that can be coupled to ``$\cS$-background fields", where by a ``$\cS$-background field" we mean a mesh of $\cS$-defect network  on the underlying manifold. The module category $\cM$ has both a bulk interpretation and a boundary interpretation. The bulk interpretation is as follows: The module structure on $\cM$ may be written as a monoidal functor $\cS\to \End(\cM)$, which specifies how the symmetry defects map to the defects of the underlying TFT. This may be regarded as a UV-to-IR map, where the symmetry $\cS$ is preserved along the RG flow, but the IR fixed point has emergent symmetry $\End(\cM)$ that can be either larger or smaller than the UV symmetry $\cS$. The module structure $\cS\to \End(\cM)$ determines how the symmetry defects flow to the IR defects. From the boundary perspective, the underlying category of the module category is the category of boundary conditions of the underlying 2D TFT, and the module structure specifies how the symmetry generators act on the boundary conditions. 

In order to include anti-linear symmetries into this categorical framework, we must first identify the categorical model for anti-linear symmetries themselves. Once this is done, we then further specify how the abstract symmetry interact with the underlying gapped phase/TFT. 

To motivate the categorical description of anti-linear symmetries, let us first consider the simplest case of time-reversal symmetry $\Z_2^T$.  A gapped phase with time-reversal symmetry is described in the infrared by an unoriented TFT, namely a TFT defined on unoriented manifolds. Let $M$ be an  unoriented manifold on which the unoriented TFT lives, and $w_1\in H^1(M,\Z_2)$ be the 1st Stiefel-Whitney class of $M$. A codimension-1 locus $\Gamma$ Poincar\'e dual to $w_1$ can be viewed as the time-reversal defect. Namely, the unoriented theory on $M$ can be viewed as obtained from an oriented theory on $M\backslash \Gamma$ glued along a time-reversal defect $T$ located at $\Gamma$. The trivial defect $\lid$ and the time-reversal defect $T$ together generate the symmetry category $\Vec_{\Z_2^T}$. The anti-linear natural of the object $T\in \Vec_{\Z_2^T}$ encodes the fact that a $T$-defect implements local orientation/time-reversal.

More generally, a gapped phase with $G^T$-symmetry is described by a TFT defined on manifolds with $G^T$-background fields. A $G^T$-background can be viewed as a mesh of $G^T$-defect network, such that when we apply the map $s: G^T\to \Z_2^T$, the mesh reduces to a cycle that is Poincar\'e dual to $w_1$. I.e., if $A_G\in H^1(M,G^T)$ is the $G^T$-gauge field, then $s(A_G)=w_1$. 
Therefore the $G^T$-defects divide naturally into two types: orientation preserving ones and orientation reversing ones. Together they generate the symmetry category $\Vec_{G^T}$. 

Even more generally, in the spirit of ``symmetry=topological defect", a general (non-invertible) anti-linear symmetry should consist of defects that are either orientation-preserving or orientation-reversing. Together they naturally form a Galois-real fusion category $\cc_1\oplus \cc_T$. The defects in $\cc_T$ should really be thought of as domain walls between an oriented theory $\cT$ and its orientation reversal $\cT^{\mathrm{rev}}$. 
A field theory with symmetry $\cc_1\oplus \cc_T$ is one that can be defined on manifolds with mesh of defect network valued in $\cc_1\oplus \cc_T$, with the condition that if we apply the Galois indicator map $s:  \cc_1\oplus \cc_T\to \Z_2^T$, the defect network reduces to a cycle that is Poincar\'e dual to $w_1$ of the underlying manifold. See Fig~\Cref{fig_anti-linear_defect}.
\begin{figure}
    \centering
\begin{tikzpicture}[scale=0.5,x=1cm,y=1cm,line cap=round,line join=round]

  \coordinate (A) at (-3.0,-0.2);
  \coordinate (B) at (2.1,0.55);
  \coordinate (C) at (3.75,-1.75);

  \draw[blue!80!black, line width=1.1pt] (A) -- (-4.9,1.95);
  \draw[blue!80!black, line width=1.1pt] (B) -- (3.75,2.75);
  \draw[blue!80!black, line width=1.1pt] (C) -- (2.55,-3.45);

  \draw[red!90!black, line width=1.1pt] (-3.4,-3.0) -- (A);
  \draw[red!90!black, line width=1.1pt] (A) -- (B);
  \draw[red!90!black, line width=1.1pt] (B) -- (C);
  \draw[red!90!black, line width=1.1pt] (C) -- (5.7,-1.55);

  \node at (0.0,1.75) {\Large $\mathcal{T}^*$};
  \node at (-4.45,-0.65) {\Large  $\mathcal{T}^*$};
  \node at (4.95,0.05) {\Large  $\mathcal{T}^*$};
  \node at (0.9,-1.65) {\Large  $\mathcal{T}$};
  \node at (4.65,-2.9) {\Large  $\mathcal{T}$};
\end{tikzpicture}
    \caption{A defect network for a generalized anti-linear symmetry. The network is a diagram valued in the symmetry category $\cc_1\oplus \cc_T$. The lines in blue/red are liner/anti-linear defects. The theories on different sides of an anti-linear defect are complex conjugate of each other. 
    If one traces along the anti-linear defects, one gets a cycle (red line) Poincar\'e dual to $w_1(M)$.}
    \label{fig_anti-linear_defect}
\end{figure}

This concludes our argument that a generalized anti-linear symmetry should be modeled as a Galois-real fusion category $\cc_1\oplus \cc_T$. We make this into a definition. 
\begin{definition}
    A finite, potentially anti-linear, generalized symmetry in $(1+1)$d is a Galois-real fusion category $\cC=\cC_1\oplus \cC_T$. The component $\cC_1$ is the fusion category of linear symmetry generators, and $\cC_T$ is the category of anti-linear symmetry generators.
\end{definition}

Next we argue that a 2D gapped phase/TFT enriched with symmetry $\cc_1\oplus \cc_T$ should be defined as a module category over $\cc_1\oplus \cc_T$. We will provide a bulk argument and a boundary argument, following the bulk and boundary interpretation of module categories. From the bulk perspective, the symmetry defects in $\cc_1\oplus \cc_T$ should flow under RG to defects of the underlying gapped phase.  Let $\cM\in 2\Vec$  be a finite semi-simple category defining a 2D gapped phase. Then a linear defect $x\in \cc_1$ flows to a defect within the gapped phase defined by $\cM$, i.e., an object in $\End(\cM)$. If $y\in \cc_T$ is an anti-linear symmetry defect, then it should flow to a domain wall between the phase defined by $\cM$ and its complex conjugate defined by $\cM^*$, i.e., an object in $\mathrm{Func}(\cM, \cM^*)$. Together $\End(\cM)\oplus \mathrm{Func}(\cM, \cM^*)$ is nothing but $\End_\R(\cM)$ the category of $\R$-linear endofunctors of $\cM$. This is because any  $\R$-linear endofunctor of $\cM$ can be written as the direct sum of a $\C$-linear one and a $\C$-anti-linear one. We see that the action of the symmetry on the underlying gapped phase is described by a monoidal functor 
\be
\cc_1\oplus \cc_T \to \End_\R(\cM)
\ee
which is the same as a module category structure on $\cM$. 

We now discuss the boundary perspective. Again let $\cM\in 2\Vec$ define a 2D gapped phase. Then $\cM$ itself can be regarded as the category of boundary conditions for the gapped phase. A symmetry generator $a\in \cc_1\oplus \cc_T$ acts on the boundary as a boundary condition changing operator:
\begin{center} 
\begin{tikzpicture}[scale=0.8, line cap=round, line join=round]

  \colorlet{wallshade}{gray!20}
  \colorlet{scalarline}{orange!85!black}

  \fill[wallshade] (0,-2) rectangle (1,2);
  \draw[very thick] (0,-2) -- (0,2);

  \node[right] at (0, 1.5) {$m$};
  \node[right] at (0,-1.5) {$n$};

  \draw[myblue, very thick] (-2.5,0) -- (0,0);
  \node[myblue, below left] at (-2.45,0) {$a$};

  \fill (0,0) circle (2.6pt);

  \coordinate (zb)  at (0,-1);
  \coordinate (zst) at (0,1);

  \fill (zb)  circle (2.2pt);
  \fill (zst) circle (2.2pt);

  \draw[scalarline, dashed, very thick]
    (zb) .. controls (-1.18,-1.25) and (-1.18,+1.25) .. (zst);

  \node[scalarline, below left]      at (zb)  {$z$};
  \node[scalarline,above left] at (zst) {$z^{*}$};
 \node[right] at (1.5,0) {$\cO \in \Hom_{\cM}(a\rhd n,m)$};
 \node[right] at (0,0) {$\cO$};
\end{tikzpicture}
\end{center}
The action of $\cc=\cc_1\oplus \cc_T$ on the category of boundary conditions $\cM$ provides $\cM$ with a module category structure: $\cc\times \cM\to \cM$ that is only $\R$-linear: If $a$ is an anti-linear line, then a scalar on the boundary gets complex conjugated when it passes the junction. This means the functor 

\be
a\rhd -: \cM\to \cM
\ee
is $\C$-anti-linear when $a\in \cc_T$ is anti-linear.  This concludes our argument that a gapped phase with symmetry $\cc=\cc_1\oplus \cc_T$ should be defined a module category over $\cc$. We make this into a definition.

\begin{definition}\label{def_gapped_phase}
    Let $\cC=\cc_1\oplus \cc_T$ be a finite anti-linear symmetry.
    \begin{itemize}
        \item A $(1+1)$d gapped phase enriched with $\cC$ is a module category $\cM$ over $\cC$.
        \item The underlying gapped phase(with no symmetry) is $\cM\in 2\Vec$.
        \item The category of symmetric defects in the phase $\cM$ is $\End_\cC(\cM)$.
        \item The category of domain walls between two phases $\cM,\cN$ is $\mathrm{Func}_\cC(\cM,\cN)$.
    \end{itemize}
\end{definition}

Notice that although in the definition of module category we only require the underlying category to be $\R$-linear, the fact that the symmetry category $\cC$ is \emph{Galois-real} implies that there is an inclusion $\Vec\subset \cC$, hence any $\cC$-module is automatically a $\Vec$-module thus is automatically $\C$-linear. It is only the module structure on $\cM$ that is not $\C$-linear. This is why $\cM\in 2\Vec$ makes sense. 

The following two examples are two straightfoward families of gapped phases that can be defined for any symmetry category.

\begin{example}
    [SPT] An SPT for $\cc_1\oplus \cc_T$ is a module category $\cM$ such that $\cM\simeq \Vec$ as a plain category. This means the underlying gapped phase is trivial. Notice that the correct notion of ``fiber functor" for a Galois-real fusion category is now a monoidal functor $\cc_1\oplus \cc_T\to \End_\R(\Vec)=\Vec_{\Z_2^T}$.
\end{example}

\begin{example}
    [Complete SSB] For any symmetry $\cC$, there is a regular module category $\cC$. The category of symmetric defects in the phase defined by $\cC$ is $\End_\cC(\cC)\simeq \cC$. Hence the regular module category $\cC$ defines the complete $\cC$-SSB phase. The underlying gapped phase with no symmetry is $\End(\cC)$. Therefore the number of ground states is  the number of simple objects in the symmetry category $\cC$.  
\end{example}

For any symmetry category $\cc$, one can define a 2-category as follows. 
\begin{itemize}
    \item Objects are $(1+1)$d  gapped phases enriched with $\cc$. 
    \item 1-morphisms are domain walls between the gapped phases. 
    \item 2-morphisms are local operators on the domain walls.
\end{itemize}
The definition~\Cref{def_gapped_phase} can then be succinctly summarized as the statement that 
\begin{center}
    The 2-category of $(1+1)$d  gapped phases enriched with $\cc$ is $\Mod(\cc)$.
\end{center}
For instance, the 2-category of $(1+1)$d gapped phases with $\Z_2^T$ symmetry is $\Mod(\Vec_{\Z_2^T})\simeq 2\Vec_\R$ which has three simple objects: $\Vec_\R, \Vec_\C, \Vec_\bH$.
\subsection{Order Parameters and Drinfeld Center}\label{sec_op_center}
For an internal categorical symmetry $\cc$, the SymTFT is given by the Drinfeld center $\cZ(\cc)$, which can be thought of as the category of all local and non-local(string) order parameters of the symmetry $\cc$. For instance, when $\cc=\Vec_G$ is a group-like internal symmetry, then a simple object in $\cZ(\Vec_G)$ is labeled as $(\sigma, r)$ where $\sigma\subset G$ is a conjugacy class and $r\in \Rep(C_G(\sigma))$ is a representation of the centralizer of (some element in)$\sigma$. A pure charge $(e_G, r)\in \Rep(G)$ corresponds to a local order parameter of $G$, and a dyonic anyon $(\sigma ,r)$ corresponds to a string order parameter with flux sector $\sigma$ and charge decoration $r$. A $(1+1)$d gapped phase with symmetry $G$ is completely determined by the behavior of these local and non-local order parameters, which in the SymTFT language correspond to the condensation and confinement of anyons on the physical boundary. 

As we have discussed for time-reversal in~\Cref{sec_op_Z2T}, a pure time-reversal symmetry does not have a string order that can detect the Haldane phase. This is not special to time-reversal: any anti-linear symmetry generator does not have a string order parameter. We will argue that given this observation, the Drinfeld center $\cZ(\cc_1\oplus \cc_T)$ is still the correct category of all local and string order parameters of an anti-linear symmetry $\cc_1\oplus \cc_T$.  Drinfeld center is a notion that can be defined for any monoidal category, hence well-defined for real fusion categories, and in particular for Galois-real fusion categories. For a categorical anti-linear symmetry $\cc:=\cc_1\oplus \cc_T$, a simple object in $\cZ(\cc)$ is again given by a pair $(x,\gamma)$, where $x\in \cc$ and $\gamma: x\otimes y\simeq y\otimes x,~\forall y\in \cc$ is the half-braid. However, as shown in~\cite{sanford2025fusion}, the object $x$ is in fact strictly in $\cc_1$, otherwise there is no consistent half-braid $\gamma$. This is because, assuming a half-braid $\gamma$ does exists, then it gives natural isomorphism $x\otimes \lid \simeq \lid \otimes x$, which identifies the left and right actions by scalars, implying $x$ must be a linear(Galois-trivial) object $x\in \cc_1$. This precisely matches with the expectation that an anti-linear symmetry generator does not have a string order parameter: assume a string order parameter does exist, then it can be used to identify the action by scalars from two sides of the string order:
\begin{center}
\begin{tikzpicture}[x=1cm,y=1cm,line cap=round,line join=round]
    \definecolor{myorange}{RGB}{230,120,20}

    \coordinate (L) at (0,0);
    \coordinate (R) at (6,0);
    \coordinate (U) at (2.9,1.2);
    \coordinate (D) at (2.9,-1.2);

    \draw[line width=1.7pt] (L) -- (R);
    \fill (L) circle (0.14);

    \draw[draw=myorange, line width=1.05pt, dash pattern=on 0pt off 1.8pt]
        (U)
        .. controls (1.15,1.45) and (-0.28,0.92) .. (-0.28,0.12)
        .. controls (-0.28,-0.92) and (1.15,-1.45) .. (D);

    \fill[myorange] (U) circle (0.115);
    \fill[myorange] (D) circle (0.115);

    \node[right=0.20cm] at (U) {$z$};
    \node[right=0.20cm] at (D) {$z$};
    \node[right=0.42cm] at (R) {$x\in\mathcal{C}$};
\end{tikzpicture}
\end{center}
which implies $x$ is a linear symmetry generator. Hence the mathematical machinery of  Drinfeld center surprisingly still captures the correct physical picture. 

It is illuminating to consider the structure of $\cZ(\Vec_{G^T})$ for a general (non-anomalous) group-like symmetry $G^T$. A simple object in $\cZ(\Vec_{G^T})$ is a pair $(\sigma \subset G_0, r)$ where $\sigma$ is a conjugacy class strictly in the unitary subgroup $G_0$, and $r\in \Rep(C_G(\sigma)^T)$. We see that these simple objects still form the full list of local and non-local order parameters for the symmetry $G^T$. In particular, $\cZ(\Vec_{G^T})$ contains a copy of $\Rep(G^T)$, an $\R$-real fusion category.

Here we are not assuming any SymTFT interpretation of the Drinfeld center, simply viewing it as the category of 1D order parameters associated with $\cc$. The SymTFT interpretation will be discussed in~\Cref{sec_T_SymTFT}. However it is already clear from the discussion here that the Drinfeld center can not describe gapped phases that are not captured by string orders, such as the Haldane chain.
\subsection{Duality and Gauging}
One advantage of the categorical formulation of symmetries and phases is that it allows for straightfoward computation of duality and gauging. By that we mean that there is a well-developed Morita theory for (complex) fusion categories that capture exactly the physical operation of (generalized) gauging~\cite{Bhardwaj:2017xup}. 

\subsubsection{Gauging between Symmetry Categories}
Recall that for two complex fusion categories $\cc, \cD$, we say they are Morita equivalent if there is some algebra $A\in \cc$ such that $\cD\simeq \Bimod_\cc(A)$ are equivalent as fusion categories. In this case we say $\cD$ is obtained by condensing $A$ in $\cc$. Equivalently, $\cc, \cD$ are Morita equivalent if there is a $\cc$-module category $\cM$ such that $\cD\simeq \End_\cc(\cM)$. If $\cc, \cD$ are symmetry categories for two internal generalized symmetries, then being Morita equivalent has the consequence that the 2-categories of gapped phases with symmetry $\cc$ and $\cD$: $\Mod(\cc),~\Mod(\cD)$ are equivalent. This means we can relabel phases enriched with symmetry $\cc$ as phases enriched with symmetry $\cD$ in such a way that does not affect any physical observables. 

The Morita theory works well over $\R$ as well, see our discussion in ~\Cref{sec_module_cat}. Hence we say two anti-linear symmetry categories $\cc, \cD$ are dual to each other if they are Morita equivalent as real fusion categories. This means there exists some algebra $A\in \cc$ such that $\cD\simeq \Bimod_\cc(A)$. Notice that since both $\cc, \cD$ are Galois-real, the algebra $A$ is necessarily in $\cc_1$ according to~\Cref{th_gauging}. This is the mathematical manifestation of the fact that one cannot literally gauge time-reversal, or any anti-linear symmetry generator. The gauging always happens inside the linear part $\cc_1$. Then if $\cc, \cD$ are two dual anti-linear symmetries, the 2-category of gapped phases: $\Mod(\cc),~\Mod(\cD)$ are equivalent.

\subsubsection{Gauging between Gapped Phases}

Although the symmetry categories are always Galois-real, the category of symmetric defects in a gapped phase with symmetry $\cc$ can be either Galois-real or $\R$-real. For instance, with $\Z_2^T$-symmetry, the category of symmetric defects is $\Vec_\R$ in the $\Z_2^T$-SPT phases.  Given an anti-linear symmetry $\cc$, the category of symmetric defects in a gapped phase defined by a module category $\cM$ over $\cc$ is $\End_\cM(\cc)$. By definition, $\End_\cM(\cc)$ is Morita equivalent to $\cc$. Hence the categories of symmetric defects in all gapped phases with the same symmetry $\cc$ are Morita equivalent, and are Morita equivalent to the symmetry category $\cc$ itself. 

\subsection{$\Z_2^T$ Gapped Phases Revisited}
We discussed the three $\Z_2^T$ gapped phases in ~\Cref{sec_Z2TPhases} including their module category descriptions. One thing that was not discussed there is the underlying gapped phase. Let us confirm here that the three phases have the expected underlying gapped phase. 

The symmetric trivial and the Haldane phase are both described by module categories whose underlying category is $\Vec$. Hence the underlying gapped phase for these two phases is $\End(\Vec)=\Vec$, the trivial gapped phase with one ground state, as expected.  On the other hand the $\Z_2^T$-SSB phase is described by the regular module category $\Vec_{\Z_2^T}$. As a plain category it is equivalent to $\Vec\oplus \Vec$, hence the underlying gapped phase is $\End(\Vec\oplus \Vec)$ which is a gapped phase with two ground states. The two ground states are permuted by $\Z_2^T$-action:
\be
T\rhd \lid=T,~T\rhd T=\lid
\ee
confirming that the module category $\Vec_{\Z_2^T}$ describes the $\Z_2^T$-SSB phase.

\subsection{General Structure of $G^T$ Gapped Phases}\label{sec_GT_general}
We now consider $(1+1)$d gapped phases with a general group-like $G^T$-symmetry and with no anomaly. The symmetry category is $\Vec_{G^T}$ and $G^T$-gapped phases are module categories over $\Vec_{G^T}$. We have classified modules over $\Vec_{G^T}$ in~\Cref{sec_module_GT}. The indecomposable modules are denoted as $\cM(K^T, \omega)$ where $K^T<G^T$ and $\omega$ is a twisted 2-cocycle for $K^T$. It is clear that this matches with the standard physical classification of $(1+1)$d gapped phases with $G^T$-symmetry. The subgroup $K^T$ stands for the unbroken symmetry in the phase, and $\omega$ labels the $K^T$-SPT phase in any of the vacua. From the general discussion in~\Cref{sec_setup}, the number of ground states in a phase defined by $\cM(K^T, \omega)$ is the number of simple objects in $\cM(K^T, \omega)$, which is $|G^T/K^T|$. 

According to the general discussion in~\Cref{sec_setup}, the category of symmetric defects in the phase $(K^T,\omega)$ is given by the Morita dual real fusion category 
\be
\Bimod_{\Vec_{G^T}}(\C[K^T]^\omega).
\ee
Then by~\Cref{th_gauging}, this real fusion category is Galois-real exactly when $K^T\le G_0$ is completely unitary and $\R$-real when $K^T$ contains any anti-unitary element. This also means the Morita dual categories $\Bimod_{\Vec_{G^T}}(\C[K^T]^\omega)$ for $K^T\le G_0$ can be viewed as symmetries dual to $\Vec_{G^T}$. 

For two gapped $G^T$-phases $(K^T, \omega), (H^T, \psi)$, the category of domain walls between them is 
\begin{align}\label{eq_bimodule}
   &\mathrm{Func}_{\Vec_{G^T}}(\cM(K^T, \omega), \cM(H^T, \psi))\nonumber\\
   &=\Bimod_{\Vec_{G^T}}(\C[K^T]^\omega, \C[H^T]^\psi) \nonumber\\
   &=\bigoplus_{[g]\in K^T\backslash G^T/ H^T}\Rep^{\alpha_{g}}(K^T\cap g H^Tg^{-1})
\end{align}
where $\alpha_g$ is $\omega\cdot g(\psi^{-1})$ restricted to $K^T\cap g H^T g^{-1}$ and $g(\psi^{-1})(g_1,g_2)=\psi(gg_1g^{-1}, gg_2g^{-1})^{-1}$. The derivation of the above formula is completely analogous to the unitary $G$ case, except the various categories are now $\R$-linear.

In the remainder of this section we study several concrete examples of gapped phases with anti-linear symmetries.

\subsection{$\Z_4^T$-phases}
We now consider gapped phases with $\Z_4^T$-symmetry. Here $\Z_4^T=\{T|T^4=1\}$ and the generator $T$ is anti-linear. There are two motivations for considering this example: (a). This is an example that contains all the possibilities of a $G^T$-phase: symmetric trivial, symmetric SPT, partial SSB, and complete SSB. (b). Through examining this example, we will establish a duality of symmetries: 
\begin{theorem}\label{th_Z4T_dual}
    The symmetry categories $\Vec_{\Z_4^T}$, $\Vec_{\Z_2\times \Z_2^T}^\alpha$ are Morita equivalent. 
\end{theorem}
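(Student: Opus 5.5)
We will exhibit $\Vec_{\Z_2\times\Z_2^T}^\alpha$ as the Morita dual of $\Vec_{\Z_4^T}$ with respect to a suitable module category. The candidate is the partial SSB phase $\cM(\Z_2,1)$, where $\Z_2=\{1,U\}=\ker(s)$ with $U=T^2$. By \Cref{thm_module}, the dual category is $\Bimod_{\Vec_{\Z_4^T}}(\C[\Z_2])$. Since $\C[\Z_2]$ lies in the linear sector $\Vec_{\Z_2}\subset(\Vec_{\Z_4^T})_1$, \Cref{th_gauging} guarantees that this dual is Galois-real. Equivalently, the dual is the category of symmetric defects in the $\Z_4^T\to\Z_2$ partial SSB phase. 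It therefore remains to identify it as $\Vec^\alpha_{\Z_2\times\Z_2^T}$ with nontrivial $\alpha$.

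\textbf{Simple objects.} We first list the simples using the double-coset formula \eqref{eq_bimodule} with $K^T=H^T=\Z_2$. The double cosets $\Z_2\backslash\Z_4^T/\Z_2$ are $[1]$ and $[T]$. Since $\Z_2$ is central, each summand is $\Rep(\Z_2)$ with trivial twist. This gives four simples: $(1,\pm)$ and $(T,\pm)$, where $\pm$ is a character of $\Z_2$. All of them are invertible with endomorphisms $\C$. The $[T]$-sector consists of bimodules $\C[\Z_2]\delta_T$, which are anti-linear, so the Galois indicator is $s(g,\chi)=s(g)$.

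\textbf{Fusion.} Next we compute the fusion and show the group of simples is $\Z_2\times\Z_2^T$, not $\Z_4^T$. The bimodule $\delta_T\C[\Z_2]$ squared is $\delta_{U}\C[\Z_2]\cong\C[\Z_2]$ as a bimodule, so $(T,+)^{\otimes2}\cong(1,\pm)$ for some sign. The sign must be fixed carefully: the isomorphism $\delta_U\C[\Z_2]\to\C[\Z_2]$ is right multiplication by $\delta_U$, which is a bimodule map because $U$ is central. Hence $(T,+)^2=(1,+)$. In addition, the character lines commute with $T$-lines. This follows because $T$ acts trivially on $\Z_2$ by conjugation, and for $\widehat{\Z_2}$ the inversion action is also trivial. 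The resulting fusion group is $\Z_2\times\Z_2^T$.

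\textbf{The associator $\alpha$.} The main obstacle is showing that $\alpha$ is nontrivial. One route is to compute the $F$-symbols from the bimodule associators directly, keeping track of the anti-linear conjugation in the modified pentagon. A cleaner route is an obstruction argument, which we would try first. Suppose $\alpha$ were trivial. Then $\Vec_{\Z_2\times\Z_2^T}$ would admit the module category $\cM(\Z_2^T,1)$, with $\Z_2^T$ unbroken. This module has a single simple object over the $\Z_2$ factor and an $\R$-real dual, and it has $2$ ground states. We would then match the full lists of indecomposable modules on both sides, using that Morita equivalence gives $\Mod(\Vec_{\Z_4^T})\simeq\Mod(\text{dual})$. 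For $\Vec_{\Z_4^T}$ the modules are $(\Z_4^T,\pm)$ (trivial and SPT, since $H^2(\Z_4^T,U(1)_T)=\Z_2$), $(\Z_2,1)$, and $(1,1)$, as in \Cref{fig:Z4TPhases}. For the untwisted $\Vec_{\Z_2\times\Z_2^T}$ the count is larger: subgroups containing $T$ carry extra twisted classes, since $H^2(\Z_2\times\Z_2^T,U(1)_T)=\Z_2^2$. The two lists therefore disagree, which contradicts trivial $\alpha$. So $\alpha$ must be the nontrivial mixed class in $H^3(\Z_2\times\Z_2^T,U(1)_T)$.

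\textbf{Identifying the class.} To pin down which nontrivial class $\alpha$ is, we would compute the single $F$-symbol $F^{T,T,(1,-)}$. This is the phase picked up when the character line crosses the $T$-lines, with $T^2=U$ pairing with the $\Z_2$-character to give $-1$. That computation identifies $\alpha$ as the mixed anomaly $\alpha=a\cup w_1^2$-type class.
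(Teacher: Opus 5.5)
\textbf{Verdict: genuine but easily repaired gap.} Your conclusion is right, but the step from ``$\alpha$ nontrivial'' to ``$\alpha$ is the mixed class'' is missing.

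\textbf{What is correct.} You correctly identify the dual as $\Bimod_{\Vec_{\Z_4^T}}(\C[\Z_2])$, with its four invertible simples, their Galois grading, and fusion group $\Z_2\times\Z_2^T$. This agrees with the rule $\gamma_3=\gamma_1\gamma_2^{s(k_1)}$ of Appendix~\ref{app_gauge_subgroup}. Your count ($10$ module categories for untwisted $\Vec_{\Z_2\times\Z_2^T}$ versus $4$ for $\Vec_{\Z_4^T}$) does show that the associator is cohomologically nontrivial.

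\textbf{Where it breaks.} Nontriviality does not imply that $\alpha$ is the mixed class \eqref{eq_Z2Z2Tanomaly}. Write $a$ for the linear $\Z_2$ generator. Then $H^3(\Z_2\times\Z_2^T,U(1)_T)\cong\Z_2^2$ has three nontrivial classes: the pure class $a^3$, the mixed class $a\,w_1^2$, and their sum. Your invariant cannot separate them. For anomalous $\beta$, the two twists on $\langle T\rangle$ (resp.\ $\langle aT\rangle$) are identified by conjugation by $a$ whenever the slant product $i_a\beta$ restricts nontrivially there. Counting this way gives:
\begin{itemize}
\item for $a^3$: contributions $1,2,1$ from $\{1\},\langle T\rangle,\langle aT\rangle$;
\item for $a\,w_1^2$: contributions $1,1,1,1$ from $\{1\},\langle a\rangle,\langle T\rangle,\langle aT\rangle$;
\item for the sum: contributions $1,1,2$ from $\{1\},\langle T\rangle,\langle aT\rangle$.
\end{itemize}
That is four module categories in every case, so ``So $\alpha$ must be the nontrivial mixed class'' does not follow. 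Your last paragraph does not close the gap either: a single $F$-symbol such as $F^{T,T,(1,-)}$ is not gauge invariant, and you assert its value rather than compute it.

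\textbf{How to fix it.} Any one of the following works.
\begin{itemize}
\item The linear sector of the dual consists of bimodules supported on $\Z_2$, i.e.\ $\Bimod_{\Vec_{\Z_2}}(\C[\Z_2])\simeq\Rep(\Z_2)$ with trivial associator. So $\alpha$ restricts trivially to the linear $\Z_2$. This excludes $a^3$ and $a^3+a\,w_1^2$; together with $\alpha\neq0$ it leaves exactly $a\,w_1^2$.
\item Refine your count by the type of the endomorphism category, which Morita equivalence preserves. $\Vec_{\Z_4^T}$ has two modules with Galois-real dual, $(\Z_2,1)$ and $(1,1)$. The mixed class also gives two; the other two classes give only one.
\item Finish the direct computation as in Appendix~\ref{app_gauge_subgroup}. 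With $e_2(T,T)=U$, the balanced-map calculation gives $\alpha((k_1,\gamma_1),(k_2,\gamma_2),(k_3,\gamma_3))=\gamma_1(e_2(k_2,k_3))$. Its gauge-invariant content is $i_{(1,-)}\alpha(T,T)=-1$: the line generating the linear $\Z_2$ carries a Kramers doublet.
\end{itemize}

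\textbf{Comparison with the paper.} The paper avoids both the counting and the bimodule associator. It identifies the dual as $\Rep(\Z_2)\rtimes\Z_2^T$, for the $\Z_2^T$-action whose homotopy fixed points are $\Rep(\Z_4^T)$. This forces $\eta_q=-1$, since $\eta_q=+1$ would give $\Rep(\Z_2\times\Z_2^T)$. It then reads off $(-1)^{s_1s_2a_3}$ directly.
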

Here $\alpha$ is a nontrivial 3-cocycle given in~\Cref{eq_Z2Z2Tanomaly}. We notice this is similar to the situation with an internal $\Z_4$-symmetry, where it is Morita equivalent to an anomalous $\Z_2\times \Z_2$-symmetry. 

By the discussion above there are in total four gapped $\Z_4^T$-phases. 
    \begin{itemize}
        \item The $\Z_4^T$-symmetric phases. Since $H^2(\Z_4^T, U(1)_T)=\Z_2$, there are two $\Z_2^T$-symmetric gapped phase. We notice that the nontrivial SPT can be thought of as coming from 
        \be
        H^1(\Z_2^T, H^1(\Z_2, U(1)))
        \ee
        of the $E_2$-page of the Lyndon–Hochschild–Serre spectral sequence
$H^p(\Z_2^T, H^q(\Z_2, U(1)))\Rightarrow H^{p+q}(\Z_4^T, U(1))$. One checks that this term survives to a nontrivial element of $H^2(\Z_4^T, U(1))$ by standard spectral sequence calculation.
        \item $\Z_4^T$ is spontaneously broken to $\Z_2=\{1, T^2\}$. Since $H^2(\Z_2,U(1))=0$, there is no further SPT choice. 
        \item The complete $\Z_4^T$-SSB phase.
    \end{itemize}
Next we consider the categories of symmetric defects in each of these four phases. According to the discussion in ~\Cref{sec_setup}, all the categories of symmetric defects should be Morita equivalent to $\Vec_{\Z_4^T}$.

\subsubsection{$\Z_4^T$-symmetric trivial phase}
According to~\Cref{th_RepGT} the category of symmetric defects in any $G^T$-symmetric gapped phase is $\Rep(G^T)$.
Hence the category of symmetric defects in this phase is $\Rep(\Z_4^T)$. Alternatively, this phase is defined by the module category $\cM(\Z_4^T, 1)\simeq\Vec$ over $\Vec_{\Z_4^T}$, and the category of symmetric defects is 
\be
\End_{\Vec_{\Z_4^T}}(\Vec)\simeq \Rep(\Z_4^T)\,.
\ee
This category was discussed in~Example~\Cref{eg_Z4T}.  Recall from the discussion there that there is a unique non-trivial simple object $Q$ of  type satisfying 
\be
Q\otimes Q=4\cdot \lid \,.
\ee
Physically a quaternion type defect is a Kramers doublet under time-reversal, hence the excitations/defects in this phase are Kramers doublets. In particular, the excitations in this phase are \textbf{non-abelian}, which means, for example, there will be a four-fold degeneracy when a pair of $\Z_4^T$-charges is present. 

A generic local operator in this phase takes the form $z\id,~z\in \C$, and the generator $T\in \Z_4^T$ acts on it as $z\id\to z^*\id$. Therefore the only symmetric local operators are $x\id,~x\in \R$, which form a copy of $\R$.
\subsubsection{Nontrivial $\Z_4^T$-SPT}\label{sec_Z4T_SPT}
Since this is a $\Z_4^T$-symmetric gapped phase, by~\Cref{th_RepGT} the category of symmetric defects is $\Rep(\Z_4^T)$. The nontrivial $\Z_4^T$-SPT is characterized by the nontrivial element $\psi\in H^2(\Z_4^T, U(1))\simeq \Z_2$. By the discussion in~\Cref{sec_GT_general}, this means there is a nontrivial twisted algebra 
\be
\C[\Z_4^T]^\psi, \psi \in H^2(\Z_4^T, U(1))
\ee
which is generated as a $\C$-linear space by $1, T, T^2, T^3$, with product rule $z T=Tz^*,~\forall z\in \C$ and $T^4=-1$. For later discussion, we now determine the type of this algebra. The center of the algebra is
\be
Z(\C[\Z_4^T]^\psi)=\R[T^2]
\ee
and since $(T^2)^2=-1$, the center $Z(\C[\Z_4^T]^\psi)$ may be identified with $\C$. Therefore the algebra $\C[\Z_4^T]^\psi$ is in fact a central $\C$-algebra. By the classification of such algebras, we have $\C[\Z_4^T]^\eta\simeq \mathrm{Mat}_n(\C)$ for some non-negative integer $n$. Dimension counting gives $n=2$. We conclude that $\C[\Z_4^T]^\psi=\mathrm{Mat}_2(\C)$. 
\subsubsection{$\Z_4^T$-SSB}
According to \Cref{th_VecGT} the category of symmetric defects in any complete $G^T$-SSB phase is $\Vec_{G^T}$. Hence the category of symmetric defects in this phase is $\Vec_{\Z_4^T}$, whose simple objects are $\Z_4^T$-domain walls. 

Denote by $\cO_k,~k=0,1,2,3$ the local operator that projects onto the $k$th vacuum, and $\cD_T$ the domain wall created by $T\in \Z_4^T$. Then the domain wall acts on local operators as 
\be
\cD_T(z_1\cO_1+z_2\cO_2+z_3\cO_3+z_4\cO_4)=z_4^*\cO_1+z_1^*\cO_2+z_2^*\cO_3+z_3^*\cO_4.
\ee
The only $\Z_4^T$-symmetric local operators are $z\cO_1+z^* \cO_2+z\cO_3+z^*\cO_4$, which form a copy of $\C$. 

\subsubsection{$\Z_4^T \to \Z_2$ partial SSB}

This partial SSB phase is defined by the module category $\Vec_{\Z_4^T/\Z_2}\simeq \Vec_{\Z_2^T}$, or equivalently the algebra $\C[\Z_2]\in \Vec_{\Z_4^T}$ where $\Z_2=\{1, T^2\}<\Z_4^T$. By the general discussion before, the category of symmetric defects in this phase $\Bimod_{\Vec_{\Z_4^T}}(\C[\Z_2])$, is Galois-real. In order to determine the structure of this Galois-real fusion category, we take a shortcut by applying the ``breaking time-reversal" construction in ~\Cref{sec_semi_product}. Since the broken symmetry here is $\Z_4^T/\Z_2=\Z_2^T$, we see that the category of symmetric defects should be of the semi-direct product type discussed in ~\Cref{sec_semi_product}. Following example~\Cref{eg_GT_semi_prod}, we know that the category of symmetric defects takes the form 
\be
\Rep(\Z_2)\rtimes \Z_2^T\,,
\ee
where the $\Z_2^T$-action is the one that produces $\Rep(\Z_4^T)$ upon taking fixed points. In fact there are only two possible ways $\Z_2^T$ can act on $\Rep(\Z_2)$: 
\begin{itemize}
    \item  $T$ acts as a $\C$-anti-linear monoidal equivalence $\Rep(\Z_2)\to \Rep(\Z_2)$. In order to preserve the fusion rule, $T$ has to act trivially on objects, and act as complex conjugation on scalars. 
    \item There is freedom to choose the natural isomorphism $\alpha: T\circ T\Rightarrow \id$. This natural isomorphism is determined by its component at the nontrivial object $q\in \Rep(\Z_2)$, $\eta_q: T^2(q)=q\to q$. This being a monoidal natural transformation means we have condition $\alpha_{q\otimes q}=\eta_1^2=1$, hence $\eta_q=\pm$. 
\end{itemize}
With the choice $\eta_q=+1$, the homotopy fixed point category is $\Rep(\Z_2\times \Z_2^T)$; with the choice $\eta_q=-1$, the homotopy fixed point category  is $\Rep(\Z_4^T)$. Therefore we can describe $\Rep(\Z_4^T)$ as the complex fusion category $\Rep(\Z_2)$ together with a $\Z_2^T$-action that has a nontrivial associator $\eta_q=-\id_q: q\to q$. With this action understood, we can work out the associator of the category $\Rep(\Z_2)\rtimes \Z_2^T$ as follows. Let $(a_j,s_j)\in \Rep(\Z_2)\rtimes \Z_2^T$ where $a_j=1$ or $q\in \Rep(\Z_2)$ and $s_j\in \Z_2^T$, $j=1,2,3$. Then the associator 
\be
\alpha_{(a_1,s_1),(a_2,s_2),(a_3,s_3)} 
\ee
is given as follows. By definition of tensor product in semi-direct product categories(\ref{sec_semi_product}), we have 
\begin{widetext}
    \begin{align*}
        ((a_1,s_1)\otimes(a_2,s_2))\otimes (a_3,s_3)&=(((a_1\otimes s_1(a_2))\otimes s_1s_2(a_3),s_1s_2s_2)\\
        (a_1,s_1)\otimes ((a_2,s_2)\otimes (a_3,s_3))&=((a_1\otimes s_1(a_2\otimes s_2(a_3)),s_1s_2s_3)
    \end{align*}
The associator is then given by the natural moves
\be
((a_1\otimes s_1(a_2))\otimes s_1s_2(a_3)\xrightarrow{(-1)^{s_1s_2a_3}}((a_1\otimes s_1(a_2))\otimes s_1\circ s_2(a_3)\xrightarrow{1} a_1\otimes (s_1(a_2)\otimes s_1\circ s_2(a_3))\xrightarrow{1} a_1\otimes s_1(a_2\otimes s_2(a_3))
\ee
\end{widetext}
The first step produces a $-1$ precisely when $s_1=s_2=T \in \Z_2^T$ and $a_3=q$ is the nontrivial charge, since this step uses the natural transformation $\eta: T\circ T\Rightarrow \id$ and $\eta_q=-1$. The second step is identity because the associator in the $\Rep(\Z_2)$-category is trivial, the third step is identity because the $\Z_2^T$-action has trivial monoidal structure: $T(a\otimes b)\xrightarrow{1} T(a)\otimes T(b)$.  Put everything together we find that the category of symmetric defects in this partial SSB phase is 
\begin{itemize}
    \item A Galois-real fusion category of the form $\Rep(\Z_2)\rtimes \Z_2^T$. 
    \item Write simple objects as $(a,s), a\in \Rep(\Z_2)\simeq \Z_2=\{0,1\},~s\in \Z_2^T\simeq \{0,1\}$. The associator/F-symbol is 
    \be\label{eq_Z2Z2Tanomaly}
    \alpha((a_1,s_2),(a_2,s_2),(a_3,s_3))=(-1)^{s_1s_2a_3} \,.
    \ee
\end{itemize}

We see that this category can be renamed as $\Vec_{\Z_2\times \Z_2^T}^\alpha$ with $\alpha$ given as above. Notice that this cocycle $\alpha$ is a nontrivial one since by inspection it is exactly the generator of 
\be
H^1(\Z_2,H^2(\Z_2^T, U(1)_T))\le H^3(\Z_2\times \Z_2^T,U(1)_T) \,.
\ee
In the language of decorated domain wall construction, the $1\in \Z_2$-domain wall is now decorated with a Kramers doublet. Recall that the category of symmetric defects is always Morita equivalent to the symmetry category. Hence we have proven~\Cref{th_Z4T_dual}.
At the physics level of rigor, this result has appeared in various places. However we believe this is the first time a completely rigorous proof is given.

 \subsubsection{The Categories of Domain Walls and the full 2-Category of Gapped Phases}
Finally we discuss the domain wall categories between the four gapped phases above, and the full 2-category of $\Z_4^T$-gapped phases.

Recall the general discussion in ~\Cref{sec_setup}: if $\cM, \cN$ are two module categories over a symmetry category $\cc$, then the category of domain walls between then is $\mathrm{Func}_\cc(\cM, \cN)$. In particular, the domain wall between any phase defined by $\cM$ and the full SSB phase defined by $\cc$ is 
\be
\mathrm{Func}_\cc(\cc,\cM)\simeq \cM\,.
\ee
Therefore the domain wall categories where one side of the domain wall is the complete SSB phase are easy to determine--they are the module categories themselves. Since $\Vec_{\Z_4^T}$ is Morita equivalent to $\Rep(\Z_4^T)$, we can also ``change basis" and label gapped  $\Z_4^T$-phases as $\Rep(\Z_4^T)$-modules, the regular module $\Rep(\Z_4^T)$ then corresponds to the $\Z_4^T$-symmetric trivial phase. Then we can utilize the classification of $\Rep(G^T)$-modules in~\Cref{sec_module_RepGT} to calculate the domain wall categories where one side of the domain wall is the $\Z_4^T$-symmetric trivial phase--they are just the module categories(over $\Rep(\Z_4^T)$ themselves.  Therefore our strategy is as follows. We first determine all the categories of domain walls where one side of the domain wall is the complete SSB phase using the $\Vec_{\Z_4^T}$-basis, then we switch to $\Rep(\Z_4^T)$-basis and determine all the categories of domain walls where  one side of the domain wall is the $\Z_4^T$-symmetric trivial phase. Finally we will deal with the category of domain walls between the trivial phase and the SPT using formula~\Cref{eq_bimodule}

Using the $\Vec_{\Z_4^T}$-module description of phases, we have 
\begin{itemize}
    \item The domain wall category between the complete $\Z_4^T$-SSB phase and the $\Z_4^T$-trivial phase is $\Vec$. This implies there is a unique simple domain wall between these two phases.
     \item The domain wall category between the complete $\Z_4^T$-SSB phase and the $\Z_4^T$-SPT phase is also $\Vec$. 
    \item The domain wall category between the complete $\Z_4^T$-SSB phase and the $\Z_4^T\to \Z_2$ partial SSB phase is $\Vec_{\Z_2^T}$. This implies there are two simple domain walls between the two phases.
\end{itemize}

We now switch basis to $\Rep(\Z_4^T)$-modules. 
\begin{itemize}
    \item The $\Z_4^T$-SPT is defined by the module $\Rep(\Z_4^T)^\psi=\Mod(\C[\Z_4^T]^\psi)$. We showed in~\Cref{sec_Z4T_SPT} that $\C[\Z_4^T]^\psi\simeq \mathrm{Mat}_2(\C)$ as $\R$-algebras, since $\mathrm{Mat}_2(\C)$ is further Morita equivalent to $\C$ as $\R$-algebras, we see that the underlying category of the module category $\Mod(\C[\Z_4^T]^\psi)$ is simply $\Mod(\C)=\Vec_\C$. This implies the category of domain walls between the $\Z_4^T$-trivial phase and the $\Z_4^T$-SPT is $\Vec$.
    \item The $\Z_4^T\to \Z_2$ partial SSB phase is defined by the module $\Rep(\Z_2)$ over $\Rep(\Z_4^T)$. Hence the domain wall between the $\Z_4^T$-trivial phase and the partial SSB phase is $\Rep(\Z_2)$.
\end{itemize}

Finally we are left with the category of domain walls between the $\Z_4^T$-SPT and the $\Z_4^T\to \Z_2$ partial SSB phase. To determine this category, we apply formula~\Cref{eq_bimodule}. Take $K^T=\Z_4^T,~ K^T=\Z_2$, we see that 
\be
\mathrm{Func}_{\Vec_{\Z_4^T}}(\cM(\Z_4^T,\omega),\cM(\Z_2,1))=\Rep(\Z_4^T\cap \Z_2)=\Rep(\Z_2).
\ee

We may now summarize the full structure of $(1+1)$d gapped $\Z_4^T$-phases with the diagram in figure~\ref{fig:Z4TPhases}.

\subsection{$S_3^T =\Z_3 \rtimes \Z_2^T$}
We now consider a non-abelian group-like anti-linear symmetry $S_3^T$. We take 
\be
S_3=\{r,T| r^3=T^2=TrTr=1\}
\ee
and take the generator $s$ to be anti-linear and the generator $r$ to be linear. This means we can write $S_3^T=\Z_3\rtimes \Z_2^T$ with $T$ acting on $\Z_3$ as inversion.  One reason for considering this example is that it will reveal a surprising dual symmetry that has no counterpart in the $\C$-linear world. Namely we prove the following by analyzing this example.
\begin{theorem}\label{th_S3T_dual}
    The symmetry categories $\Vec_{\Z_3\rtimes \Z_2^T}$,~$\Vec_{\Z_3\times \Z_2^T}$ are Morita equivalent. 
\end{theorem}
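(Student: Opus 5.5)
Following the strategy used for \Cref{th_Z4T_dual}, I would exhibit $\Vec_{\Z_3\times\Z_2^T}$ as the category of symmetric defects in a suitable gapped $S_3^T$-phase. The general principle of \Cref{sec_setup} then finishes the argument: such a category is $\End_{\Vec_{S_3^T}}(\cM)$ for a module category $\cM$, so it is Morita equivalent to $\Vec_{S_3^T}$. The natural candidate is the partial SSB phase $\cM(\Z_3,1)$, in which the unbroken symmetry is the linear subgroup $\Z_3=\ker(s)$ and $\Z_2^T$ is broken. Its defect category is $\Bimod_{\Vec_{S_3^T}}(\C[\Z_3])$. Since $\C[\Z_3]$ lies in the linear sector, \Cref{th_gauging} shows that this category is Galois-real, and hence admissible as a symmetry category. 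It also has $|S_3^T|=6$ worth of simple objects, which is consistent with the table entry.

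To identify this category I would reuse the semi-direct product shortcut of \Cref{sec_semi_product}, specifically \Cref{eg_GT_semi_prod} applied to the extension $\Z_3\to S_3^T\to\Z_2^T$. It gives
\be
\Bimod_{\Vec_{S_3^T}}(\C[\Z_3])\simeq \Rep(\Z_3)\rtimes\Z_2^T,
\ee
where the $\Z_2^T$-action on $\Rep(\Z_3)$ is the one whose homotopy fixed points recover $\Rep(S_3^T)$. By \Cref{rmk_Z2T_action_semidirect}, with $A=\Z_3$, this action is trivial on objects, $\gamma\mapsto\gamma$, and acts on scalars by complex conjugation. It remains to check that the action carries trivial coherence data. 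The monoidal structure $T(a\otimes b)\simeq T(a)\otimes T(b)$ and the natural isomorphism $T\circ T\Rightarrow\id$ can both be taken to be trivial, because $\Rep(S_3^T)\simeq\Vec_{\R,\widehat{\Z_3}}$ has no sign twists. Indeed, any choice $\eta_\gamma$ must be a character of $\Z_3$ squaring appropriately, and the only possibility is $\eta\equiv 1$, since $\Z_3$ has odd order.

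With the action fixed, the associator computation from the $\Z_4^T$ case goes through verbatim with all signs equal to $1$. The simple objects are $(\gamma,s)$ with fusion $(\gamma_1,s_1)\otimes(\gamma_2,s_2)=(\gamma_1\gamma_2,s_1s_2)$, and this fusion is commutative because $T$ fixes $\gamma$. The associator is trivial. The resulting category is therefore $\Vec_{\widehat{\Z_3}\times\Z_2^T}\simeq\Vec_{\Z_3\times\Z_2^T}$, and the theorem follows.

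The step I expect to be the main obstacle is making the semi-direct product identification rigorous. One must justify that the category of defects in the $\Z_2^T$-broken phase is $\Rep(G_0)\rtimes\Z_2^T$ with exactly the descent action, and, more seriously, that no nontrivial class in $H^3(\Z_3\times\Z_2^T,U(1)_T)$ can hide in the associator. To settle the second point I would compute the relevant group directly. In the Künneth/LHS decomposition, the mixed terms are $H^p(\Z_2^T,H^q(\Z_3,U(1)))$, and these vanish because $\Z_2$ and $\Z_3$ have coprime orders. The pure $\Z_3$ part is $H^3(\Z_3,U(1))=\Z_3$, and only its $T$-invariant part survives, where $T$ acts by $\omega\mapsto\omega^{-1}$ through the twisted coefficients. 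The only invariant class is $0$. Together with $H^3(\Z_2^T,U(1)_T)=0$, this forces the anomaly to vanish irrespective of the details of the associator. As a backup, one could instead verify the statement directly via the general gauging result \Cref{thm_gauge_subgroup} with $N=\Z_3$ and a split extension, so that $e_2=0$.
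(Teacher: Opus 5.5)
Your proposal is correct and takes essentially the paper's route. You identify the defect category of the $(\Z_3,1)$ partial SSB phase as $\Rep(\Z_3)\rtimes\Z_2^T$ via \Cref{eg_GT_semi_prod}, note from \Cref{rmk_Z2T_action_semidirect} that the descent action and $T\circ T\Rightarrow\id$ are trivial, and read off $\Vec_{\Z_3\times\Z_2^T}$. Your check that $H^3(\Z_3\times\Z_2^T,U(1)_T)=0$ is a useful safeguard the paper omits.

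One small caution on your backup: \Cref{thm_gauge_subgroup} requires $N$ to be central. It must therefore be applied to $G^T=\Z_3\times\Z_2^T$, where it yields $\widehat{\Z_3}\rtimes\Z_2^T\cong S_3^T$ with $e_2=0$. It cannot be applied to $S_3^T$, in which $\Z_3$ is not central.
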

Notice that both symmetries are non-anomalous, yet the group structures are different. In particular $\Z_3\times \Z_2^T$ is abelian while $\Z_3\rtimes \Z_2^T$ is non-abelian.  Dualities of this kind are impossible with internal $G$-symmetries. More generally, non-anomalous $\Z_n\times \Z_2^T$-symmetry is dual to non-anomalous $\Z_n\rtimes \Z_2^T$ symmetry, where $T\in \Z_2^T$ acts on $\Z_2$ as inversion.

By the classification in~\Cref{sec_GT_general} there are the following six gapped phases. 
\begin{itemize}
    \item $S_3^T$-symmetric. We have the decomposition 
    \begin{align}
         & H^2(\Z_3\rtimes \Z_2^T, U(1)_T)\nonumber\\
    &=H^2(\Z_2^T, U(1)_T)\oplus H^1(\Z_2^T, H^1(\Z_3, U(1))\oplus H^2(\Z_3,U(1)) \cr 
    &= \Z_2 \,,
    \end{align}
    where only the first term is nonzero, which is the group of pure $\Z_2^T$-SPTs. Therefore there are two $S_3^T$-SPTs, and the nontrivial one is essentially the Haldane chain protected by the subgroup $\Z_2^T$. We label the two phases as $(S_3^T, 1)$ and $(S_3^T, \omega)$ with $\omega$ being the nontrivial 2-cocyel of $S_3^T$.
    \item $S_3^T\to \Z_2^T$ partial SSB. Since $H^2(\Z_2^T, U(1))=\Z_2$, we can further put a Haldane chain onto the vacua. The two phases are denoted as $(\Z_2^T, 1),~(\Z_2^T, \omega)$.
    \item $S_3^T\to \Z_3$ partial SSB. Since $H^2(\Z_3,U(1))=0$, there is no further SPT choices. This phase is denoted as $(\Z_3,1)$.
    \item $S_3^T$ complete SSB, denoted as $(1,1)$.
\end{itemize}
\subsubsection{$S_3^T$-symmetric trivial phase}
The category of symmetric defects is $\Rep(S_3^T)$. Since $S_3=\Z_3\rtimes \Z_2^T$, from the general discussion in~\Cref{eg_A_rtimes_Z2T} we know that $\Rep(S_3^T)=\Vec_{\R, \Z_3}$, all simples are invertible and of real type.  Denote the generating $S_3^T$-charge by $q\in \Rep(S_3^T)$, then $q\otimes q\otimes q=\lid$.  Since $q$ is a real object, the excitations in this phase are time-reversal singlets. 

\subsubsection{Nontrivial $S_3$-SPT}
This is a $S_3^T$-symmetric gapped phase, hence the category of symmetric defects is $\Rep(S_3^T)$ as before. The nontrivial SPT is characterized by the nontrivial twisted algebra 
\be
\C[S_3^T]^\psi,~\psi\in H^2(S_3^T, U(1))
\ee
For later discussion we determine the structure of this $\R$-algebra. Since this SPT is essentially the Haldane chain protected by the subgroup $\Z_2^T$, the product rules of the algebra $\C[S_3^T]^\psi$ are 
\be
r^3=1, T^2=-1, TrT=r^2, zr=rz ,  z T=Tz^*, \forall z\in \C
\ee
Let $\omega_3=e^{2\pi i/3}$, define the following idempotents
\be
e_0=\frac{1+r+r^2}{3},~e_1=\frac{1+\omega_3 r+\omega_3^2 r^2}{3},~e_2=\frac{1+\omega_3^2 r+\omega_3 r}{3}
\ee
One checks $e_i^2=1, e_ie_j=\delta_{ij}e_i, \sum_i e_i=1$. Therefore we have decomposition 
\be
\C[S_3^T]^\psi=\C[S_3^T]^\psi(e_0+e_1+e_2)
\ee
Since $r e_i \propto e_i$, each block is generated by $\C e_i$ and $Te_i$. Since $Te_i=e_i T$, we have $(Te_i)^2=-e_i$. Therefore each block is isomorphism to the algebra $\C[\Z_2^T]^\eta=\bH$. We conclude 
\be
\C[S_3^T]^\psi=\bH\oplus \bH\oplus \bH.
\ee
\subsubsection{$(\Z_3,1)$ partial SSB}
This phase breaks the symmetry down to $\Z_3$. It may be characterized by the module category $\cM(\Z_3, 1)$, whose underlying category is $\Vec_{S_3^T/\Z_3}=\Vec_{\Z_2^T}$. Alternatively, it is described by the algebra 
\be
\C[\Z_3]\in \Vec_{S_3^T}
\ee
Since $\Z_3$ is unitary, this category of symmetric defects is a Galois-real fusion category. To determine its structure, we apply the semi-direct product construction~\Cref{eg_GT_semi_prod}. Since the broken symmetry here is $S_3^T/\Z_3=\Z_2^T$, we know that the category of symmetric defects is 
\be
\Rep(\Z_3)\rtimes \Z_2^T
\ee
where the $\Z_2^T$-action is the one that produces $\Rep(S_3^T)$ upon taking fixed points. According to the discussion in~\Cref{rmk_Z2T_action_semidirect}, this is exactly the trivial action: $T(q)=q$ for $q\in \Rep(\Z_3)$ the elementary charge of $\Z_3$. The natural transformation $T\circ T\Rightarrow \id$ is also trivial. We see that this semi-direct product category is in fact a product
\be
\Rep(\Z_3)\rtimes \Z_2^T=\Vec_{\Z_3\times \Z_2^T}
\ee
where we renamed $\Rep(\Z_3)$ as $\Vec_{\Z_3}$. We have proved~\Cref{th_S3T_dual}. Since $\Rep(G^T)$ is always Morita equivalent to $\Vec_{G^T}$, this immediately implies there is one more Morita equivalent real fusion category $\Rep(\Z_3\times \Z_2^T)=\Rep_\R(\Z_3)$, so we have the following chain of Morita equivalences
\begin{widetext}
    \be \label{eq_S3T_Morita}
\Vec_{\Z_3\rtimes \Z_2^T}\simeq_{\text{Morita}} \Rep(\Z_3\rtimes \Z_2^T)\simeq_{\text{Morita}} \Vec_{\Z_3\times \Z_2}\simeq_{\text{Morita}}\Rep(\Z_3\times \Z_2^T)=\Rep_\R(\Z_3)
\ee
\end{widetext}
By number counting these are all the real fusion categories that are Morita equivalent to $\Vec_{\Z_3\rtimes \Z_2^T}$, in other words, all the real fusion categories that will appear as category of symmetric defects in $(1+1)$d gapped $S_3^T$-phases.
\subsubsection{$(\Z_2^T,1)$ partial SSB+trivial SPT}
This phase breaks the symmetry down to $\Z_2^T$ and the remaining $\Z_2^T$ is in a trivial SPT phase. It may be described by the module category $\cM(\Z_2^T,1)$ over $\Vec_{S_3^T}$, whose underlying category is $\Vec_{S_3^T/\Z_2^T}=\Vec_{\Z_3}$.  Alternatively it is described by the algebra $\C[\Z_2^T]$ in $\Vec_{S_3^T}$. The category of symmetric defects in this phase is therefore 
\be
\Bimod_{\Vec_{S_3^T}}(\C[\Z_2^T]) \,.
\ee
Since $\Z_2^T$ is anti-unitary, according to the general discussion in~\Cref{sec_GT_general}, this real fusion category is $\R$-real. We first determine the underlying category of this real fusion category using formula~\Cref{eq_bimodule}. Take $K^T=H^T=\Z_2^T$, we have double coset decomposition
\be
S_3^T=\{1,T\}\cup \{r, r^2, rT, r^2T \}
\ee
and $\Z_2^T \cap r\Z_2^T r^{-1}=\{1\}$.
Hence 
\be
\Bimod_{\Vec_{S_3^T}}(\C[\Z_2^T])=\Rep(\Z_2^T)\oplus \Rep(1)=\Vec_\R\oplus \Vec_\C
\ee
The only real fusion category in the list~\Cref{eq_S3T_Morita} that has this underlying category is $\Rep(\Z_3\times \Z_2^T)=\Rep_\R(\Z_3)$. Indeed, according to the general discussion in~\Cref{sec_AxZ2} and~\Cref{eq_fusion_rule_AxZ2} the  real fusion $\Rep_\R(\Z_3)$ has two simples $1, Q$ with fusion rule
\be
Q\otimes Q=2 \times \lid \oplus Q
\ee
with $\lid$ being a real object and $Q$ being a complex object. Hence the excitations in this phase are time-reversal doublets.
\subsubsection{$(\Z_2^T,1)$ partial SSB+ Haldane chain}
This phase breaks the symmetry down to $\Z_2^T$ and the remaining $\Z_2^T$ is in the Haldane phase. It is described by the twisted algebra $\C[\Z_2^T]^\eta=\bH\in \Vec_{S_3^T}$. Stacking with SPT does not change the category of symmetric defects, hence 
\be 
\Bimod_{\Vec_{S_3^T}}(\C[\Z_2^T])\simeq \Rep_\R(\Z_3)
\ee
as before.

\subsubsection{$(1,1)$ complete SSB}
The complete SSB phase is given by the regular module $\Vec_{S_3^T}$, and the category of symmetric defects is $\Vec_{S_3^T}$. 
\subsubsection{The Domain Wall Categories for the 2-category of $S_3^T$-Phases}
Lastly we work out the categories of domain walls between the above six phases. The computation is straightforward thanks to the formula~\eqref{eq_bimodule}. We present the result as the table \ref{tab:S3Phases}. 






The chain of Morita equivalences~\Cref{eq_S3T_Morita} can be understood physically as the following phase transitions. 

\begin{center}
    \begin{tikzcd}
    \Rep(S_3^T)\arrow[rr, "\text{Break}~\Z_3"]\arrow[dd, "\text{Break}~\Z_2^T"']& & \Rep_\R(\Z_3)\arrow[dd, "\text{Break}~\Z_2^T"] \\
    &&
    \\    
    \Vec_{\Z_3\times \Z_2^T}\arrow[rr, "\text{Break}~\Z_3"']& & \Vec_{S_3^T}
\end{tikzcd}
\end{center}

\subsection{Gauging Finite Subgroups}

The effect of gauging a normal subgroup $N\lhd G$ for a finite internal $G$-symmetry was calculated in~\cite{Tachikawa:2017gyf, Bhardwaj:2017xup} using the Morita theory of fusion categories. As an application of our categorical framework for anti-linear symmetries, we prove the following  result about gauging subgroups in Appendix~\ref{app_gauge_subgroup}, which generalizes the Morita equivalences~\Cref{th_Z4T_dual} and~\Cref{th_S3T_dual}.
\begin{theorem}\label{thm_gauge_subgroup}
    Let $(G^T, s: G^T\to \Z_2^T)$ be an anti-unitary group-like symmetry, and $N\lhd G_0<G^T$ be a unitary central subgroup. Denote by $K^T:=G^T/N$ the quotient. Then there is Morita equivalence 
    \be
\Vec_{G^T}\simeq_{\text{Morita}} \Vec_{\widehat{N}\rtimes K^T}^\omega
    \ee
    where $G^T/N$ acts on $\widehat{N}:=\Hom(N, U(1))$ as inversion via $s$, and 
    \be
     \omega((k_1,\gamma_1),(k_2,\gamma_2),(k_3,\gamma_3))=\gamma_1(e_2(k_2,k_3))
    \ee

    where $e_2\in H^2(K^T, N)$ is the extension class of the extension $N\to G^T\to K^T$. 
\end{theorem}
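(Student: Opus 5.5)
We will realize the dual category as $\Bimod_{\Vec_{G^T}}(A)$ for the twisted-trivial algebra $A=\C[N]\in \Vec_{G^T}$, where $N$ is the given unitary central subgroup. Since $N\le G_0=\ker(s)$, the algebra lies in the linear sector, so by \Cref{th_gauging} the dual is Galois-real. By \Cref{thm_module} and the definition of Morita equivalence, it then suffices to identify $\Bimod_{\Vec_{G^T}}(\C[N])$ with $\Vec^\omega_{\widehat N\rtimes K^T}$. This is the phase $\cM(N,1)$ in the language of \Cref{sec_module_GT}, whose underlying category is $\Vec_{G^T/N}=\Vec_{K^T}$.

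The first step is to classify the simple bimodules using \eqref{eq_bimodule} with $K^T=H^T=N$. Because $N$ is central, every double coset is a single coset $gN$, and $N\cap gNg^{-1}=N$. Hence
\be
\Bimod_{\Vec_{G^T}}(\C[N])\simeq\bigoplus_{k\in K^T}\Rep(N)\,,
\ee
with simples labeled by pairs $(k,\gamma)$, $k\in K^T$, $\gamma\in\widehat N$. Concretely, the simple $M_{k,\gamma}$ is spanned by $\delta_{\tilde k n}$, $n\in N$, for a chosen lift $\tilde k$. The left action of $\C[N]$ is the obvious one, and the right action is twisted by $\gamma$.

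The new feature appears when $s(k)=T$. In that case $\delta_{\tilde k}$ conjugates the scalar $\gamma(n)$ when it is moved across. Comparing left and right $N$-actions, the character seen from the other side becomes $\gamma^{*}=\gamma^{-1}$. I will use this to compute the tensor product $M_{k_1,\gamma_1}\otimes_{\C[N]}M_{k_2,\gamma_2}$ and to show it equals $M_{k_1k_2,\ \gamma_1\cdot{}^{k_1}\gamma_2}$. Here ${}^{k}\gamma=\gamma^{s(k)}$ is the inversion via $s$; conjugation by $\tilde k$ itself acts trivially on $N$ because $N$ is central. This establishes the fusion ring $\widehat N\rtimes K^T$, and the Galois indicator of $M_{k,\gamma}$ is $s(k)$.

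The second step, and the main obstacle, is computing the associator. Choose a set-theoretic section $k\mapsto\tilde k$ with $\tilde k_1\tilde k_2=e_2(k_1,k_2)\,\widetilde{k_1k_2}$. The canonical isomorphism $M_{k_1,\gamma_1}\otimes M_{k_2,\gamma_2}\to M_{k_1k_2,\dots}$ sends $\delta_{\tilde k_1}\otimes\delta_{\tilde k_2}\mapsto\delta_{\tilde k_1\tilde k_2}$. Re-expressing the result in the basis $\delta_{\widetilde{k_1k_2}}$ produces $e_2(k_1,k_2)\in N$, which acts through the character carried by the adjacent bimodule.

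Comparing the two bracketings, I expect the associator to collapse, up to coboundary, to $\gamma_1(e_2(k_2,k_3))$. This mirrors the unitary computation of \cite{Tachikawa:2017gyf, Bhardwaj:2017xup}. The care needed is in tracking where scalars sit relative to the anti-linear lines, following the convention fixed for the modified pentagon.

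There is one subtlety to check: whether a factor $\gamma_1^{s(k_1)}$ should appear instead of $\gamma_1$. I plan to resolve this by normalizing the isomorphisms so that scalars are always moved to the far left, and by absorbing any discrepancy into a coboundary. Finally, I will verify that the resulting $\omega$ satisfies the twisted cocycle condition in $H^3(\widehat N\rtimes K^T,U(1)_T)$. This follows from the $2$-cocycle identity for $e_2$ together with the twisted action, and it serves as a consistency check on the computation.
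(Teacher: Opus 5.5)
Your proposal follows essentially the same route as the paper's proof in Appendix~\ref{app_gauge_subgroup}. Both condense $\C[N]$ and identify the simple bimodules $\C[kN]^\gamma$, with trivial left action and right action by $\gamma\in\widehat N$; you import this count from the double-coset formula, whereas the paper derives it directly. Both then obtain the fusion $\gamma_1\gamma_2^{s(k_1)}$ by conjugating the right-action scalar across an anti-linear $k_1$, and read off the associator by comparing the two bracketings of the balanced maps in a section basis.

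On the subtlety you flag, the answer is exactly $\gamma_1(e_2(k_2,k_3))$, with no $s(k_1)$. This factor comes from the balancing map of the leftmost bimodule, $\beta(n_1,n_2)=\gamma_1(n_2)$, which already sits at the far left. The only conjugated scalar, $\gamma_2(n_3)^{s(k_1)}$, appears identically in both bracketings and cancels. Moreover, $\gamma_1(e_2(k_2,k_3))^{s(k_1)}$ would in general fail the twisted cocycle condition, so no coboundary adjustment is involved.
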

Compared with the result of gauging central subgroup for a unitary group, the main new feature here is the semi-direct product $\widehat{N}\rtimes K^T$.  The anomaly $\omega\in H^3(G^T/N\ltimes \widehat{N},U(1)_T)$ lives in the subgroup \[H^2(G^T/N, H^1(\widehat{N},U(1)_T))\simeq H^2(G^T/N, N).\] Notice that, since $G^T/N$ acts both on $\widehat{N}$ and the coefficient $U(1)_T$ as inversion, its action on $H^1(\widehat{N},U(1))\simeq N$ is trivial. Then $\omega$ can be viewed as the extension class $e_2\in H^2(G^T/N,N)$ itself. Hence $\omega$ is nontrivial if and only if $N\to G^T\to G^T/N$ is a non-split extension. 

\section{$\Z_2^T$-Enriched Topological Orders and Their Boundaries}\label{sec_T_SymTFT}

The SymTFT has been a powerful means of classification of phases, gapped and gapless, for categorical symmetries. In this section we discuss how such a SymTFT approach needs to be modified in the presence of time-reversal symmetry to consider $\Z_2^T$-symmetry-enriched SymTFTs instead.

In~\Cref{sec_not_gauging_T} we argue that there is a fundamental obstruction for gauging time-reversal, or any anti-linear symmetry, within the framework of standard quantum theory. This will motivate the use of a $\Z_2^T$-enriched SymTFT for anti-linear symmetries as in \cite{Pace:2025hpb}. 
Then in~\Cref{sec_T_SET} we provide a systematic discussion of  $(2+1)$d $\Z_2^T$-enriched non-chiral topological orders (SETs), including a complete bulk-boundary relation. 
\subsection{T-Enrichment versus Gauging}\label{sec_not_gauging_T}
For a $(d+1)$D internal finite $G$-symmetry, the standard SymTFT paradigm identifies the SymTFT as a $(d+2)$D $G$-gauge theory. Following this principle one tries to  define the SymTFT for an anti-linear symmetry, e.g. the  time-reversal $\Z_2^T$, as the $(d+2)$D ``$\Z_2^T$-gauge theory". However, there is a fundamental obstruction for gauging time-reversal. 
Coupling a field theory to time-reversal background is really putting the theory on an unoriented manifold $M$, whose first Stiefel–Whitney class $w_1(M)$ can be thought of as the time-reversal background field. However, unlike the background gauge field for an internal symmetry, the time-reversal background field $w_1(M)$ is determined by the underlying manifold $M$ and we are not allowed to vary it. Hence the only meaningful way of dynamically gauging time-reversal is to sum over spacetime topologies, which forces us into the realm of quantum gravity. Since we would like to stay in the realm of (topological) quantum field theory, we will not take this approach for the SymTFT for $\Z_2^T$, or any anti-linear symmetry. Instead, we argue that it suffices to take an unoriented TFT, or in condensed matter language, a $\Z_2^T$-enriched topological order, as our SymTFT for an anti-linear symmetry. We will show that for a $\Z_2^T$-enriched non-chiral topological order, its $\Z_2^T$-SSB boundaries are described by Galois-real fusion categories, hence give rise to anti-linear symmetry categories.

\subsection{T-Enriched Non-chiral Topological Order}
\label{sec_T_SET}

In this subsection we will give a self-contained discussion of unoriented 3d TFTs, or equivalently time-reversal enriched (non-chiral) topological orders (TOs). Some of the results here, including the classification of boundary conditions and conversely the reconstruction of the $T$-enriched bulk from a gapped boundary, are new.

We start by reviewing the theory of $(2+1)$D non-chiral topological orders(framed extended TFTs) from the perspective of higher condensation~\cite{Gaiotto:2019xmp}. The advantage of the higher condensation perspective is that passing to the unoriented world($\Z_2^T$-enriched world) is simply a base change $\C\to \R$.

\subsubsection{Review of 3d Non-chiral TOs}

The category of $(0+1)$d gapped phases(with no symmetry) is simply $\Vec$: a $(0+1)$d gapped phase is determined by its (finite dimensional) ground state space, and a ``domain wall" between two gapped phases is simply a linear map between ground state spaces. Then we can build $(1+1)$d gapped phases by placing the $(0+1)$d gapped phases on a chain and couple them. Iterating this process we can build $(n+1)$d gapped phases from $n$d gapped phases by layering and coupling. The mathematical tool describing this iterated coupled layer construction is condensation completion~\cite{Gaiotto:2019xmp}: if $\cc$ is the (symmetric monoidal) $n$-category of $n$d gapped phases, then the $(n+1)$d category of $(n+1)$d gapped phases that can be obtained in this way is $\Sigma \cc:=\mathrm{Kar}(B\cc)$, which is called the condensation completion of $B\cc$. This construction can be iterated to yield $\Sigma^k\cc$. 

Now apply the iterated condensation completion construction to $\Vec$, we obtain $n\Vec:=\Sigma^{n-1}\Vec$, which describes all $n$d gapped phases that can be obtained by the layering and coupling construction. In particular, for $n=3$ this gives us $3\Vec$ which is precisely the 3-category of 
\begin{itemize}
    \item Non-chiral $(2+1)$d topological orders;
    \item Domain walls between them;
    \item Junctions between domain walls;
    \item Local operators on the junctions.
\end{itemize}

Since $2\Vec$ is the 2-category of finite semi-simple ($\C$-linear) categories, $3\Vec=\Sigma (2\Vec)$ can be modeled as the Morita 3-category of 
\begin{itemize}
    \item (Multi-)Fusion categories(over $\C$);
    \item Bimodule categories;
    \item Bimodule functors;
    \item Bimodule natural transformations,
\end{itemize}
which fully agrees with the physical picture above. In particular, fix a fusion category $\cc\in 3\Vec$, the 2-category of bulk operators in the corresponding topological order is 
\be
\End_{3\Vec}(\cc)=\Bimod(\cc),
\ee
and the (braided fusion) 1-category of bulk operators of codimension-2 and higher is 
\be
\Omega \End_{3\Vec}(\cc)=\End_{\cc-\cc}(\cc)=\cZ(\cc)
\ee
exactly the Drinfeld center of $\cc$. A gapped boundary for the topological order defined by $\cc\in 3\Vec$ is a 1-morphism in $\Hom_{3\Vec}(\cc, \Vec)$, which is nothing but a $\cc$-module category. This recovers the Kitaev-Kong module category classification of gapped boundaries~\cite{kitaev2012models, kong2017boundary}\footnote{ The Lagrangian algebra classification~\cite{kong2014anyon} can be obtained by the full-center construction~\cite{davydov2010centre}, which is completely equivalent to the module category classification.}.

\subsubsection{The Theory of Non-chiral 3d $\Z_2^T$-SETs}
We now move on to $\Z_2^T$-SETs. A complete theory of $\Z_2^T$-SET is missing in literature, but for non-chiral topological orders we can apply the iterated condensation construction. The starting point of the iteration is the category $(0+1)$D gapped phases with $\Z_2^T$-symmetry. This is clearly $\Rep(\Z_2^T)\simeq \Vec_\R$. Namely a   $(0+1)$D gapped phase with $\Z_2^T$-symmetry is simply a quantum mechanical system invariant under an anti-unitary $\Z_2^T$-symmetry. The (finite-dimensional) ground state space carries a $\Z_2^T$-action, hence is labeled by a representation of $\Z_2^T$. Apply the condensation machinery we obtain $n\Vec_\R$ which describes a subset of $n$d gapped phases with $\Z_2^T$-symmetry(namely those that can be constructed by iterated layering and coupling). 

In particular for $n=2$ we obtain $2\Vec_\R$ the 2-category of finite semi-simple $\R$-linear categories. This has exactly three simple objects:
\begin{itemize}
    \item $\Vec_\R$, describing the trivial $\Z_2^T$-phase. 
    \item $\Vec_\bH$, describing the Haldane phase.
    \item $\Vec_\C$, describing the $\Z_2^T$-SSB phase.
\end{itemize}
This agrees with our discussion in~\Cref{sec_Z2TPhases}.
Moving on to $n=3$ we obtain $3\Vec_\R$, which physically should describe the 3-category of 
\begin{itemize}
     \item Non-chiral $(2+1)$d topological orders enriched with $\Z_2^T$;
    \item  $\Z_2^T$-symmetric domain walls between them;
    \item $\Z_2^T$-symmetric junctions between domain walls;
    \item  $\Z_2^T$-symmetric local operators on the junctions.
\end{itemize}

Since $2\Vec_\R$ is the 2-category of finite semi-simple $\R$-linear categories, we see that $3\Vec_\R=\Sigma 2\Vec_\R$ is nothing but the 3-category of 

\begin{itemize}
    \item (Multi-)Fusion categories over $\R$;
    \item $\R$-linear bimodule categories;
    \item $\R$-linear bimodule functors;
    \item Bimodule natural transformations.
\end{itemize}
Since the $\Z_2^T$-SETs captured by $3\Vec_\R$ are obtained by coupled layer construction, we expect the underlying topological order to be non-chiral. 
The full bulk-boundary relation of the non-chiral $\Z_2^T$-SETs can then by read off from the description of the 3-category above. Let $\cc\in 3\Vec_\R$ be a real fusion category, then the 2-category of bulk $\Z_2^T$-symmetric operators is 
\be
\End_{3\Vec_\R}(\cc)=\Bimod_\R(\cc)
\ee
where $\Bimod_\R(-)$ stands for $\R$-linear bimodules. The (braided fusion) 1-category of bulk $\Z_2^T$-symmetric operators is then 
\be
\Omega \End_{3\Vec_\R}(\cc)=\End_{\cc-\cc}(\cc)=\cZ(\cc)
\ee

This shows, almost tautologically, that the ``bulk=center" paradigm still holds for $\Z_2^T$-SETs, assuming we work with $\Z_2^T$-symmetric operators throughout.
\begin{remark}
    A subtlety here is that without time-reversal enrichment (or any symmetry), the Drinfeld center description $\cZ(\cc)$ for a non-chiral topological is equivalent to the bimodule 2-category description $\Bimod(\cc)$, because the latter is simply $\Sigma \cZ(\cc)$. However with time-reversal enrichment and work over $\R$, this is no longer true in general. Namely there could be codimension-1 defects in the bulk of a non-chiral $\Z_2^T$-SET that does not arise as a condensation defect of codimension-2 defects. This happens when the $\Z_2^T$-SET is a $\Z_2^T$-SSB order~\footnote{For instance, take $\cc=\Vec_\C$, then $\cZ(\Vec_\C)=\Vec_\C$ while $\Bimod_\R(\Vec_\C)=\End_\R(2\Vec_\C)=2\Vec_\C\oplus 2\Vec_\C$ is not connected. This is the trivial $(2+1)$d $\Z_2^T$-SSB order}.
    As long as the input real fusion category $\cc$ is not a complex fusion category(i.e., either it is $\R$-real, or Galois-real with nonzero $\cc_T$ component), then $\Bimod_\R(\cc)=\Sigma \cZ(\cc)$ remains true~\footnote{Proof of this will appear elsewhere.} and there is no harm in saying ``the bulk is completely determined by the Drinfeld center". Therefore in what follows we consider two types of input real fusion categories for a $\Z_2^T$-SET: either an $\R$-real fusion category, or a Galois-real fusion category $\cc_1\oplus \cc_T$ with nonzero $\cc_T$. We will then simply say the Drinfeld center $\cZ(\cc)$ is the $\Z_2^T$-SET.
\end{remark}

Given a real fusion category $\cc\in 3\Vec_\R$, a $\Z_2^T$-symmetric boundary condition for the $\Z_2^T$-SET defined by $\cc$ is simply a morphism in $\Hom_{3\Vec_\R}(\Vec_\R, \cc)$, which is a module category over $\cc$. Therefore the $\Z_2^T$-symmetric boundaries~\footnote{By a $\Z_2^T$-symmetric boundary we mean one that does not break the enriching $\Z_2^T$ explicitly, but we allow for spontaneously breaking $\Z_2^T$.} are classified by module categories over the input real fusion category. In particular, for every input real fusion category $\cc$, there is a canonical (Dirichlet) boundary given by the regular module $\cc$ itself, and the real fusion category of $\Z_2^T$-symmetric defects on the canonical boundary is 
\be
\End_{\cc}(\cc)=\cC \,.
\ee
This shows that every real fusion category can arise as the category of $\Z_2^T$-symmetric topological defects on a $\Z_2^T$-symmetric boundary of a non-chiral $\Z_2^T$-SET. In particular, this implies every anti-linear symmetry category(Galois-real fusion category) can arise on the boundary of a $\Z_2^T$-SET. This lays the foundation of our later  application of the SymTFT to anti-linear symmetries. 

\begin{remark}
    The above discussion shows that for a real fusion category $\cc$, $\Z_2^T$-symmetric gapped boundaries of $\cZ(\cc)$ are classified by module categories over $\cc$. One may wonder what is the role of anyon condensation in describing these gapped boundaries. Unfortunately anyon condensation, in particular Lagrangian algebras, no longer provide a complete classification of $\Z_2^T$-symmetric gapped boundaries. This already happens when the input real fusion category is trivial, $\Vec_\R$. Then the bulk $\Z_2^T$-SET is given by $\cZ(\Vec_\R)=\Vec_\R$. A $\Z_2^T$-symmetric gapped boundary of the trivial SET is nothing but a 2d $\Z_2^T$-gapped phase. Indeed the module category classification says gapped boundaries are given by modules over $\Vec_\R$, and indeed there are three modules over $\Vec_\R$. They can be described by the algebras $\R, \C, \bH \in \Vec_\R$. However, applying the full-center, we obtain condensable algebras $Z(\R)=Z(\bH)=\R$, which shows from the perspective of bulk anyon condensation, the trivial($\R$) boundary is indistinguishable from the Haldane phase boundary($\bH$). Of course, this comes from the physical fact that there is no string (or local) order parameter for the Haldane phase.
\end{remark}

The $\Z_2^T$-symmetric boundaries of a $\Z_2^T$-SET thus come in two types, $\R$-real and Galois-real. A Galois-real type gapped boundary is really one where the enriching symmetry $\Z_2^T$ spontaneously breaks. 
Indeed, to see the underlying gapped boundary of the underlying topological order, one needs to complexify. Since the scalars in a Galois-real fusion category already form $\C$, further complexify it will lead to scalars forming 
\be
\C\otimes_\R \C=\C\oplus \C \,.
\ee
This implies the underlying gapped boundary is in fact multi-fusion with $\lid=\lid_+\oplus \lid_-$. The two ground states are permuted by $\Z_2^T$-action, hence there is still only one $\Z_2^T$-symmetric ground state. The Galois-real fusion category $\cc=\cc_1\oplus \cc_T$ on the boundary should be viewed as the category of either defects within each vacuum, or domain walls between the two vacua. 
\begin{example}
Consider $\cc=\Vec_\R$, then the bulk $\Z_2^T$-SET is trivial: $\cZ(\Vec_\R)=\Vec_\R$. If we take the module category $\Vec$ over $\Vec_\R$, then the category of $\Z_2^T$-symmetric defects on the boundary defined by $\Vec$ is 
    \be
    \End_{\Vec_\R}(\Vec)=\Vec_{\Z_2^T} \,.
    \ee
To obtain the underlying gapped boundary, we complexify to  
\be     \Vec_{\Z_2^T}\boxtimes_\R \Vec=\End_\C(\Vec\oplus \Vec)=\begin{pmatrix}
     \Vec& \Vec\\
         \Vec& \Vec
     \end{pmatrix} \,,
    \ee
which is now a multi-fusion category.  This means the module category $\Vec$ defines the $\Z_2^T$-SSB boundary, which is not a simple boundary for the underlying trivial topological order, but simple as a $\Z_2^T$-symmetric boundary of the $\Z_2^T$-SET. 
\end{example}
\begin{remark}
Gapped boundaries of $\Z_2^T$-SETs that spontaneously break $\Z_2^T$, such as the one discussed above, were not considered in~\cite{Pace:2025hpb}. However, in order to obtain an anti-linear symmetry category (such as $\Vec_{\Z_2^T}$) as actual category of defects on a boundary, one has to spontaneously break $\Z_2^T$. In our later application to the SymTFT quiche, the symmetry boundary is always Galois-real, hence breaks the enriching $\Z_2^T$ spontaneously. 
\end{remark}

\subsection{Alternative Description of $\Z_2^T$-SETs}
\label{sec_alternative}
Previously we discussed the description of non-chiral 3d $\Z_2^T$-SETs in terms of the Drinfeld center $\cZ(\cc)$ of a real fusion category $\cc\in 3\Vec_\R$. A lighter and equivalent description of a $\Z_2^T$-SET is as an underlying topological order $\cB$ together with an anti-unitary $\Z_2^T$-action. Here by an anti-unitary $\Z_2^T$-action we mean
\begin{itemize}
    \item An \emph{anti-unitary} braided equivalence , $\rho_T: \cB\to \cB$.
    \item A braided natural isomorphism $\eta: \rho_T\circ \rho_T\Rightarrow \id$. 
    \item Associativity of $\eta$.
\end{itemize}
In order for the action to define a $\Z_2^T$-SET we also require certain anomaly class of the action 
$\cO(\rho_T)\in H^4(\Z_2^T, U(1)_T)$ is trivial. Notice that for a $G$-SET with internal $G$-symmetry, the data involves not only a non-anomalous $G$-action but also a choice in a torsor for $H^3(G, U(1))$. Thanks to $H^3(\Z_2^T, U(1)_T)=0$, we can take $\Z_2^T$-SET to be simply a topological order with a non-anomalous $\Z_2^T$-action. 

If the input real fusion category is Galois-real $\cc=\cc_1\oplus \cc_T$, then the underlying topological order of $\cZ(\cc)$ is simply $\cZ(\cc_1)$(see~\Cref{sec_center_of_gr} below) and there is a straightfoward way of obtaining the $\Z_2^T$-action on the topological order $\cZ(\cc_1)$ from the input $\cc_1\oplus \cc_T$. See~\Cref{fig:DTaction}.  The component $\cc_T$ is canonically a $\cc_1-\cc_1^*$ bimodule via the monoidal structure of $\cc_1\oplus \cc_T$ . A standard construction in fusion category theory is that for two fusion categories $\cc, \cD$, an invertible $\cc-\cD$ bimodule $\cM$ induces a braided equivalence on the Drinfeld center $\alpha_\cM:\cZ(\cc)\to \cZ(\cD)$. Applying this construction to $\cc_T$, we obtain a ($\C$-linear) braided equivalence $\alpha_{\cc_T}: \cZ(\cc_1)\to \cZ(\cc_1^*)=\cZ(\cc_1)^*$. This is the same as an \emph{anti-linear} braided automorphism of $\cZ(\cc_1)$. Furthermore, the monoidal structure of $\cc_1\oplus \cc_T$ leads to an equivalence of $\cc_1$-bimodules $\cc_T \boxtimes_{\cc_1} \cc_T\simeq \cc_1$, which translates to a braided natural isomorphism $\alpha_{\cc_T}\circ \alpha_{\cc_T}\Rightarrow \id$. Finally, the coherence of tensor product in $\cc_1\oplus \cc_T$ implies that the above natural isomorphism is associative. This shows the Galois-real fusion category $\cc_1\oplus \cc_T$ defines a canonical anti-linear $\Z_2^T$-action on the topological order $\cZ(\cc_1)$.  

\begin{figure}
$$
\begin{tikzpicture}
 \begin{scope}[shift={(0,0)}] 
\draw [black]
(0,0) -- (0,4) -- (2,5) -- (5,5) -- (5,1) -- (3,0)--(0,0);
\draw [black, thick,  fill=white,opacity=1]
(0,0) -- (0, 4) -- (2, 5) -- (2,1) -- (0,0);
\draw [thick, fill=black, opacity=0.2]
(0,0) -- (0, 4) -- (2, 5) -- (2,1) -- (0,0);
\draw [cyan, thick, fill=cyan, opacity=0.5]
(0,2) -- (3, 2) -- (5, 3) -- (2, 3) -- (0,2);
\draw [blue, line width = 0.05cm] (0,2) -- (2, 3)  ; 
\node at  (3,3.5)  {$\fZ(\cc_1)$} ;
\node at  (3,1.5)  {$\fZ(\cc_1^*)$} ;
\node at (4, 3.2) {$a$}; 
\draw [->, thick] (4,2.9) -- (4,1.8);
\node at (4, 1.6) {$\rho_T(a)$}; 
\draw[dashed] (0,0) -- (3,0);
\draw[dashed] (0,4) -- (3,4);
\draw[dashed] (2,5) -- (5,5);
\draw[dashed] (2,1) -- (5,1);
\draw [black, dashed]
(3,0) -- (3, 4) -- (5, 5) -- (5,1) -- (3,0);
 \node[above] at (1.1, 3) {$\cC_1$};
 \node at (1.7,3.5){$|\uparrow\rangle$};
 \node[above] at (2.3, 2.2) {$D_T$};
 \node[above] at (1.1, 1) {$\cc_1^*$};
 \node at (1.7, 1.5){$|\downarrow\rangle$};
 \node[blue, left] at (0, 2) {$\cC_T$};
\end{scope}
\end{tikzpicture}
$$
\caption{T-enriched SymTFT quiche for the Galois-real category $\cC_1 \oplus \cc_T$: The fusion category $\cc_1$ ($\cc_1^*$) is the canonical boundary of the Levin-Wen topological order $\fZ(\cc_1)$ ($\fZ(\cc_1^*)$). The $\Z_2^T$ (anti-unitary) automorphism defect in the bulk of the SymTFT ends in $\cc_T$,  which is an invertible $\cc_1$-$\cc_1^*$ bimodule. The boundary of the $D_T$ surface is a domain wall between two vacua on the boundary (denoted $|\uparrow\rangle$ and $|\downarrow\rangle$).}\label{fig:DTaction}
\end{figure}

The fusion category $\cc_1$($\cc_1^*$) is the canonical boundary of the Levin-Wen topological order $\cZ(\cc_1)$($\cZ(\cc_1^*)$), and a domain wall between $\cc_1$-region and $\cc_1^*$-region on the boundary canonically extends to a domain wall $D_T$ in the bulk between $\cZ(\cc_1)$ and $\cZ(\cc_1^*)$. 
The bulk domain wall $D_T$ is invertible precisely when the category of defects on the boundary wall, $\cc_T$, is an invertible $\cc-\cc^*$ bimodule. When $\cc_T$ is invertible, the bulk domain-wall $D_T$ implements a braided equivalence on the Drinfeld center: $\rho_T: \cZ(\cc)\to \cZ(\cc^*)$ by transporting anyons across the wall. 
The fact that $\cc_T$ has order-2 as a bimodule category($\cc_T \boxtimes_{\cc_1}\cc_T\simeq \cc_1$) implies that the bulk invertible wall $D_T$ has order-2: $D_T\otimes D_T=1$. 
The natural transformation $\rho_T\circ \rho_T\Rightarrow \id$ can be understood as \emph{symmetry fractionalization} for the $D_T$ surface. Namely, fusing two $D_T$ surface may leave behind a nontrivial defect on the $\cc_1$-boundary, see ~\Cref{fig:SymTFTFrac}.

\begin{center}
\begin{figure*}[t]
$$
  \begin{tikzpicture}
 \begin{scope}[shift={(0,0)}] 
\draw [black]
(0,0) -- (0,4) -- (2,5) -- (5,5) -- (5,1) -- (3,0)--(0,0);
\draw [black, thick,  fill=white,opacity=1]
(0,0) -- (0, 4) -- (2, 5) -- (2,1) -- (0,0);
\draw [thick, fill=black, opacity=0.2]
(0,0) -- (0, 4) -- (2, 5) -- (2,1) -- (0,0);
\draw [cyan, thick, fill=cyan, opacity=0.5]
(0,2.5) -- (3, 2.5) -- (5, 3.5) -- (2, 3.5) -- (0,2.5);
\draw [cyan, thick, fill=cyan, opacity=0.5]
(0,1.5) -- (3, 1.5) -- (5, 2.5) -- (2, 2.5) -- (0,1.5);
\draw [blue, line width = 0.05cm] (0,2.5) -- (2, 3.5)  ;
\draw [blue, line width = 0.05cm] (0,1.5) -- (2, 2.5)  ;
\draw[dashed] (0,0) -- (3,0);
\draw[dashed] (0,4) -- (3,4);
\draw[dashed] (2,5) -- (5,5);
\draw[dashed] (2,1) -- (5,1);
\draw [black, dashed]
(3,0) -- (3, 4) -- (5, 5) -- (5,1) -- (3,0);
 \node[above] at (1.1, 3.2) {$\cC_1$};
  \node[above] at (1.1, 2.1) {$\cc_1^*$};
  \node[above] at (1.1, 1) {$\cc_1$};
 \node[above] at (2.3, 2.7) {$D_T$};
  \node[above] at (2.3, 1.7) {$D_T$};
 \node[blue, left] at (0, 2.5) {$x\in\cC_T$};
  \node[blue, left] at (0, 1.5) {$y\in \cC_T^*$};
  \node at (6.5,2.5){$\Longrightarrow$};
\end{scope}
\end{tikzpicture}
     \begin{tikzpicture}
\begin{scope}[shift={(2,0)}]
\draw[black]
(0,0) -- (0,4) -- (2,5) -- (5,5) -- (5,1) -- (3,0) -- (0,0);
\draw[black, thick, fill=white, opacity=1]
(0,0) -- (0,4) -- (2,5) -- (2,1) -- (0,0);
\draw[thick, fill=black, opacity=0.2]
(0,0) -- (0,4) -- (2,5) -- (2,1) -- (0,0);
\draw[dashed] (0,0) -- (3,0);
\draw[dashed] (0,4) -- (3,4);
\draw[dashed] (2,5) -- (5,5);
\draw[dashed] (2,1) -- (5,1);
\draw[black, dashed]
(3,0) -- (3,4) -- (5,5) -- (5,1) -- (3,0);
\draw[red, line width=0.04cm] (0,2) -- (2,3);
\node at (1.0,2) {$\mathcal{C}_1$};
\node at (3.65,2.65) {$\fZ(\mathcal{C}_1)$};
\node[red, left] at (0,2) {$a\in \cc_1$};
\end{scope}
\end{tikzpicture}
$$
\caption{SymTFT picture showing the fractionalization for the $D_T$ surface: fusing the two $D_T$ surfaces, can leave behind on the the symmetry boundary a non-trivial line $a \in \cC_1$ }\label{fig:SymTFTFrac}
\end{figure*}
\end{center}

\noindent
{\bf Symmetry Fractionalization.}
We also note that for an abelian topological order $\cB$, the symmetry fractionalization for a $\Z_2^T$-action can be recorded as an assignment:
\be
\eta_a=\pm \,,
\ee
for every anyon $a\in \cB$ that is invariant under the $T$-action, $T(a)=a$. The assignment is required to satisfy $\eta_a\eta_b=\eta_{ab}$. Whenever $\eta_a=-1$, we say the anyon $a$ is a Kramers doublet, and denote the symmetry fractionalization pattern as $a_T$. Physically the phase $\eta_a$ is the phase obtained by local action of $T^2$ on $a$. 

The argument above can be applied in reverse: any non-anomalous $\Z_2^T$-action on the bulk topological order $\cZ(\cc_1)$ give rise to an anti-linear component $\cc_T$ such that $\cc_1\oplus \cc_T$ is a Galois-real fusion category extending $\cc_1$. The anti-linear component $\cc_T$ is the category of defects living on the boundary of the bulk automorphism domain wall $D_T$.

\begin{example}
    Let us consider the toric code topological order $\cZ(\Vec_{\Z_2})$ and we fix the canonical Dirichlet boundary condition with category of defects $\Vec_{\Z_2}$. According to the above discussion, there should be a 1-1 correspondence between extensions of $\Vec_{\Z_2}$ to a Galois-real fusion category and non-anomalous $\Z_2^T$-actions on the toric code.
    Indeed we can identify the following extensions of $\Vec_{\Z_2}$:
    \begin{enumerate}
        \item $\Vec_{\Z_2\times \Z_2^T}$
        \item $\Vec_{\Z_4^T}$
        \item $\Vec_{\Z_2\times \Z_2^T}^\omega$
        \item $\TY_{\overline{\C}}(\Z_2)$ (see \Cref{sec:TY}).
    \end{enumerate}
    Accordingly, we identify the following $\Z_2^T$-SET structures on the toric code:
    \begin{enumerate}
        \item This is the trivial SET, where $T$ acts as identity on the anyons, and there is no symmetry fractionalization. 
        \item This is the $e_Tm$ SET, which means $\eta_e=-1$, i.e., the $e$ anyon is a Kramers doublet. Equivalently, acting with $T^2$ on an open region leaves behind a loop of $m$ on the boundary. 
        \item This is the $em_T$ SET, i.e. $\eta_m=-1$, and the $e$ anyon is a Kramers doublet. Equivalently, acting with $T^2$ on an open region leaves behind a loop of $e$ on the boundary.
        \item The $e-m$-exchange SET, where $T: e\leftrightarrow m$ as the $e-m$ duality symmetry.
    \end{enumerate}
Hence we seem to have indeed a 1-1 correspondence. In the next section we will perform a more careful analysis to confirm that the 1-1 correspondence agrees with the physical picture we discussed above. Notice there is another $\Z_2^T$-action: 
$e_Tm_T$ where both $e$ and $m$ are Kramers doublets. However it is well-known this symmetry fractionalization pattern is anomalous~\cite{Barkeshli_2019} thus does not define a genuine SET. 
Hence it does not correspond to any Galois-real fusion category extending $\Vec_{\Z_2}$.
\end{example}

\subsection{The Drinfeld Center of Galois-real Fusion Categories}\label{sec_center_of_gr}
In this subsection we prove a technical result confirming the equivalence of the two descriptions of non-chiral $\Z_2^T$-SETs discussed above. We show that, given any Galois-real fusion category $\cc=\cc_1\oplus \cc_T$, the $\Z_2^T$-SET $\cZ(\cc)$ has underlying topological order $\cZ(\cc_1)$, with a $\Z_2^T$-action determined by $\cc_T$ in the way described in the previous subsection.

Take an anti-linear symmetry category $\cc=\cc_1\oplus \cc_T$. According to~\cite{sanford2025fusion} the Drinfeld center $\cZ(\cc_1\oplus \cc_T)$ is a modular tensor category(MTC) over $\R$ with 
\be
\End_{\cZ(\cc)}(\lid)=\R\,,
\ee
whenever $\cc_T\neq 0$. Being a real MTC, $\cZ(\cc)$ can be viewed as the fixed point category of a complex MTC with respect to a $\Z_2^T$-action. In other words $\cZ(\cc)$ can be viewed as an ordianry topological order $\cZ(\cc)^\C$(complexification), together with certain $\Z_2^T$-action. It turns out $\cZ(\cc)^\C$ is exactly $\cZ(\cc_1)$ the SymTFT for the internal symmetry $\cc_1$, and the $\Z_2^T$-action is exactly the one described in the previous subsection(the one induced by the bimodule $\cc_T$).  Hence the Drinfeld center $\cZ(\cc_1\oplus \cc_T)$ contains the full information of the underlying topological order $\cZ(\cc_1)$ and the anti-linear $\Z_2^T$-action.

This is summarized as the following technical theorem. 
\begin{theorem}\label{th_center}
  Let $\cc=\cc_1\oplus \cc_T$ be a Galois-real fusion category. Then $\cZ(\cc)$ is the (homotopy) fixed point category of the $\Z_2^T$-action on $\cZ(\cc_1)$ defined by the invertible $\cc_1-\cc_1^*$ bimodule $\cc_T$. 
  \proof
  We need to recall some fusion category theory~\cite{etingof2015tensor}. If $\cE,\cD$ are two fusion categories, then an invertible $\cE-\cD$ bimodule $\cM$ induces a braided equivalence $\alpha_\cM:\cZ(\cE)\simeq \cZ(\cD)$ defined as follows.  Let $L_\cM: \cZ(\cE)\to \End_{\cE-\cD}(\cM)$ be the functor defined as $L_\cM((x,\gamma))(m)=x\rhd m$(with $\cC$-module structure provided by the half-braid $\gamma$), and $R_\cM: \cZ(\cD)\to \End_{\cE-\cD}(\cM)$ be the functor defined bas $R_\cM((x,\gamma))(m)=m\lhd x$ (with $\cD$-module structure  provided by the half-braid $\gamma$). Then $L_\cM, R_\cM$ are tensor equivalences and the composition $\alpha_\cM:=R_\cM^{-1}\circ L_\cM: \cZ(\cE)\to \cZ(\cD)$ is the braided equivalence induced by $\cM$. In other words, the braided equivalence is uniquely determined by the relation $x\rhd m\simeq m\lhd \alpha_\cM(x),~\forall (x,\gamma)\in \cZ(\cE), m\in \cM $. 
  
  Now let $(x,\gamma)\in \cZ(\cc)$ be an object where $x\in \cc=\cc_1\oplus \cc_T$ and $\gamma: x\otimes y\simeq y\otimes x,~\forall y\in \cc$ is the half-braid. According to~\cite{sanford2025fusion}, the object $x$ is in fact strictly in the linear component $\cc_1$.  Denote by the restriction of $\gamma$ to $\cc_1/\cc_T$ by $\gamma_1/\gamma_T$. Then $(x,\gamma_1)$ can be viewed as an object in $\cZ(\cc_1)$. Then $\gamma_T: x\otimes y\simeq y\otimes x,~\forall y\in \cc_T$ can be viewed as an equivalence of bimodule functors $L_{\cc_T}(x,\gamma_1)\simeq R_{\cc_T}(x,\gamma_1)$. Since we have $L_{\cc_T}\simeq R_{\cc_T}\circ \alpha_{\cc_T}$, the information carried by $\gamma_T$ is the same as an equivalence $R_{\cc_T}\circ \alpha_{\cc_T}(x,\gamma_1)\simeq R_{\cc_T}(x,\gamma_1)$, which is then the same as an isomorphism $\alpha_{\cc_T}(x,\gamma_1)\simeq (x,\gamma_1)$. Therefore an object $(x,\gamma)\in \cZ(\cc)$ is the same as an object $(x,\gamma_1)\in \cZ(\cc_1)$ together with an equivariant structure $\alpha_{\cc_T}(x,\gamma_1)\simeq (x,\gamma_1)$. In other words $(x,\gamma)$ is the same as an object in the homotopy fixed point category $\cZ(\cc_1)^{h\Z_2^T}$ with respect to the $\Z_2^T$-action defined by $\cc_T$. \qed
\end{theorem}


\section{SymTFT Quiches for Anti-Linear Symmetries}\label{sec_quiche}

We now apply the theory of non-chiral $\Z_2^T$-SETs developed in the previous section to formulate the $\mathbb{Z}_2^T$-enriched SymTFT construction for anti-linear symmetries. We begin by explaining the general perspective: how to derive the anti-linear symmetry category from the underlying SymTFT $\mathcal{Z}(\mathcal{C}_1)$ for the linear sector $\mathcal{C}_1$, together with the $\Z_2^T$-enrichment data. We then examine several examples, including $\Vec_{\Z_4^T}$ and its Morita dual $\Vec_{\Z_2\times \Z_2^T}^\alpha$, $\Vec_{A\times \Z_2^T}$ and its Morita dual $\Vec_{A\rtimes \Z_2^T}$, as well as Galois-real Tambara–Yamagami categories.

In this work, we focus on the SymTFT quiche, i.e. the SymTFT with  a single, gapped, symmetry boundary. One difficulty of developing a full sandwich picture for gapped phases with general anti-linear symmetries is the presence of spontaneous time-reversal breaking on the symmetry boundary. As also noticed in~\cite{Pace:2025hpb}, the symmetry on a $T$-symmetric gapped boundary always takes the form $\cc\rtimes \Z_2^T$ \Cref{sec_semi_product}, which from our perspective is too restrictive as a symmetry boundary. We will present a self-consistent description of all Morita equivalent symmetry categories, from the point of view of the $\Z_2^T$-enriched SymTFT. 
However in order to "close the sandwiches" this requires some more detailed analysis of the $\Z_2^T$-SSBing that happens in this framework. 

To summarize: the focus here is on the realization of the symmetry defects on the boundary of the $\Z_2^T$-enriched SymTFT quiches, and to prove gauge-equivalences between symmetry categories. We will comment on how to utlize this to reproduce the classification of gapped phases we presented in \Cref{sec_GappedPhases} in the conclusions.

\subsection{T-Enriched SymTFT Quiches}

Let us recall what a SymTFT quiche comprises: 
we start with a symmetry category $\cC$ and construct the SymTFT, which realizes this category on a gapped boundary condition -- without yet including a physical boundary, schematically:
\be\label{ComplexQuiche}
\begin{tikzpicture}
\begin{scope}[shift={(0,0)}]
\draw [teal,  fill=teal, opacity =0.7] 
(0,0) -- (0,2) -- (2,2) -- (2,0) -- (0,0) ; 
\draw [white] (0,0) -- (0,2) -- (2,2) -- (2,0) -- (0,0)  ; 
\draw [very thick] (0,0) -- (0,2) ;
\node at  (1,1)  {$\fZ(\cC)$} ;
\node[above] at (0,2) {$\Bsym_{\cC}$}; 
\end{scope}
\end{tikzpicture}
\ee
This in particular allows a systematic characterization of all symmetry categories $\cC$ and $\cC'$, that share the same SymTFT, and are thus Morita equivalent, i.e. related by gauging, and thus share their centers $\cZ (\cC) \simeq \cZ (\cC')$. 

For an anti-linear symmetry $\cc=\cc_1\oplus \cc_T$, the SymTFT is now a $\Z_2^T$-enriched topological order, with a gapped boundary whose defects -- both orientation preserving as well reversing --  together form $\cc$. We will denote by $\fZ(\cC)^{\rho}$ the $\Z_2^T$-enriched SymTFT, where $\fZ(\cc)$ is the underlying SymTFT for the linear fusion categorical symmetry $\cc$ and $\rho$ is the $\Z_2^T$-enrichment data. We take a $\Z_2^T$-SET as a topological order with $\Z_2^T$-action, as explained in~\Cref{sec_alternative}, this is equivalent to the formulation of $\Z_2^T$-SET from the $3\Vec_\R$ perspective. 
The SymTFT $\fZ(\cc)$ has the standard Dirichlet boundary with category of defects $\cc$. The $\Z_2^T$-enrichment data provides an invertible domain wall $D_T$ in the interior of the SymTFT, see  \Cref{fig:SymTFTQuiche}.

\begin{figure}
$$
       \begin{tikzpicture}
 \begin{scope}[shift={(0,0)}] 
\draw [black]
(0,0) -- (0,4) -- (2,5) -- (5,5) -- (5,1) -- (3,0)--(0,0);
\draw [black, thick,  fill=white,opacity=1]
(0,0) -- (0, 4) -- (2, 5) -- (2,1) -- (0,0);
\draw [thick, fill=black, opacity=0.2]
(0,0) -- (0, 4) -- (2, 5) -- (2,1) -- (0,0);
\draw [cyan, thick, fill=cyan, opacity=0.5]
(0,2) -- (3, 2) -- (5, 3) -- (2, 3) -- (0,2);
\draw [blue, line width = 0.05cm] (0,2) -- (2, 3)  ; 
\node at (1.2,5) {$\Bsym$};
\node at  (2.8,3.5)  {$\fZ(\cC_1)^{\rho}$} ;
\node at  (2.8,1.5)  {$\fZ(\cC_1^*)^{\rho}$} ;
\draw[dashed] (0,0) -- (3,0);
\draw[dashed] (0,4) -- (3,4);
\draw[dashed] (2,5) -- (5,5);
\draw[dashed] (2,1) -- (5,1);
\draw [black, dashed]
(3,0) -- (3, 4) -- (5, 5) -- (5,1) -- (3,0);
 \node[above] at (1.1, 3) {$\cC_1$};
 \node[above] at (2.3, 2.2) {$D_T$};
 \node[above] at (1.1, 1) {$\cC_1^*$};
 \node[blue, left] at (0, 2) {$\cC_T$};
\end{scope}
\end{tikzpicture}
$$
\caption{SymTFT quiche with symmetry boundary and the bulk as a T-enriched SymTFT, specified by the $\Z_2^T$-action of $\rho$ on the SymTFT of $\cC$. \label{fig:SymTFTQuiche}}
\end{figure}

The surface $D_T$ then ends on the symmetry boundary as a nontrivial defect between $\cc_1$ and $\cc_1^*$, with potentially multiple boundary conditions. It is always understood that the boundary has two vacua permuted by the $D_T$-action. If $\cc_1$ is the category of topological defects in one vacua, while $\cc_1^*$ is the category of topological defects in the other vacua. The collection of all boundary conditions for the $D_T$ surface form a $\cc_1-\cc_1^*$ bimodule category $\cc_T$. Fix any boundary condition $T\in \cc_T$, other boundary conditions are always obtained by fusing with orientation-preserving defects in $\cc_1$ or $\cc_1^*$. We will denote by $a\rhd x, ~a\in \cc_1,~ x\in \cc_T$ fusing with $a$ from above , and $x\lhd a$ for fusing with $a$ from below. A defect in $\cc_1$(or $\cc_1^*$) will be called a linear defect, and a domain wall between the two vacua on the boundary(where $D_T$ ends) is called an anti-linear defect.

Collecting linear and anti-linear defects, we obtain the direct sum category 
\be 
\cc:=\cc_1\oplus \cc_T \,.
\ee
The coherence of the bulk $\Z_2^T$-action provides $\cc$ a tensor structure, making it into a Galois-real fusion category. For instance, since the bulk surface has order 2, $D_T\otimes D_T=\id$, fusing any orientation reversing defects $x, y\in \cc_T$ gives an orientation preserving one $x\otimes y\in \cc_1$. The associative of the $\Z_2^T$-action implies the tensor product is associative, and the non-anomalous condition for the $\Z_2^T$-action makes sure the (modified) pentagon equation is satisfied. 

We emphasis that the symmetry category in our setup is \emph{determined} by the Dirichlet boundary $\cc_1$ of the usual SymTFT $\fZ(\cc_1)$, and a given $\Z_2^T$-action on the SymTFT $\fZ(\cc_1)$. It is also possible to specify the symmetry category first, and then derive the $\Z_2^T$-action on $\fZ(\cc_1)$ (following the proof of~\Cref{th_center}). We find the former approach more intuitive and easier to implement in practice.

\subsection{T-Enriched Quiche for $\Z_2^T$}

We now consider the T-enriched quiche for pure time-reversal symmetry $\Z_2^T$-symmetry, i.e., the symmetry category $\Vec_{\Z_2^T}$. According to our general discussion above, the underlying  SymTFT is the trivial topological order $\Vec$ with canonical boundary $\Vec$. There is only one way of enriching the trivial topological order, namely the $\Z_2^T$-action on $\Vec$ is 
\begin{itemize}
    \item $\rho_T: \Vec\to \Vec^*$ is the functor that is identity on objects and complex conjugation on hom-spaces. 
    \item The natural transformation $\rho_T\circ \rho_T\Rightarrow \id$ is the identity $\id_{\id}$.
\end{itemize}
According to the general prescription discussed above, the symmetry category consists of linear defects on the canonical boundary of $\Vec$, and anti-linear defects that live on the boundary of the bulk $D_T$-surface defined by the $T$-action. In the current case there is a unique linear symmetry defect $\lid\in \Vec$, and a unique anti-linear symmetry defect $T$. Together they generate the symmetry category $\Vec_{\Z_2^T}$: 
\be
       \begin{tikzpicture}
 \begin{scope}[shift={(0,0)}] 
\draw [black]
(0,0) -- (0,4) -- (2,5) -- (5,5) -- (5,1) -- (3,0)--(0,0);
\draw [black, thick,  fill=white,opacity=1]
(0,0) -- (0, 4) -- (2, 5) -- (2,1) -- (0,0);
\draw [thick, fill=black, opacity=0.2]
(0,0) -- (0, 4) -- (2, 5) -- (2,1) -- (0,0);
\draw [cyan, thick, fill=cyan, opacity=0.5]
(0,2) -- (3, 2) -- (5, 3) -- (2, 3) -- (0,2);
\draw [blue, line width = 0.05cm] (0,2) -- (2, 3)  ; 
\node at (1.2,5) {$\Bsym$};
\node at  (2.8,3.5)  {$\fZ(\Vec)^{\rho}$} ;
\node at  (2.8,1.5)  {$\fZ(\Vec^*)^{\rho}$} ;
\draw[dashed] (0,0) -- (3,0);
\draw[dashed] (0,4) -- (3,4);
\draw[dashed] (2,5) -- (5,5);
\draw[dashed] (2,1) -- (5,1);
\draw [black, dashed]
(3,0) -- (3, 4) -- (5, 5) -- (5,1) -- (3,0);
 \node[above] at (1.1, 3) {$\Vec$};
 \node[above] at (2.4, 2.2) {$D_T$};
 \node[above] at (1.1, 1) {$\Vec^*$};
 \node[blue, left] at (0, 2) {$\Vec$};
\end{scope}
\end{tikzpicture}
\ee
The boundary $T$ of the bulk $D_T$ surface is a domain wall between the two (trivial) vacua on the boundary. 
\subsection{T-Enriched Quiche for $\Vec_{\Z_4^T}$ and $\Vec_{\Z_2\times \Z_2^T}^\omega$}
We now construct SymTFT quiches for two dual symmetry categories: $\Vec_{\Z_4^T}$ and $\Vec_{\Z_2\times\Z_2^T}^\alpha$ with the anomaly $\alpha$ given in~\Cref{eq_Z2Z2Tanomaly}. Previously we have shown they are dual by analyzing the category of symmetric defects in different $\Z_4^T$-phases. The SymTFT discussion here provides a simple proof: the two symmetry categories are just different boundary conditions of the same $\Z_2^T$-enriched SymTFT. 

We will start by constructing the quiche for $\Vec_{\Z_4^T}$. Here the linear symmetries form the symmetry category $\Vec_{\Z_2}$. Hence the SymTFT for $\Z_4^T$ will be a $\Z_2^T$-enriched version of the topological order $\fZ(\Vec_{\Z_2})$, i.e., the toric code. We will follow standard convention that the simple anyons in the theory $\fZ(\Vec_{\Z_2})$ are denoted as $1, e,m, f$. According to the general prescription, any $\Z_2^T$-enrichment on the toric code automatically leads to an anti-linear extension of $\Vec_{\Z_2}$. Here in order to realize $\Vec_{\Z_4^T}$ we will consider the so-called $e_Tm$-SET. This means our $\Z_2^T$-action $\rho$ is defined as follows. 
\begin{itemize}
    \item $\rho_T: \cZ(\Vec_{\Z_2})\to \cZ(\Vec_{\Z_2}^*)$ is identity on objects and complex conjugation on hom-spaces. 
    \item The natural transformation $\eta: \rho_T\circ \rho_T\Rightarrow \id$ has components $\eta_e=-1, \eta_m=1, \eta_f=-1$. Physically this means the $e$-anyon is now a Kramers doublet.
\end{itemize}

Following the general prescription, the SymTFT quiche starts with the ordinary SymTFT $\fZ(\Vec_{\Z_2})$ in the bulk with the canonical $\Vec_{\Z_2}$ boundary on the symmetry boundary, which we take to be the $e$-condensed boundary. The anti-linear component of the symmetry category is then determined by the above $\Z_2^T$-enrichment data. The bulk $D_T$ surface acts trivially on anyons. Let $T$ be any given boundary condition of the $D_T$-surface on $\fB^\sym$. This boundary condition gives an object $T\in \cc_T$. Other boundary conditions are then obtained by fusing with linear symmetry defects in $\Vec_{\Z_2}$. Take the nontrivial object $m\in \Vec_{\Z_2}$, we obtain another anti-linear defect $mT:=m\otimes T$: 
\be
       \begin{tikzpicture}
 \begin{scope}[shift={(0,0)}] 
\draw [black]
(0,0) -- (0,4) -- (2,5) -- (5,5) -- (5,1) -- (3,0)--(0,0);
\draw [black, thick,  fill=white,opacity=1]
(0,0) -- (0, 4) -- (2, 5) -- (2,1) -- (0,0);
\draw [thick, fill=black, opacity=0.2]
(0,0) -- (0, 4) -- (2, 5) -- (2,1) -- (0,0);
\draw [cyan, thick, fill=cyan, opacity=0.5]
(0,2) -- (3, 2) -- (5, 3) -- (2, 3) -- (0,2);
\draw [blue, line width = 0.05cm] (0,2) -- (2, 3)  ; 
\draw [blue, line width = 0.05cm] (0,3.3) -- (2, 4.3)  ;
\node[left] at (0,3){$m$};
\node at (1.2,5) {$\Bsym$};
\node at  (2.8,3.5)  {$\fZ(\Vec_{\Z_2})^{\rho}$} ;
\node at  (2.8,1.5)  {$\fZ(\Vec_{\Z_2}^*)^{\rho}$} ;
\draw[dashed] (0,0) -- (3,0);
\draw[dashed] (0,4) -- (3,4);
\draw[dashed] (2,5) -- (5,5);
\draw[dashed] (2,1) -- (5,1);
\draw [black, dashed]
(3,0) -- (3, 4) -- (5, 5) -- (5,1) -- (3,0);
 \node[above] at (1.1, 3) {$\Vec_{\Z_2}$};
 \node[above] at (2.4, 2.2) {$D_T$};
 \node[above] at (1.1, 1) {$\Vec_{\Z_2}^*$};
 \node[blue, left] at (0, 2) {$\Vec_{\Z_2}=\langle T, mT\rangle$};
\end{scope}
\end{tikzpicture}
\ee
Therefore the full symmetry category takes the form $\Vec_{\Z_2}\oplus \Vec_{T\Z_2}$. The natural transformation $\eta$ defined above leads to symmetry fractionalization of the $D_T$-surface: fusing two $D_T$-surfaces with the same boundary condition leaves behind an $m$-string on the symmetry boundary. Therefore we have fusion rule:
\be
T\otimes T=Tm\otimes Tm=m \,.
\ee
Hence the symmetry category is $\Vec_{\Z_4^T}$ with $\Z_4^T$ generated by $T$(or $Tm$).

Next we keep the bulk $\Z_2^T$-SET to be the $e_Tm$-SET, but change the symmetry boundary to be the $m$-condensed one. The linear part of the symmetry is still $\Z_2$, but now generated by the $e$-anyon on the boundary. Following the same reasoning as above, we see that the anti-linear symmetries are generated by a boundary condition of the $D_T$ surface $T$, and the fusion $eT:=e\otimes T$. Together the full symmetry category is still $\Vec_{\Z_2}\oplus \Vec_{T\Z_2}$ as a plain category. To determine the monoidal structure, note that fusing two $D_T$ surfaces with the same boundary condition still leaves an $m$-string behind, however an $m$-string on the symmetry boundary is now trivial as $m$ is condensed there. Therefore the fusion rule becomes 
\be
T\otimes T=eT\otimes eT=\lid\,.
\ee
 This means the symmetry defined by the $e_Tm$-SET and the $m$-condensed boundary is $\Z_2\times \Z_2^T$ but with possibly an anomaly. Indeed, in the $e_Tm$-SET the $e$-anyon is a Kramers doublet, hence the symmetry generator of the linear symmetry $\Z_2$ is now a Kramers doublet under $\Z_2^T$. This is precisely the characteristic of the anomaly $\alpha$. We conclude that symmetry category is now $\Vec_{\Z_2\times \Z_2^T}^\alpha$. Since this is related to $\Vec_{\Z_4^T}$ by a change of symmetry boundary, we conclude that they are Morita equivalent. This completes the SymTFT proof of the Morita equivalence we proved in~\Cref{th_Z4T_dual}.

\smallskip
\noindent{\bf Field-theoretical Perspective.} 
For this particular example a simple field-theoretical derivation of the two symmetry categories is possible, which will help confirm the mixed anomaly for the $\Z_2\times \Z_2^T$ symmetry. The $\Z_2^T$-enriched SymTFT $e_Tm$ can be described by the following Lagrangian:
 \be
\cL_{e_Tm}=\frac{1}{2} a\cup \delta b+ \frac{1}{2} a\cup w_1^2 \,,
 \ee
 where $a,b\in C^1(M,\Z)$ are the gauge fields. Then integrating out $a$ we obtain $\delta b=w_1^2$, which says $b$ and $w_1$ together form a $\Z_4^T$-connection.  On the $e$-condensed boundary, field $a$ has Dirichlet boundary condition, the field $b$ can take arbitrary value that satisfies $\delta b=w_1^2$. Hence it defines a $\Z_4^T$-connection on $\fB^\sym$ together with $w_1(T\fB^\sym)$. On the $m$-condensed boundary the field $b$ has Dirichlet boundary condition, the field $a$ generates a $\Z_2$-symmetry. However the term $\frac{1}{2} a\cup w_1^2$ says there is a mixed anomaly between the $\Z_2$ generated by $a$ and $\Z_2^T$.

\subsection{T-Enriched Quiche for $A\times \Z_2^T$ and  $A\rtimes\Z_2^T$}
In this subsubsection, we study SymTFT quiche for a family of symmetries of the form $A\rtimes \Z_2^T$ and $A\times \Z_2^T$. In~\Cref{th_S3T_dual} we proved that $\Vec_{\Z_3\rtimes \Z_2^T}$ is Morita equivalent to $\Vec_{\Z_3\times \Z_2^T}$ by analyzing the categories of symmetric defects in different gapped $S_3^T$-phases. The SymTFT discussion here provides a generalization of this result. We will show that for any abelian group $A$, $\Vec_{A\times \Z_2^T}$ is Morita equivalent to $\Vec_{A\times \Z_2^T}$. This Morita equivalence is nothing but a consequence of charge-flux duality of $(2+1)$d abelian gauge theories.

We will start by constructing the SymTFT quiche for the symmetry $\Vec_{A\times \Z_2}$. The linear part of the symmetry is $\Vec_A$, hence the underlying SymTFT is the (untwisted) $A$-gauge theory $\cZ(\Vec_A)$.
The group of simple anyons in $\cZ(\Vec_A)$ is $A\times \widehat{A}$, so we will denote a simple anyon as $(a,\chi)$ where $a\in A$ and $\chi \in \widehat{A}$. 
The braiding of this theory is given by 
\be
B_{(a_1,\chi_1),(a_2,\chi_2)}=\chi_1(a_2)\chi_2(a_1)\,.
\ee
We then define a $\Z_2^T$-enrichment on this theory as follows. 
\begin{itemize}
    \item $\rho_T$ acts on anyons as $a\mapsto a,~\chi\mapsto \chi^{-1}$.
    \item The natural transformation $\eta: \rho_T\circ \rho_T \Rightarrow \id$ is trivial.
\end{itemize}
Crucially $\rho_T$ defined above is indeed an \emph{anti-linear} automorphism of $\cZ(\Vec_A)$. An anti-linear automorphism must \emph{reverse} braiding phases. Indeed $B_{\rho_T(\chi),\rho_T(a)}=B_{\chi^{-1},a}=\chi(a)^*$. On the other hand, the identity action $a\mapsto a, \chi\mapsto \chi$ is in general \emph{not} an anti-linear braided automorphism. In general there is no such thing as  ``the trivial" $\Z_2^T$-action on a topological order.

We then verify that this choice of $\Z_2^T$-enrichment does give rise to the symmetry category $\Vec_{A\times \Z_2^T}$:
\be
       \begin{tikzpicture}
 \begin{scope}[shift={(0,0)}] 
\draw [black]
(0,0) -- (0,4) -- (2,5) -- (5,5) -- (5,1) -- (3,0)--(0,0);
\draw [black, thick,  fill=white,opacity=1]
(0,0) -- (0, 4) -- (2, 5) -- (2,1) -- (0,0);
\draw [thick, fill=black, opacity=0.2]
(0,0) -- (0, 4) -- (2, 5) -- (2,1) -- (0,0);
\draw [cyan, thick, fill=cyan, opacity=0.5]
(0,2) -- (3, 2) -- (5, 3) -- (2, 3) -- (0,2);
\draw [blue, line width = 0.05cm] (0,2) -- (2, 3)  ; 
\draw [purple, line width = 0.05cm] (0,3.3) -- (2, 4.3)  ;
\node[left,purple] at (0,3){$a\in A$};
\node at (1.2,5) {$\Bsym$};
\node at  (2.8,3.5)  {$\fZ(\Vec_{\A})^{\rho}$} ;
\node at  (2.8,1.5)  {$\fZ(\Vec_{A}^*)^{\rho}$} ;
\draw[dashed] (0,0) -- (3,0);
\draw[dashed] (0,4) -- (3,4);
\draw[dashed] (2,5) -- (5,5);
\draw[dashed] (2,1) -- (5,1);
\draw [black, dashed]
(3,0) -- (3, 4) -- (5, 5) -- (5,1) -- (3,0);
 \node[above] at (1.1, 3) {$\Vec_{\A}$};
 \node[above] at (2.4, 2.2) {$D_T$};
 \node[above] at (1.1, 1) {$\Vec_{\A}^*$};
 \node[blue, left] at (0, 2) {$\Vec_{TA}$};
 \draw[black, line width = 0.03cm, ->] (4.2,3.5) -- (4.2,1.6);
 \node[above]  at (4.2,3.5){$(a,\chi)$};
 \node[below] at (4.2,1.6){$(a,\chi^{-1})$};
\end{scope}
\end{tikzpicture}
\ee
We take the symmetry boundary of the SymTFT $\cZ(\Vec_A)$ to be the one that condenses all charges. The linear symmetry is then generated by fluxes $a\in \A$. Since the bulk $D_T$ surface acts trivially on fluxes. We see that the anti-linear symmetry generators simply form another copy of $A$: pick any boundary condition fo $D_T$, call that $T$, then fusing with the linear symmetry defect $a\in A$ we obtain another boundary condition $aT:=a\otimes T$.  The full symmetry category is therefore $\Vec_A\oplus \Vec_{TA}$ as a plain category. Since the $D_T$-surface acts trivially on $A$, we have fusion rule:
\be 
aT\otimes bT=ab \in A \,.
\ee
Hence the full symmetry category is $\Vec_{A\times \Z_2^T}$ as claimed. We note that there is no anomaly here since there is no symmetry fractionalization for the $D_T$-surface.

Next we keep the bulk $\Z_2^T$-SET to be the one specified above, but change the symmetry boundary to be the one where all fluxes condense:
\be
       \begin{tikzpicture}
 \begin{scope}[shift={(0,0)}] 
\draw [black]
(0,0) -- (0,4) -- (2,5) -- (5,5) -- (5,1) -- (3,0)--(0,0);
\draw [black, thick,  fill=white,opacity=1]
(0,0) -- (0, 4) -- (2, 5) -- (2,1) -- (0,0);
\draw [thick, fill=black, opacity=0.2]
(0,0) -- (0, 4) -- (2, 5) -- (2,1) -- (0,0);
\draw [cyan, thick, fill=cyan, opacity=0.5]
(0,2) -- (3, 2) -- (5, 3) -- (2, 3) -- (0,2);
\draw [blue, line width = 0.05cm] (0,2) -- (2, 3)  ; 
\draw [purple, line width = 0.05cm] (0,3.3) -- (2, 4.3)  ;
\node[left,purple] at (0,3){$\chi\in \widehat{A}$};
\node at (1.2,5) {$\Bsym$};
\node at  (2.8,3.5)  {$\fZ(\Vec_{\A})^{\rho}$} ;
\node at  (2.8,1.5)  {$\fZ(\Vec_{A}^*)^{\rho}$} ;
\draw[dashed] (0,0) -- (3,0);
\draw[dashed] (0,4) -- (3,4);
\draw[dashed] (2,5) -- (5,5);
\draw[dashed] (2,1) -- (5,1);
\draw [black, dashed]
(3,0) -- (3, 4) -- (5, 5) -- (5,1) -- (3,0);
 \node[above] at (1.1, 3) {$\Vec_{\widehat{A}}$};
 \node[above] at (2.4, 2.2) {$D_T$};
 \node[above] at (1.1, 1) {$\Vec_{\widehat{A}}^*$};
 \node[blue, left] at (0, 2) {$\Vec_{T\widehat{A}}$};
 \draw[black, line width = 0.03cm, ->] (4.2,3.5) -- (4.2,1.6);
 \node[above]  at (4.2,3.5){$(a,\chi)$};
 \node[below] at (4.2,1.6){$(a,\chi^{-1})$};
\end{scope}
\end{tikzpicture}
\ee
The linear part of the symmetry is now generated by the confined charges $\chi\in \widehat{A}$ and form the fusion category $\Vec_{\widehat{A}}$. We next determine the anti-linear symmetry generators and their fusion rules. Start with any boundary condition $T$ of the bulk $D_T$ surface, other boundary conditions are obtained by fusing with linear symmetry defects $\chi \in \widehat{A}$. Notice that passing a linear symmetry defect $\chi$ across the bulk $D_T$ surface changes it to $\chi^{-1}$, which is still nontrivial on the symmetry boundary. Therefore all the fusion $\chi T:=\chi\otimes T$ are distinct defects. Due to the nontrivial action of $D_T$ on charges, we need to fix a convention for the definition of $\chi T$: we define it to be $T$ fused with a $\chi$ defect \emph{above} $T$. If instead we fuse with a $\chi$-defect from below $T$, we would get $T\otimes \chi\simeq \chi^{-1} T$.  The nontrivial action of $D_T$ on the charges leads to a different fusion rule: 
\be
\chi T\otimes \chi' T= \chi\chi'^{-1}  \,.
\ee
Hence the full symmetry category is now $\Vec_{\widehat{A}\rtimes \Z_2^T}$, with $\Z_2^T$ acting on $\widehat{A}$ as inversion. Since this is related to $\Vec_{A\times \Z_2^T}$ by a change of symmetry boundary, we have proven the Morita equivalence 
\be
\Vec_{A\times \Z_2^T}\simeq_{\text{Morita}} \Vec_{A\rtimes \Z_2^T} \,,
\ee
where we also used the (non-canonical) isomorphism $A\simeq \widehat{A}$.

\subsection{T-Enriched Quiche for Galois-real TY-categories}
Here we will discuss the $\Z_2^T$-enriched SymTFT quiche for all the Galois-real Tambara-Yamagami (TY) categories. We will show that the SymTFT perspective provides s simple derivation of the classification of Galois-real TY categories. 

Let us recall the classification. An ordinary complex TY category is determined by an abelian group $A$, a symmetric non-degenerate bicharacter $q: A\times A\to U(1)$, and a Frobenius–Schur indicator $\nu=\pm$. The corresponding complex TY category is denoted as $\TY(A,q,\nu)$. A Galois-real TY category is a Galois-real fusion category of the form $\Vec_A\oplus \Vec$ with the generator $T\in \Vec$ being anti-linear. It satisfies the TY fusion rule 
\be\label{eq_TY_fusion}
T\otimes T=\bigoplus_{a\in A}a \,.
\ee
According to~\cite{plavnik2024TYreal}, a Galois-real TY category is determined by an abelian group $A$ together with a skew-symmetric non-degenerate bicharacter $q: A\times A\to U(1)$. The corresponding Galois-real TY category  is denoted as $\TY_{\overline{\C}}(A,q)$. Here skew-symmetric means $q(a,b)=q(b,a)^*$. Compared with usual complex TY categories, the main difference is the change from symmetric bicharacter to skew-symmetric, and the missing of the  Frobenius–Schur indicator. As we shall see, the enriched SymTFT provides an intuitive understanding for these differences.

To construct the $\Z_2^T$-enriched SymTFT for a Galois real TY category $\TY_{\overline{\C}}(A,q)$, we start with the underlying SymTFT $\fZ(\Vec_A)$ for the linear part of the symmetry $\Vec_A$. The symmetry boundary of the underlying SymTFT $\fZ(\Vec_A)$ is the one that condenses all charges. The linear symmetry is then generated by the fluxes $a\in A$. Let $D_T$ be the bulk time-reversal domain wall between $\fZ(\Vec_A)$ and $\fZ(\Vec_A^*)$. In order for the anti-linear component of the symmetry $\cc_T$ to have only one simple object, we argue that the $D_T$ surface must act as $e-m$ duality or  ``charge-flux" exchange, i.e. it maps $A$ to $\widehat{A}$ and $\widehat{A}$ to $A$.

We start with a given boundary condition $T$ of the $D_T$ surface, other boundary conditions are obtained by fusing with linear symmetry defects $a\in A$. In order for the fusion $a\otimes T$ to be equivalent to $T$, $a$ must be locally trivial near the $T$-defect. Since $a$ itself is not trivial on the symmetry boundary, the only way it can be locally trivial is that it becomes a condensed anyon when transported across the bulk $D_T$ surface. This means $D_T$ must map all fluxes to charges. Since $D_T$ has order 2, this automatically implies it maps all charges back to fluxes.

Now assume the $\Z_2^T$-action on simple anyons of $\cZ(\Vec_A)$ is given by
\be 
\rho_T:\quad \left\{ 
\ba 
     A\to \widehat{A}, \quad  & a\mapsto q(a) \cr 
    \widehat{A}\to A,  \quad  & \chi\mapsto \widehat{q}(\chi) \,.
\ea\right.
\ee
The quiche take the following form
\be
       \begin{tikzpicture}
 \begin{scope}[shift={(0,0)}] 
\draw [black]
(0,0) -- (0,4) -- (2,5) -- (5,5) -- (5,1) -- (3,0)--(0,0);
\draw [black, thick,  fill=white,opacity=1]
(0,0) -- (0, 4) -- (2, 5) -- (2,1) -- (0,0);
\draw [thick, fill=black, opacity=0.2]
(0,0) -- (0, 4) -- (2, 5) -- (2,1) -- (0,0);
\draw [cyan, thick, fill=cyan, opacity=0.5]
(0,2) -- (3, 2) -- (5, 3) -- (2, 3) -- (0,2);
\draw [blue, line width = 0.05cm] (0,2) -- (2, 3)  ; 
\draw [purple, line width = 0.05cm] (0,3.3) -- (2, 4.3)  ;
\node[left,purple] at (0,3){$a\in A$};
\node at (1.2,5) {$\Bsym$};
\node at  (2.8,3.5)  {$\fZ(\Vec_{\A})^{\rho}$} ;
\node at  (2.8,1.5)  {$\fZ(\Vec_{A}^*)^{\rho}$} ;
\draw[dashed] (0,0) -- (3,0);
\draw[dashed] (0,4) -- (3,4);
\draw[dashed] (2,5) -- (5,5);
\draw[dashed] (2,1) -- (5,1);
\draw [black, dashed]
(3,0) -- (3, 4) -- (5, 5) -- (5,1) -- (3,0);
 \node[above] at (1.1, 3) {$\Vec_{A}$};
 \node[above] at (2.4, 2.2) {$D_T$};
 \node[above] at (1.1, 1) {$\Vec_A^*$};
 \node[blue, left] at (0, 2) {$\Vec_T$};
 \draw[black, line width = 0.03cm, ->] (4.2,3.5) -- (4.2,1.6);
 \node[above]  at (4.2,3.5){$(a,\chi)$};
 \node[below] at (4.2,1.6){$(\widehat{q}(\chi),q(a))$};
\end{scope}
\end{tikzpicture}
\ee

Then $q: A\to \widehat{A}$ may be regarded as a non-degenerate bicharacter $A\times A\to U(1)$: $q(a,b):=q(a)(b)$. For $\rho_T$ to be an \emph{anti-linear} braided automorphism, we must have 
\be
B_{\rho_T(a),\rho_T(\chi)}=B_{a,\chi}^* \,.
\ee
The LHS is $B_{q(a), \widehat{q}(\chi)}=q(a)(\widehat{q}(\chi))$ while the RHS is $\chi(a)^*$. Therefore we have relation $q(a)(\widehat{q}(\chi))=\chi(a)^*$. Now take $\chi=q(b)$ for some $b\in A$, we obtain
\be
q(a)(\widehat{q}(q(b)))=q(b)(a)^* \,.
\ee
However, the action $\rho_T$ must have order 2, hence $\widehat{q}(q(b))=b$. Therefore the LHS reduces to $q(a)(b)$. We arrive at relation 
\be
q(a)(b)=q(b)(a)^* \,,
\ee
which is nothing but the condition that $q$ viewed as a bicharacter of $A$ is skew-symmetric. We conclude that a $\Z_2^T$-enrichment on $\cZ(\Vec_A)$ that gives rise to a Galois-real TY symmetry  category on the $\Vec_A$-boundary is determined by a non-degenerate skew-symmetric bicharacter $q$ of $A$.  Note that for the usual complex TY categories, the Frobenius-Schur indicator can be viewed as the freedom of stacking with a $(2+1)$d $\Z_2$-SPT in the bulk before gauging a charge-flux exchange $\Z_2$-symmetry. However,  here $H^3(\Z_2^T,U(1)_T)=0$ hence there is no more freedom. This explains the absence of a Frobenius-Schur indicator  in the classification of Galois-real TY categories.

\section{Conclusions and Outlook}
\label{sec:Conclusions}

We have put forward a theory for categorical symmetries including time-reversal action $\Z_2^T$, and made a case that the natural mathematical habitat for such symmetries is real fusion categories. In fact, symmetry categories with time-reversal have to be so-called Galois-real fusion categories, as opposed to $\R$-real ones (where the endomorphisms of the identity form a real vector space). Generalized gauging (or Morita equivalence between categories) in the world of real fusion categories does not always follow the intuition that we may have formed working with complex fusion categories, and we provided some key examples to illustrate this. E.g. the symmetries $A\times \Z_2^T $ and $A\rtimes \Z_2^T$ are gauge related and even the representation category of an abelian group $\Z_n^T$ can be non-invertible. 
We furthermore set up a  $T$-enriched SymTFT framework for such real fusion categories, which in particular gives an elegant way to derive many of the Morita equivalence relations. 

Although the general theory we put forward is applicable in any dimensions, our examples focus on $(1+1)$d real fusion categories, and finite many simple objects. It would be interesting to extend our analysis to a full study of the phases using the SymTFT and to symmetries in higher dimensions, including continuous symmetries, such as 
$\Z_4^T, U(1)\rtimes \Z_2^T, U(1)\times \Z_2^T$ that play a role in topological insulator/superconductors. 
One challenge for developing the SymTFT sandwich construction for gapped phases is the $\Z_2^T$-symmetry breaking on the symmetry boundary. This could be avoided by considering instead $\Z_2^T$-symmetric (non-SSB) boundaries as in \cite{Pace:2025hpb}, which may lead to a complete sandwich construction for gapped phases with symmetries of the form  $\cc\rtimes \Z_2^T$, up to pure $\Z_2^T$-SPTs. Alternatively, one could interpret the compactification of the sandwich with a $\Z_2^T$-SSB boundary  not as the gapped phase itself but a doubled version where the two vacua are interpreted as the system itself and its complex conjugate. 

In higher dimensions, we believe the generalization of our categorical formulation of symmetries and gapped phases is conceptually straightforward. The main challenge comes from the mathematical side, where the theory of fusion higher categories over  non-algebraically-closed fields are not well-developed -- see~\cite{D_coppet_2025} for a discussion of fusion 2-categories over non-algebraically-closed fields. We do mention here that the $(2+1)$d time-reversal anomaly $\omega \in H^4(\Z_2^T, U(1)_T)$ can be described by the Galois-real fusion 2-category $2\Vec_{\Z_2^T}^\omega$~\cite{D_coppet_2025}, and generalizing it to $2\Vec_{G^T}^\omega$ seems straightforward. This suggests that Galois-real fusion higher categories are still the correct language for describing generalized anti-linear symmetries in higher dimensions.

\begin{acknowledgments}
We thank  Andrea Antinucci, Lea Bottini, Lakshya Bhardwaj, Christian Copetti, Nick Jones, Ho Tat Lam, Sal Pace, Shinsei Ryu, Sean Sanford, Shu-Heng Shao, Apoorv Tiwari, Weicheng Ye, Zhi-Hao Zhang for discussions. We thank Lea Bottini and Nick Jones for coordinating submission of their paper \cite{Bottini:2026zga} with us. SSN thanks Lakshya Bhardwaj, Lea Bottini, Nick Jones, and Apoorv Tiwari for discussions at a preliminary stage of these developments.
Some of the computations in the paper were obtained with the help of ChatGPT-5.4 Thinking.
This work is  supported by the UKRI Frontier Research Grant, underwriting the ERC Advanced Grant ``Generalized Symmetries in Quantum Field Theory and Quantum Gravity”. 
\end{acknowledgments}

\appendix

\section{More on the Semi-direct Product Categories $\cc\rtimes \Z_2^T$}
\label{app_semi_direct}

Here we will establish the Morita equivalence between an $\R$-real fusion category $\cD_\R$ and the corresponding semi-direct product Galois-real fusion category $\cc\rtimes \Z_2^T$, where $\cc:=\cD_\R\boxtimes_\R\Vec_\C$ is the complexification of $\cD$. The physical meaning of this Morita equivalence is discussed in~\Cref{sec_semi_product}: if $\cc$ is a (complex) fusion category of topological defects in a 2D theory, and the 2D theory is invariant under a $\Z_2^T$-action, then the $\Z_2^T$-invariant topological defects form a real form $\cD_\R$ of $\cc$, and spontaneously breaking $\Z_2^T$ results in a theory with topological defects forming $\cc\rtimes \Z_2^T$. We formulate this as the following theorem. 

\begin{theorem}\label{thm_semi_direct}
    Let $\cD_\R$ be an $\R$-real fusion category, and $\cc:=\cD_\R\boxtimes_\R\Vec_\C$ its complexification. The canonical $\Z_2^T$-action on $\Vec_\C$ gives $\cc$ a natural monoidal $\Z_2^T$-action. Form the semi-direct product $\cc\rtimes \Z_2^T$, then we have Morita equivalence of real fusion categories:
    \be
    \cD_\R\simeq_{\text{Morita}} \cc\rtimes \Z_2^T\,.
    \ee
    \proof Define a left $\cc\rtimes \Z_2^T$-module category structure on $\cc$ as follows, for simple objects $(x,s)\in \cc\rtimes \Z_2^T,~y\in \cc$, define
    \be
     (x,s)\rhd y:= x\otimes s(y)\,.
    \ee
    Let $F\in \End_{\cc\rtimes \Z_2^T}(\cc)$. Since $\cc\subset \cc\rtimes \Z_2^T$ is a fusion sub-category, the functor $F$ takes the form $F(-)=-\otimes z$ for some $z\in \cc$. The module structure on $F$ provides natural isomorphism 
    \be
     F((x,s)\rhd y)=(x\otimes s(y))\otimes z \simeq (x,s)\rhd F(y)=x\otimes s(y\otimes z)\,.
    \ee
    By naturality it suffices to set $x=y=\lid$, then $F$ gives isomorphism $z\simeq s(z)$. Therefore the module functor $F$ is the same as an object $z\in \cc$ together with an equivariant structure with respect to the $\Z_2^T$-action on $\cc$. Hence $\End_{\cc\rtimes \Z_2^T}(\cc)\simeq \cc^{h\Z_2^T}=\cD_\R$.\qed 
\end{theorem}

\section{Proof of Gauging Finite Subgroup}\label{app_gauge_subgroup}

Let $N\lhd G_0<G^T$ be a central subgroup and $K^T:=G^T/N$. Let $\C[N]\in \Vec_{G^T}$ be the  algebra supported on $N$ with the obvious (untwisted) algebra structure. We claim 
\be
\Bimod_{\Vec_{G^T}}(\C[N])\simeq \Vec_{\widehat{N}\rtimes K^T}^\omega
\ee
is the desired Morita dual category.
 To start, we notice that since $N$ is abelian any indecomposable bimodule over $\C[N]$ must have underlying object of the form $\C[kN]$ for some $k\in G^T/N=:K^T$. Fix a section $\sigma: K^T\to G^T$, and define
\[
e_2(k_1,k_2):=\sigma(k_1)\sigma(k_2)\sigma(k_1k_2)^{-1}
\]
to be the extension class of the extension $N\to G^T\to K^T$. Choose a basis $\{[k,n], n\in N\}$ for $\C[k,N]$. Similarly choose a basis $\{[n],n\in N\}$ for $\C[N]$. The left module structure is then defined on the basis by
\begin{align}
    [m]\rhd [k,n]=l(m,n)[k,mn],\quad l(m,n)\in \C^\times.
\end{align}
The left module structure requires that the scalars $l(m,n)$ satisfy
\begin{align}
    l(m_1,m_2n)l(m_2,n)=l(m_1m_2,n).
\end{align}
Setting $n=1$, we obtain
\begin{align}
    l(m_1,m_2)=\frac{l(m_1m_2,1)}{l(m_2,1)}.
\end{align}
Denote the corresponding left module by $\C[kN]_l$. There is an equivalence of left modules
\[
\C[kN]_l\simeq \C[kN]_{l'}
\]
if and only if there is a morphism $\eta: \C[kN]\to \C[kN]$, $\eta([k,n])=\eta(n)[k,n]$, such that
\begin{align}
    l'(m,n)\eta(n)=\eta(mn)l(m,n).
\end{align}
Setting $\eta(n)=\frac{1}{l(n,1)}$ and $l'=1$, we see that $l$ is equivalent to $l'=1$. Hence there is only one left module structure on $\C[kN]$ up to equivalence. We can then fix $l=1$. A right module structure on $\C[k,N]$ is similarly defined by scalars $r(n,m)\in \C^\times$. Again $r$ is completely determined by its values on $r(1,m)$:
\begin{align}\label{eq_r}
    r(m_1,m_2)=\frac{r(1,m_1m_2)}{r(1,m_2)}.
\end{align}
For $\C[kN]_r$ to be a bimodule we require
\begin{align}
    ([m_1]\rhd [s,n])\lhd [m_2]= [m_1]\rhd ([s,n]\lhd [m_2]),
\end{align}
which translates to the equation
\begin{align}
    r(m_1n,m_2)=r(n,m_2).
\end{align}
Setting $n=1$, we obtain $r(m_1,m_2)=r(1,m_2)$. Combining with~\Cref{eq_r}, we obtain
\begin{align}
    r(1,m_1m_2)=r(1,m_1)r(1,m_2).
\end{align}
Hence $r(1,-)$ is a character of $N$. Conversely, it can be checked that for every character $\gamma\in \wh{N}$, $r(m_1,m_2)=\gamma(m_2)$ defines a bimodule structure. Denote the corresponding bimodule by $\C[kN]^\gamma$. Two bimodules $\C[kN]^\gamma$ and $\C[kN]^{\gamma'}$ are equivalent if and only if the characters $\gamma,\gamma'$ are the same. Now we have found a complete list of all simple objects in $\Bimod_{\Vec_{G^T}}(\C[N])$: as a set they form $K^T\times\wh{N}$. We next compute the monoidal structure.

Since the left module structure on each $\C[kN]^\gamma$ is trivial, we have
\begin{align}
    \C[k_1N]^{\gamma_1}\ot_{\C[N]}\C[k_2N]^{\gamma_2}\simeq \C[k_1k_2N]
\end{align}
at the level of objects. There is a universal balanced morphism
\begin{align}
   &\beta: \C[k_1N]^{\gamma_1}\ot \C[k_2N]^{\gamma_2}\to\nonumber \\
   &\C[k_1N]^{\gamma_1}\ot_{\C[N]}\C[k_2N]^{\gamma_2}\simeq \C[k_1k_2N]
\end{align}
which on the basis takes the form
\begin{align}
& [k_1,n_1]\diamond [k_2,n_2]:=\beta([k_1,n_1]\ot[k_2,n_2])\nonumber\\
 &=\beta(n_1,n_2)[k_1k_2,n_1n_2e_2(k_1,k_2)],\quad \beta(n_1,n_2)\in \C^\times.
\end{align}
The fact that $\beta$ is balanced means that we have the identity
\begin{align}
   ([k_1,n_1]\lhd [m])\diamond [k_2,n_2]=[k_1,n_1]\diamond ([m]\rhd [k_2,n_2]),
\end{align}
which translates to the equation
\begin{align}
    \gamma_1(m)\beta(n_1m,n_2)=\beta(n_1,mn_2).
\end{align}
Setting $n_1=1$, we obtain
\begin{align}\label{eq_gamma_beta}
    \gamma_1(m)\beta(m,n_2)=\beta(1,mn_2).
\end{align}
We now compute the left and right module structures on
\[
\C[k_1N]^{\gamma_1}\ot_{\C[N]}\C[k_2N]^{\gamma_2}\simeq \C[k_1k_2N].
\]
Assume
\[
\C[k_1N]^{\gamma_1}\ot_{\C[N]}\C[k_2N]^{\gamma_2}\simeq \C[k_1k_2N]^{\gamma_3}
\]
as bimodules for some $\gamma_3\in \wh{N}$. Recall that the left module structure is defined by the diagram
\begin{widetext}
\begin{equation}
    \begin{tikzcd}
        \C[N]\ot \C[k_1N]^{\gamma_1}\ot \C[k_2N]^{\gamma_2}\arrow[r,"\rhd"]\arrow[d,"\beta"]& \C[k_1N]^{\gamma_1}\ot \C[k_2N]^{\gamma_2}\arrow[d,"\beta"]\\
          \C[N]\ot \C[k_1N]^{\gamma_1}\ot_{\C[N]} \C[k_2N]^{\gamma_2}\arrow[r,dashed,"\rhd"]& \C[k_1N]^{\gamma_1}\ot_{\C[N]} \C[k_2N]^{\gamma_2}
    \end{tikzcd}\label{eq_left_module}
\end{equation}
\end{widetext}
Pick
\[
[n]\ot [k_1,m_1]\ot[k_2,m_2]\in \C[N]\ot \C[k_1N]^{\gamma_1}\ot \C[k_2N]^{\gamma_2}.
\]
The diagram expresses
\begin{align}
    [n]\rhd([k_1,m_1]\diamond[k_2,m_2])=([n]\rhd [k_1,m_1])\diamond [k_2,m_2],
\end{align}
which leads to
\begin{align}
    &\beta(m_1,m_2)[n]\rhd [k_1k_2,m_1m_2e_2(k_1,k_2)]\nonumber\\&
    =\beta(nm_1,m_2)[k_1k_2,nm_1m_2e_2(k_1,k_2)].
\end{align}
Since the left module structure on $\C[k_1k_2N]^{\gamma_3}$ is trivial, this gives
\begin{align}
    \beta(m_1,m_2)=\beta(nm_1,m_2).
\end{align}
Setting $m_1=1$, we obtain
\begin{align}
    \beta(1,m_2)=\beta(n,m_2).
\end{align}
Combining with~\Cref{eq_gamma_beta}, we get
\begin{align}
    \gamma_1(m)\beta(1,1)=\beta(n,m).\label{eq_gamma_beta_2}
\end{align}
Now the right module is defined similarly to~\Cref{eq_left_module}. Start with
\[
[k_1,m_1]\ot[k_2,m_2]\ot [n]\in \C[k_1N]^{\gamma_1}\ot \C[k_2N]^{\gamma_2}\ot \C[N].
\]
The right module structure is defined by the equation
\begin{align}
    ([k_1,m_1]\diamond [k_2,m_2])\lhd [n]=[k_1,m_1]\diamond([k_2,m_2]\lhd [n]),
\end{align}
which translates to
\begin{align}
\beta(m_1,m_2)\gamma_3(n)=\beta(m_1,m_2n)\gamma_2(n)^{\textcolor{red}{s(k_1)}}.
\end{align}
Here $s(k_1)=1/T$. Complex conjugation appears if $s(k_1)=T$, since  $[k_1,m_1]$ is really $\id_{\delta_{k_1,m_1}}$ an endomorphism of the object $\delta_{k_1,m_1}\in \Vec_{G^T}$. Now using~\Cref{eq_gamma_beta_2}, we obtain
\begin{align}
  \gamma_1(m_2)\gamma_3(n)=\gamma_1(m_2n)\gamma_2(n)^{s(k_1)} \Rightarrow \gamma_3(n)=\gamma_1(n)\gamma_2(n)^{s(k_1)}.
\end{align}
We conclude that the fusion rule in $\Bimod_{\Vec_{G^T}}(\C[N])$ is
\begin{align}
    \C[k_1N]^{\gamma_1}\ot_{\C[N]} \C[k_2N]^{\gamma_2}= \C[k_1k_2N]^{\gamma_1\gamma_2^{s(k_1)}}.
\end{align}

Finally, we compute the associator. Recall that the associator is defined by the following diagram:
\begin{widetext}

\begin{equation}
    \begin{tikzcd}
       (\C[k_1N]^{\gamma_1}\ot \C[k_2N]^{\gamma_2})\ot \C[k_3N]^{\gamma_3}\arrow[r,"="]\arrow[d,"\beta"]&  \C[k_1N]^{\gamma_1}\ot (\C[k_2N]^{\gamma_2}\ot \C[k_3N]^{\gamma_3})\arrow[d,"\beta"]\\
        (\C[k_1N]^{\gamma_1}\ot_{\C[N]} \C[k_2N]^{\gamma_2})\ot \C[k_3N]^{\gamma_3}\arrow[d,"\beta"]& \C[k_1N]^{\gamma_1}\ot (\C[k_2N]^{\gamma_2}\ot_{\C[N]} \C[k_3N]^{\gamma_3})\arrow[d,"\beta"]\\
          (\C[k_1N]^{\gamma_1}\ot_{\C[N]} \C[k_2N]^{\gamma_2})\ot_{\C[N]} \C[k_3N]^{\gamma_3}\arrow[r,dashed,"\alpha"]& \C[k_1N]^{\gamma_1}\ot_{\C[N]} (\C[k_2N]^{\gamma_2}\ot_{\C[N]} \C[k_3N]^{\gamma_3})
    \end{tikzcd}
\end{equation}

Pick
\[
[k_1,n_1]\ot[k_2,n_2]\ot[k_3,n_3]\in (\C[k_1N]^{\gamma_1}\ot \C[k_2N]^{\gamma_2})\ot \C[k_3N]^{\gamma_3}.
\]
The diagram gives
\begin{align}
    \alpha((k_1,\gamma_1),(k_2,\gamma_2),(k_3,\gamma_3))([k_1,n_1]\diamond[k_2,n_2])\diamond [k_3,n_3]=[k_1,n_1]\diamond([k_2,n_2]\diamond[k_3,n_3]),
\end{align}
which leads to
\begin{align}
   & \alpha((k_1,\gamma_1),(k_2,\gamma_2),(k_3,\gamma_3))\gamma_1(n_2)\gamma_1(n_3)\gamma_2(n_3)^{s(k_1)} [k_1k_2k_3,n_1n_2n_3e_2(k_1,k_2)e_2(k_1k_2,k_3)]\nn\\
   &=\gamma_1(n_2n_3e_2(k_2,k_3))\gamma_2(n_3)^{s(k_1)}[k_1k_2k_3,n_1n_2n_3e_2(k_1,k_2k_3)e_2(k_2,k_3)].
\end{align}
Comparing coefficients, we obtain
\begin{align}
     \alpha((k_1,\gamma_1),(k_2,\gamma_2),(k_3,\gamma_3))=\gamma_1(e_2(k_2,k_3)).\label{eq_alpha}
\end{align}
\end{widetext}
It is not difficult to check that all the simples $\C[kN]^{\gamma}\in \Bimod_{\Vec_{G^T}}(\C[N])$ have $\End(-)\simeq \C$, hence are complex objects. Furthermore, $\C[kN]^{\gamma}$ is Galois trivial (resp.\ nontrivial) iff $s(k)=1$ (resp.\ $T$). \qed
\twocolumngrid
\bibliographystyle{ytphys}
\small 
\baselineskip=.7\baselineskip
\let\bbb\bibitem\def\bibitem{\itemsep3.3pt\bbb}
\bibliography{ref}

\end{document}